\newtheorem{lemma}{Lemma}
\newtheorem{theorem}{Theorem}
\title{Ontology-Mediated Querying on Databases of Bounded Cliquewidth}
\author{%
$\text{Carsten Lutz}^1$\and
$\text{Leif Sabellek}^2$\and
$\text{Lukas Schulze}^1$ \\
\affiliations
$^1$Department of Computer Science, Leipzig University, Germany \\
$^2$Department of Computer Science, University of Bremen, Germany \\
\emails
\{clu, lschulze\}@informatik.uni-leipzig.de, sabellek@uni-bremen.de
}
\newtheorem{example}{Example}
\newcommand{\displayskips}{%
  \setlength{\abovedisplayskip}{4pt}%
  \setlength{\belowdisplayskip}{4pt}%
  \setlength{\abovedisplayshortskip}{4pt}%
  \setlength{\belowdisplayshortskip}{4pt}}
\appto{\normalsize}{\displayskips}
\appto{\small}{\displayskips}
\appto{\footnotesize}{\displayskips}
\newcommand{\GFtwo}{\texorpdfstring{$\text{GF}_2$\xspace}{GF2\xspace}}
\newcommand{\blank}{{b\mkern-8.5mu/}}
\begin{document}

\maketitle

\begin{abstract}
  We study the evaluation of ontology-mediated queries (OMQs) on
  databases of bounded cliquewidth from the viewpoint of parameterized
  complexity theory.
  As the ontology language, we consider the description logics \ALC
  and \ALCI as well as the guarded two-variable fragment GF$_2$ of
  first-order logic. Queries are atomic queries (AQs), conjunctive
  queries (CQs), and unions of CQs. All studied OMQ problems are
  fixed-parameter linear (FPL) when the parameter is the size of the
  OMQ plus the cliquewidth. Our main contribution is a detailed analysis of
  the dependence of the running time on the parameter, exhibiting
  several interesting
  effects. 
\end{abstract}

\section{Introduction}

Ontology-mediated querying is an established subfield of knowledge
representation. The general aim is to enrich a database with an
ontology to add domain knowledge and to extend the vocabulary
available for query formulation
\cite{DBLP:journals/tods/BienvenuCLW14,BiOr15,DBLP:conf/rweb/CalvaneseGLLPRR09}. Expressive
description logics (DLs) such as \ALC and \ALCI are a popular choice
for the ontology language as they underpin the widely used OWL DL
profile of the OWL 2 ontology language.  However, ontology-mediated
querying with these languages is {\sc coNP}-complete in data
complexity even for simple atomic queries (AQs) of the form $A(x)$
and thus scales poorly to larger amounts of data. One reaction to this
problem is to resort to approximate query answers, as done for example
in \cite{DBLP:journals/jair/ZhouGNKH15,DBLP:conf/kr/HagaLSW21}. A
potential alternative is to only admit databases from a class that is
sufficiently restricted so that non-approximate answers can be
computed in \PTime, or to decompose the input database into component
databases from such a class.

The aim of this paper is to study ontology-mediated querying on
classes of databases of bounded cliquewidth.  This is relevant because
such classes are maximal known ones on which ontology-mediated
querying with DLs such as \ALC and \ALCI is in \PTime in data
complexity, even when unions of conjunctive queries (UCQs) are
admitted as queries.  This follows from a folklore translation of
ontology-mediated querying into MSO$_1$ and the known result that
MSO$_1$ has \PTime data complexity on structures of bounded
cliquewidth \cite{DBLP:journals/mst/CourcelleMR00}. In fact, it even
has linear time data complexity (when a $k$-expression for
constructing the database is provided). We recall that bounded
treewidth implies bounded cliquewidth since every database of
treewidth~$k$ has cliquewidth at most $3 \cdot 2^{k-1}$
\cite{corneil2005relationship}. The converse is false because
databases of bounded treewidth are sparse in the sense of
graph theory, whereas databases of bounded cliquewidth may be dense.

We do not claim that databases encountered in practical applications
typically have small cliquewidth. In fact, an empirical study of
real-world databases has found that their treewidth tends to be high
\cite{DBLP:conf/icdt/ManiuSJ19}, and one may expect similar results
for cliquewidth. The same study, however, also points out the
opportunities that lie in the decomposition of a database into a part
of high treewidth and parts of low treewidth, and this has in fact
been used successfully to implement efficient querying
\cite{DBLP:conf/sigmod/Wei10,DBLP:conf/edbt/AkibaSK12,DBLP:journals/tods/ManiuCS17}. While
we are not aware that similar approaches based on cliquewidth have yet
been studied, it seems entirely reasonable to pursue them. The work
presented here may provide a foundation for such an
endeavour. 

In this paper, we focus on the framework of parameterized complexity
theory. Linear time data complexity with a uniform algorithm trivially
implies \emph{fixed-parameter linearity (FPL)} when the
parameter is the size of the ontology-mediated query (OMQ). Recall
that FPL is defined like the more familiar \emph{fixed-parameter
  tractability (FPT)} except that the running time may depend only
linearly on the size of the database, that is, it must be
$f(|Q|) \cdot |\Dmc|$ where $f$ is a computable function, $Q$ the OMQ,
and \Dmc the database.  Our main goal is to determine the optimal
running time of FPL algorithms for ontology-mediated querying on
databases of bounded
cliquewidth 
and to provide algorithms that achieve this running time.

As the ontology language, we consider the DLs \ALC and \ALCI and the
more expressive guarded two-variable fragment GF$_2$ of first-order
logic. As queries, we admit AQs, conjunctive queries
(CQs), and UCQs. An \emph{ontology-mediated query
  (OMQ)} combines an ontology and a query, and an \emph{OMQ language}
is determined by the choice of an ontology language and a query
language. For instance, $(\ALC,\text{AQ})$ is the OMQ language that
combines \ALC ontologies with AQs. We find that the running time
may be single or double exponential in $|Q|$, depending on the choice
of the OMQ language. There are several surprising effects, which we
highlight in the following. An overview is provided~in~Figure~\ref{overview}.

We show that in $(\ALCI,\text{AQ})$, evaluating an OMQ $Q$ on a
database \Dmc is possible in time $2^{O(|Q| \cdot k)} \cdot |\Dmc|$
where $k$ is the cliquewidth of \Dmc.\footnote{We explain our use of
   the $O$-notation in the appendix.}
We assume here that a
$k$-expression for constructing \Dmc is given (otherwise, the running
time is cubic in~$|\Dmc|$ and thus FPT, but not FPL).  Note that this
implies FPL even when the parameter is the size of the OMQ plus the
cliquewidth of the database (and so the cliquewidth of the database
needs not be bounded by a constant).  In $(\text{GF}_2,\text{AQ})$, in
contrast, we achieve a running time of
$2^{2^{{O(|Q|)}}\cdot k^2} \cdot |\Dmc|$ and show that attaining
$2^{2^{o(|Q|)}} \cdot \mn{poly}(|\Dmc|)$ is impossible even on
databases of cliquewidth~2 unless the exponential time hypothesis
(ETH) fails. This is interesting for several reasons. First, it is
folklore that OMQ evaluation in $(\text{GF}_2,\text{AQ})$ is
\ExpTime-complete in combined complexity on unrestricted databases,
and thus one has to 'pay' for the running time to be polynomial in
$|\Dmc|$ on databases of bounded cliquewidth by an exponential
increase in the running time in $|Q|$. Second, the higher running time
for $(\text{GF}_2,\text{AQ})$ compared to $(\ALCI,\text{AQ})$ is
\emph{not} due to the fact that GF$_2$ is more expressive than
\ALCI. In fact, the GF$_2$ ontology used in the lower bound proof can
be expressed in \ALC and the increase in complexity is due to the
higher succinctness of~GF$_2$.  And third, we also show that in
$(\text{GF}_2,\text{AQ})$, OMQs can be evaluated in time
$2^{O(|Q| \cdot k^2)} \cdot |\Dmc|$ where $k$ is the
\emph{treewidth} of~\Dmc. Thus, the exponential difference in
complexity between $(\ALCI,\text{AQ})$ and $(\text{GF}_2,\text{AQ})$
that we observe for cliquewidth does not exist for treewidth.

The above results concern AQs, and transitioning to (U)CQs brings
about some interesting differences. We 
show that in $(\ALCI,\text{UCQ})$, evaluation of an OMQ $Q$ is
possible in time
$2^{|\Omc| \cdot 
  k^{O(|q| \log(|q|))}} \cdot |\Dmc|$
where \Omc is the ontology in $Q$ and $q$ is the query in $Q$.  This
is complemented by the result that already in $(\ALC,\text{CQ})$, a
running time of $2^{2^{o(|Q|)}} \cdot \mn{poly}(|\Dmc|)$
cannot be achieved even on databases of
cliquewidth~3 unless ETH fails. This should be contrasted with the
fact that OMQ evaluation in $(\ALC,\text{CQ})$ and $(\ALC,\text{UCQ})$
is \ExpTime-complete in combined complexity on unrestricted databases
\cite{DBLP:conf/dlog/Lutz08} and thus also in these OMQ languages one
has to pay for polynomial running time in $|\Dmc|$ by an exponential
increase in overall running time. Note however, that the increase is
only in $|q|$, but not in $|\Omc|$. This is in contrast to the case of
AQs where the size of queries is actually constant, and it is good
news as queries tend to be much smaller than ontologies.

Finally we observe that, in $(\ALCF,\text{AQ})$, where \ALCF is the extension
of \ALC with functional roles, OMQ evaluation on databases of
treewidth~2 is \coNP-hard in data complexity when the unique
name assumption is not made.


\smallskip {\bf Related Work.} Monadic second-order logic (MSO) comes
in two versions: MSO$_1$ admits quantification only over sets of nodes
while MSO$_2$ can additionally quantify over sets of
edges. Courcelle's theorem states that MSO$_2$ model checking is FPL
on graphs of bounded treewidth and
\cite{DBLP:journals/mst/CourcelleMR00} shows the same for MSO$_1$ and
graphs of bounded cliquewidth (when a $k$-expression is provided). The
problem studied in the current paper can be translated into MSO$_1$,
but this does not yield the tight complexities presented here. It was
shown in \cite{DBLP:conf/lics/KreutzerT10} that MSO$_2$ is not FPT for
graph classes that have unbounded treewidth and are closed under
substructures (under certain assumptions), see also
\cite{DBLP:journals/jcss/GanianHLORS14}; results in this style do not
appear to be known for MSO$_1$ and cliquewidth. 

\begin{figure}[t]
  \centering
  \begin{tabular}{|l||c|c|c|}
    \hline
    &Bnd. CW & Bnd. TW & Unrestr. \\\hline \hline
    $(\ALC(\Imc),\text{AQ})$ & sngl exp & sngl exp & \ExpTime \\\hline
    $(\text{GF}_2,\text{AQ})$ & dbl exp & sngl exp & \ExpTime \\\hline
    $(\ALC,\text{(U)CQ})$ & dbl exp & open\footnote{} & \ExpTime \\\hline
    $(\ALCI,\text{(U)CQ})$ & dbl exp & dbl exp & \TwoExpTime \\\hline
    $(\text{GF}_2,\text{(U)CQ})$ & open & dbl exp & \TwoExpTime \\\hline
  \end{tabular}
  \caption{Overview of results}
  \label{overview}
\vspace*{-4mm}
\end{figure}
\section{Preliminaries}
\label{sect:prelim}

\footnotetext{This is mistakenly marked as sngl exp in the conference version of this paper.}

{\bf Description Logics.}  Let \NC, \NR, and \Csf be countably
infinite sets of \emph{concept names}, \emph{role names}, and
\emph{constants}. A \emph{role} is a role name $r$ or an \emph{inverse
  role} $r^-$, where $r$ is a role name. Set $(r^-)^- = r$.  
\emph{\ALCI-concepts} are generated by the rule
$$C, D ::= A \mid \neg C \mid C \sqcap D \mid C \sqcup D \mid \exists r.C \mid \forall r.C $$
where $A$ ranges over concept names and $r$ over roles. An
\emph{\ALC-concept} is an \ALCI-concept that does not use
inverse roles.
We use $\top$ to denote $A \sqcup \neg A$ for some fixed concept
name $A$ and $\bot$ for~$\neg \top$.
For $\Lmc \in \{ \ALC, \ALCI \}$, an \emph{\Lmc-ontology} is
a finite set of \emph{concept inclusions (CIs)} $C \sqsubseteq D$ with
$C$ and~$D$ \Lmc-concepts.
A \emph{database} is a finite set of \emph{facts} of the form $A(c)$
or $r(c,c')$ where $A \in \NC \cup \{ \top \}$, $r \in \NR$, and $c,c'
\in \Csf$. We use $\mn{adom}(\Dmc)$ to denote the set of constants
used in a database~\Dmc, also called its \emph{active domain}.
%
The \emph{size} of a syntactic
object $X$, denoted $|X|$, is the number of symbols needed to write
$X$ as a word over a finite alphabet using a suitable encoding. 

The semantics is given in terms of \emph{interpretations}
$\Imc=(\Delta^\Imc,\cdot^\Imc)$, we refer to
\cite{DBLP:books/daglib/0041477} for details. 
%
An interpretation \Imc \emph{satisfies} a CI $C \sqsubseteq D$ if
$C^\Imc \subseteq D^\Imc$, a fact $A(c)$ if $c \in A^\Imc$, and a fact
$r(c,c')$ if $(c,c') \in r^\Imc$. We thus make the standard names
assumption, that is, we interpret constants as themselves. This
implies the unique name assumption. For $S \subseteq \Delta^\Imc$, we
use $\Imc|_S$ to denote the restriction of \Imc to domain~$S$.
An interpretation \Imc is a \emph{model} of an ontology or database if it
satisfies all inclusions or facts in it. A database~$\Dmc$ is
\emph{satisfiable} w.r.t.\ an ontology \Omc if there is a model \Imc
of~\Omc and~\Dmc.

We also consider the \emph{guarded two-variable fragment (GF$_2$)} of
first-order logic. In GF$_2$, only two fixed variables $x$ and $y$ are
available and quantification is restricted to the pattern
$$
\forall \bar{y}(\alpha(\bar{x},\bar{y})\rightarrow \varphi(\bar{x},\bar{y}))
\quad
\exists \bar{y}(\alpha(\bar{x},\bar{y})\wedge \varphi(\bar{x},\bar{y}))
$$
where $\varphi(\bar{x},\bar{y})$ is a GF$_2$ formula with free
variables among $\bar{x} \cup \bar{y}$ and $\alpha(\bar{x},\bar{y})$ is an
atomic formula (possibly an equality) called the \emph{guard} that
uses all variables in $\bar x \cup \bar y$. 
Function symbols and constants are not admitted, and neither is
equality in non-guard positions. We only admit relation symbols of
arity one and two, identifying the former with concept names and the
latter with role names. We may thus interpret GF$_2$ formulas in DL
interpretations. A \emph{GF$_2$-ontology} is a finite set of
GF$_2$-sentences.

\smallskip

{\bf Queries.}
A \emph{conjunctive query (CQ)} is of the form
$q(\bar x) = \exists \bar y\,\varphi(\bar x,\bar y)$, where $\bar x$
and $\bar y$ are tuples of variables and $\varphi(\bar x,\bar y)$ is a
conjunction of \emph{concept atoms} $A(x)$ and \emph{role atoms}
$r(x,y)$, \mbox{$A \in \NC$}, $r \in \NR$, and $x,y$ variables from
$\bar x \cup \bar y$.  We 
call the variables in $\bar x$ the \emph{answer variables} of~$q$, and
use $\mn{var}(q)$ to denote $\bar x \cup \bar y$.  We may write
$\alpha \in q$ to indicate that $\alpha$ is an atom in $q$.
For
$V \subseteq \mn{var}(q)$, we use $q|_V$ to denote the restriction of
$q$ to the atoms that use only variables in~$V$.
A tuple $\bar d \in (\Delta^\Imc)^{|\bar x|}$
is an \emph{answer} to $q$ on an interpretation \Imc 
if there is a homomorphism $h$ from $q$ to
\Imc with $h(\bar x)=\bar d$. More details are in the appendix.

%
 

A {\em union of conjunctive queries (UCQ)} $q(\bar x)$ is a disjunction
of CQs with the same answer variables $\bar x$. A tuple
$\bar d \in (\Delta^\Imc)^{|\bar x|}$ is an \emph{answer} to $q$ on
interpretation \Imc, written $\Imc \models q(\bar d)$,
if $\bar d$ is an answer to
some CQ in $q$ on \Imc. 
We use $q(\Imc)$ to denote set of all
answers to $q$ on \Imc.  The \emph{arity} of $q$ is the length of
$\bar x$ and $q$ is \emph{Boolean} if it is of arity zero.
An \emph{atomic query (AQ)} is a CQ of the form $A(x)$ with $A$ a concept
name.

\smallskip
{\bf Ontology-Mediated Querying.}
An \emph{ontology-mediated query (OMQ)} is a pair
$Q=(\Omc,q)$ with \Omc an ontology 
and $q$ a query such as a UCQ. While OMQs are often defined to
include
an additional third component, the data signature, the problem of
query evaluation studied in this paper is insensitive to that
component,
and so we omit it. We write $Q(\bar
x)$ to indicate that the answer variables of $q$ are $\bar x$.
%
A tuple $\bar a \in \mn{adom}(\Dmc)^{|\bar x|}$ is an \emph{answer} to
$Q(\bar x)$ on a database \Dmc, written $\Dmc \models Q(\bar a)$, if
$\Imc \models q(\bar a)$ for all models \Imc of \Omc and~\Dmc.  When
more convenient, we might alternatively write
$\Dmc,\Omc \models q(\bar a)$. We write $Q(\Dmc)$ to denote
the set of all answers to $Q$ on \Dmc. 
We use $(\Lmc,\Qmc)$ to denote the \emph{OMQ language} that
contains all OMQs~$Q$ in which $\Omc$ is formulated in the DL \Lmc and $q$
in the query language \Qmc, such as in $(\ALCI,\text{UCQ})$ and $(\text{GF}_2,\text{AQ})$.

Let $(\Lmc,\Qmc)$ be an OMQ language.  We write \emph{OMQ evaluation
  in} $(\Lmc,\Qmc)$ to denote the problem to decide, given an OMQ
$Q(\bar x) \in (\Lmc,\Qmc)$, a database \Dmc, and a tuple
$\bar c \in \mn{adom}(\Dmc)^{|\bar x|}$, whether $\bar c \in
Q(\Dmc)$. A \emph{parameterization} is a function $\kappa$ that
assigns to every input $Q,\Dmc,\bar c$ a parameter
$\kappa(Q,\Dmc,\bar c) \in \mathbb{N}$. We will use $\kappa=|Q|$ in
lower bounds and $\kappa=|Q|+k$ in upper bounds where $k$ is the
cliquewidth of \Dmc as defined below.  OMQ evaluation is
\emph{fixed-parameter tractable (FPT) for} $\kappa$ if it can be
decided by an algorithm that 
runs in time $f(\kappa(Q,\Dmc,\bar c)) \cdot \mn{poly}(n)$
where $f$ is a computable function, here and throughout the paper, 
$\mn{poly}$ denotes an unspecified 
polynomial, and $n$ is the size of the input. \emph{Fixed-parameter linearity (FPL) for} $\kappa$ is 
defined analogously, with running time 
$f(\kappa(Q,\Dmc,\bar c)) \cdot O(n)$. 

\smallskip {\bf Treewidth and Cliquewidth.} Treewidth describes how
similar a graph (in our case: a database) is to a tree, and
cliquewidth does the same for cliques
\cite{DBLP:books/daglib/0030804}. 
%
Due to space restrictions and since this article focuses on
cliquewidth, we provide the definition of treewidth only in the
appendix. 
Databases of cliquewidth $k$ can be constructed by a $k$-expression as
defined below, where $k$ refers to the number of \emph{labels} in the
expression. For convenience, we use a reservoir of fresh concept names
$L_1,L_2,\dots$ as labels that may occur in databases, but not 
in ontologies and queries.
%
We say that
a constant $c \in \mn{adom}(\Dmc)$ is \emph{labeled} $i$ if $L_i(c) \in \Dmc$.
%
An \emph{expression} $s$ defines a 
database $\Dmc_s$ built from four operations, as follows:

\smallskip
\noindent {$\boldsymbol{s = \iota(\Dmc)}$} is the nullary operator for
\emph{constant introduction} where $i \geq 1$ and \Dmc is a 
database with $|\mn{adom}(\Dmc)|=1$ that  
contains
exactly one fact of the form $L_i(c)$, and $\Dmc_s=\Dmc$.

\smallskip
\noindent
{$\boldsymbol{s = s_1 \oplus s_2}$} is the binary operator for
\emph{disjoint union} where $s_1$ and $s_2$ are expressions such that
$\mn{adom}(\Dmc_{s_1}) \cap \mn{adom}(\Dmc_{s_2}) = \emptyset$
and $\Dmc_s = \Dmc_{s_1} \cup \Dmc_{s_2}$.

\smallskip
\noindent {$\boldsymbol{s = \alpha_{i, j}^r(s')}$} is the unary
operator for \emph{role insertion} where $r$ is a role name,
$i,j \geq 1$ are distinct, and $s'$ is an expression.  It links all constants
labeled $i$ to all constants labeled $j$ using role~$r$, that
is, 
$$\Dmc_s = \Dmc_{s'} \cup \{r(a,b) \mid L_i(a) \in \Dmc_{s'} \text{ and } L_j(b) \in \Dmc_{s'}\}.$$

\smallskip
\noindent {$\boldsymbol{s = \rho_{i \rightarrow j}(s')}$} is the unary
operator for \emph{relabeling} where $i,j \geq 1$ are distinct and
$s'$ is an expression. It changes all $i$-labels to $j$, that is,
$$
\Dmc_s = \Dmc_{s'} \setminus \{L_i(c) \mid c \in \mn{adom}(\Dmc_{s'})\}
\cup \{L_j(c) \mid L_i(c) \in \Dmc_{s'}\}.
$$

\smallskip 
An expression $s$ is a \emph{$k$-expression} if all labels in $s$ are from
$\{ L_1,\dots,L_k\}$. A database \Dmc has \emph{cliquewidth
  $k>0$} if there is a $k$-expression $s$ with $\Dmc_s=\Dmc$, but no
$k'$-expression $s'$ such that $k' < k$ and $\Dmc_{s'}=\Dmc$, up to
facts of the form $L_i(c)$.

\begin{example}
  \label{ex:teacher_pupils_schools}
  Consider the class of databases about schools that use the concept names
  $\mn{Pupil}$, $\mn{Teacher}$ and $\mn{School}$ as well as role names
  $\mn{teaches}$, $\mn{worksAt}$ and $\mn{isClassmateOf}$, and where
  each teacher works at exactly one school and can teach any number of
  pupils.
  The set of all pupils is partitioned into groups of pupils that
  are classmates of one another, and each teacher teaches exactly one
  such group.  A basic database from that class is
  \begin{align*}
    \Dmc =\ &\{\mn{Pupil}(a_1), \mn{Pupil}(a_2), \mn{Teacher}(b), \mn{School}(c),\\[0.5mm]
    &\ \ \mn{worksAt}(b, c), \mn{teaches}(b, a_1), \mn{teaches}(b, a_2),\\[0.5mm]
    &\ \ \mn{isClassmateOf}(a_1, a_2), \mn{isClassmateOf}(a_2, a_1)\}.
  \end{align*}
  This database has cliquewidth $3$, and in fact so does any database
  from the described class, independently of the number of pupils,
  teachers, and schools.  When constructing $k$-expressions, it
  suffices to use up to two labels for pupils, one for teachers, and
  one for schools. With proper relabeling, no more than three labels
  are needed at the same time.
\end{example}

\section{Upper Bounds and Algorithms}
\label{sect:upper}

We prove that OMQ evaluation on databases of bounded cliquewidth is
FPL when the parameter is the size of the OMQ plus the cliquewidth of
the database and establish upper  bounds on the overall running time. 

\subsection{\ALCI with AQs}
\label{sect:alciupper}

%
\begin{restatable}{theorem}{alciaqubthm}
  \label{thm:alc1exp}
  In $(\ALCI, \text{AQ})$, an OMQ
  $Q=(\Omc,q)$ can be evaluated on a database \Dmc of cliquewidth
  $k$ in time~\mbox{$2^{O(|\Omc| \cdot k)} \cdot
    |\Dmc|$.} 
\end{restatable}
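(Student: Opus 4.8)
The plan is to reduce the answer test to a model-existence check and then solve that check by dynamic programming over the given $k$-expression, where the crux is to keep the dynamic-programming state single-exponential in $|\Omc|$ rather than double-exponential. First I would observe that for $q = A(x)$, a constant $c$ is an answer to $Q=(\Omc,q)$ on \Dmc iff there is \emph{no} model of \Omc and \Dmc in which $c \notin A$. Since \ALCI has a suitable tree-model property and, being free of number restrictions, admits an independent fresh witness for each existential restriction, every model with $c\notin A$ may be taken to consist of \Dmc as a core in which each constant $a$ is assigned a \emph{type} $\tau(a)\subseteq\mn{cl}(\Omc)$, together with a fresh tree attached to $a$ witnessing its existentials; here $\mn{cl}(\Omc)$ is the closure of \Omc under subconcepts and single negation, and a type is a propositionally consistent, CI-respecting subset (for every CI $C\sqsubseteq D\in\Omc$, $C\in t$ implies $D\in t$). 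Consequently such a model exists iff there is a $\tau$ assigning to each $a\in\mn{adom}(\Dmc)$ a type $\tau(a)$ with: (i) $\tau(a)$ is \emph{realizable}, i.e.\ \Omc has a tree model whose root has type $\tau(a)$; (ii) $\tau(a)$ contains every $B$ with $B(a)\in\Dmc$; (iii) for every $r(a,b)\in\Dmc$, $\forall r.C\in\tau(a)$ implies $C\in\tau(b)$ and $\forall r^-.C\in\tau(b)$ implies $C\in\tau(a)$; and (iv) $A\notin\tau(c)$. The realizable types are precomputed by standard type elimination in time $2^{O(|\Omc|)}$, well within budget.

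It then remains to decide the existence of such a $\tau$ by a bottom-up pass over the $k$-expression $s$ with $\Dmc_s=\Dmc$. The naive state would record, for each label $i\in\{1,\dots,k\}$, the full \emph{set} $S_i$ of types used by $i$-labeled constants, but this is double-exponential in $|\Omc|$ and misses the target bound. The key observation is that the only operation creating constraints among already-built constants is role insertion $\alpha^r_{i,j}$, which links \emph{all} $i$-labeled to \emph{all} $j$-labeled constants, and the condition (iii) over $S_i,S_j$ is \emph{equivalent} to $\bigcup_{t\in S_i}\{C:\forall r.C\in t\}\subseteq\bigcap_{t'\in S_j}t'$ together with the symmetric condition $\bigcup_{t'\in S_j}\{C:\forall r^-.C\in t'\}\subseteq\bigcap_{t\in S_i}t$. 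Hence it suffices to keep, per label $i$, the \emph{summary} consisting of the intersection $\bigcap_{t\in S_i}t$ and, for each role $r$, the union $\bigcup_{t\in S_i}\{C:\forall r.C\in t\}$; both are subsets of $\mn{cl}(\Omc)$, so a summary has size $O(|\Omc|)$ and a full state (one summary per label) has size $O(|\Omc|\cdot k)$, giving $2^{O(|\Omc|\cdot k)}$ states. Crucially these summaries compose \emph{exactly} under the union of type sets induced by disjoint union and relabeling, since intersection and union distribute appropriately; an absent label is encoded by intersection $\mn{cl}(\Omc)$ and empty universal-requirement sets, a value never attained by a nonempty label because $\mn{cl}(\Omc)$ is inconsistent.

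The dynamic program maintains, at each node of $s$, the set of states attained by some type assignment to the already-built constants that satisfies (i)--(iii) internally. At a leaf $\iota$ introducing a constant $a$ with label $i$, the attained states are those induced by a single realizable type $t$ respecting $a$'s concept facts, restricted to $A\notin t$ when $a=c$, thereby enforcing (iv) at the point of choice. The operators $\oplus$ and $\rho_{i\to j}$ combine the per-label summaries by the composition rules above, and $\alpha^r_{i,j}$ filters out states failing the summary form of (iii). Then $c$ is \emph{not} an answer iff some state is attained at the root; an unsatisfiable \Dmc yields no attained state and correctly reports entailment. Each binary node costs $O(M^2\cdot\mn{poly}(|\Omc|,k))$ and each unary node $O(M\cdot\mn{poly}(|\Omc|,k))$ with $M=2^{O(|\Omc|\cdot k)}$, and $s$ has $O(|\Dmc|)$ operations up to a factor polynomial in $k$, which is absorbed into the exponential; the total is $2^{O(|\Omc|\cdot k)}\cdot|\Dmc|$. (If one wants all answers rather than a decision for a fixed $c$, a complementary top-down pass lets one read off at each introduction leaf whether some $A$-free type extends to a valid global assignment, preserving the bound.)

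I expect the main obstacle to be the soundness of the summary-based state: one must verify that collapsing all type sets with the same intersection and the same per-role union of universal requirements never certifies an assignment that a genuine type set would forbid. This hinges on the exact characterization of the role-insertion constraint stated above, which I would prove by a routine induction over $s$ showing that a summary is attained at a node iff it is the summary of a genuine type assignment satisfying (i)--(iii) on the already-built subdatabase; because the role-insertion test is captured \emph{exactly} (not merely over- or under-approximated) and all other operators compose summaries exactly, every attained state at the root can be witnessed by an actual $\tau$, closing the equivalence with model existence.
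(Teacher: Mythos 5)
Your proposal is correct and takes essentially the same approach as the paper: your per-label summary (the intersection of the types of the $i$-labeled constants together with the per-role union of their universal requirements) is exactly the paper's notion of input-output assignment (IOA), your role-insertion filter coincides with the paper's conditions a) and b), and both arguments proceed by the same bottom-up induction over the $k$-expression with the same $2^{O(|\Omc| \cdot k)}$ bound on the number of states. The only cosmetic differences are that the paper enforces the non-answer condition by adding $\overline{A}(c)$ to the database and $\overline{A} \equiv \neg A$ to the ontology rather than constraining the type chosen at the leaf introducing $c$, and that it phrases the invariant in terms of IOAs induced by actual models of $\Dmc_s$ and \Omc rather than by type assignments satisfying local coherence conditions.
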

Theorem~\ref{thm:alc1exp} implies that OMQ evaluation in
$(\ALCI, \text{AQ})$ is in FPL, and also that it is in linear time in
data complexity when the cliquewidth of databases is bounded by a
constant.
As announced before, we assume here and in all subsequent upper
complexity bounds that a $k$-expression $s_0$ for \Dmc is given as
part of the input. In fact, the number of subexpressions of $s_0$ must
be $O(|\Dmc|)$, a condition that is satisfied by any reasonable
$k$-expression for \Dmc. These assumptions could clearly be dropped
if, given a database \Dmc of cliquewidth~$k$, we could compute a
$k$-expression that generates \Dmc (or a sufficient approximation
thereof) in time $2^{O(k)} \cdot |\Dmc|$. It is an open problem
whether this is possible, see
e.g.~\cite{downey2013fundamentals}. There is, however,
an 
algorithm that computes, given a database \Dmc of cliquewidth~$k$, in
time $|\mn{adom}(\Dmc)|^3$ a $k'$-expression with $k' \leq 2^{3k-1}$
\cite{oum2008approximating}. As a consequence, we obtain an analogue of
Theorem~\ref{thm:alc1exp} that does not require a $k$-expression
for \Dmc to be given and where the time bound is replaced with
$2^{|Q| \cdot 2^{O(k)}} \cdot |\Dmc|^3$. While this no longer yields
FPL, it still yields FPT.

To prove Theorem~\ref{thm:alc1exp}, it suffices to give an algorithm
for database satisfiability w.r.t.\ an \ALCI ontology that runs within the
stated time bounds. To decide whether
\mbox{$\Dmc,\Omc \models A(c)$}, we may then simply check whether
$\Dmc \cup \{\overline{A}(c)\}$ is unsatisfiable w.r.t.\ the ontology
$\Omc \cup
\{\overline{A} \equiv \neg A\}$.

Assume that we are given as input a database $\Dmc_0$, an ontology
\Omc, and a $k$-expression $s_0$ that generates $\Dmc_0$. We may
assume w.l.o.g.\ that \Omc takes the form
$\{ \top \sqsubseteq C_\Omc \}$ where $C_\Omc$ is an \ALCI-concept in
negation normal form (NNF), that is, negation is only applied to
concept names, but not to compound concepts. Every \ALCI-ontology \Omc
can be converted into this form in linear time
\cite{DBLP:books/daglib/0041477}. 

Our algorithm traverses the $k$-expression
$s_0$ bottom-up, computing for each  subexpression~$s$ a succinct
representation of the models of the database $\Dmc_s$ and ontology
\Omc. Before giving details, we introduce some relevant notions.
We use $\mn{sub}(\Omc)$ to denote the set of all concepts in \Omc,
closed under subconcepts and
%
$\mn{cl}(\Omc)$ for the
extension of $\mn{sub}(\Omc)$ with the NNF of all concepts $\neg C$,
$C \in \mn{sub}(\Omc)$. Moreover, $\mn{cl}^\forall(\Omc)$ denotes the
restriction of $\mn{cl}(\Omc)$ to concepts of the form $\forall r .C$
and
$\mn{cl}^\ast(\Omc)$ is  $\{ C \mid \forall r . C \in \mn{cl}(\Omc) \}$.
A \emph{type for \Omc} is a set $t \subseteq \mn{cl}(\Omc)$ such that
for some model \Imc of \Omc and some $d \in \Delta^\Imc$,
$
  t = \{ C \in \mn{cl}(\Omc) \mid d \in C^\Imc\}.
$ We then also denote $t$ with $\mn{tp}_\Imc(d)$.

%
An \emph{input output assignment (IOA)} for \Omc is a pair
$\gamma = (\mn{in},\mn{out})$ with
$\mn{in}: \{1,\dots,k\} \rightarrow 2^{\mn{cl}^\ast(\Omc)}$ and
$\mn{out}: \{1,\dots,k\} \rightarrow
2^{\mn{cl}^\forall(\Omc)}$ total functions. 
For easier reference, we use
$\gamma^{\mn{in}}$ to denote $\mn{in}$ and $\gamma^{\mn{out}}$ to denote \mn{out}.
Every database \Dmc and model \Imc of \Dmc and \Omc give rise to an IOA
$\gamma_{\Imc,\Dmc}$ for \Omc defined by setting, for $1 \leq i \leq k$,
\begin{align*}
  \gamma^{\mn{in}}_{\Imc,\Dmc}(i) &= \mn{cl}^\ast(\Omc) \cap \bigcap_{L_i(c) \in \Dmc}
                     \mn{tp}_\Imc(c) \text{ and} \\
  \gamma^{\mn{out}}_{\Imc,\Dmc}(i) &= \mn{cl}^\forall(\Omc) \cap \bigcup_{L_i(c) \in \Dmc} \mn{tp}_\Imc(c).
\end{align*}
%
Intuitively, $\gamma^{\mn{out}}(i)$ lists outputs generated by label
$i$ in the sense that every concept
$\forall r . C \in \gamma^{\mn{out}}(i)$ `outputs' $C$ to all
constants labeled $j$ when we use the $\alpha^r_{i,j}$ operation to
introduce new role edges. It
is then important that $C \in \gamma^{\mn{in}}(j)$, and in this sense
$\gamma^{\mn{in}}(j)$ lists inputs accepted by label $j$.

The central idea of our algorithm is to compute, for each
subexpression $s$ of $s_0$, the set of IOAs
$$\Theta(s)=\{ \gamma_{\Imc,\Dmc_s} \mid \Imc \text{ model of } \Dmc_s \text{
  and } \Omc \}.
$$
It then remains to check whether $\Theta(s_0)$ is
non-empty. The sets $\Theta(s)$ are computed as follows:

\begin{enumerate}[align=left]
\item[$\boldsymbol{s = \iota(\Dmc)\text{: }}$] Set $\Theta(s)=
  \{ \gamma_{\Imc,\Dmc} \mid \Imc \text{ model of } \Dmc \text{ and }
  \Omc
  \}$.
\end{enumerate}


\begin{enumerate}[align=left]

\item[$\boldsymbol{s = s_1 \oplus s_2\text{: }}$] $\Theta(s)$ contains
  an IOA $\gamma$ for each pair of IOAs
  $(\gamma_1, \gamma_2) \in \Theta(s_1) \times \Theta(s_2)$ defined by
  setting, for $1 \leq i \leq k$:
  $$
    \gamma^{\mn{in}}(i) = \gamma_1^{\mn{in}}(i)
    \cap \gamma_2^{\mn{in}}(i) \quad
    \gamma^{\mn{out}}(i) = \gamma_1^{\mn{out}}(i) \cup
    \gamma_2^{\mn{out}}(i).
  $$

\item[$\boldsymbol{s = \alpha_{i, j}^r(s')\text{: }}$] $\Theta(s)$ is
  obtained from $\Theta(s')$ by removing IOAs that are ruled out by
  the additional role links. Formally, we keep in $\Theta(s)$ only
  those
  IOAs $\gamma$ such that 
  \begin{enumerate}[label=\emph{\alph*})]
        \item if $\forall r.C \in \gamma^{\mn{out}}(i)$ then $C \in \gamma^{\mn{in}}(j)$ and
        \item if $\forall r^-.C \in \gamma^{\mn{out}}(j)$ then $C \in \gamma^{\mn{in}}(i)$.
        \end{enumerate}

      \item[$\boldsymbol{s = \rho_{i \rightarrow j}(s')\text{: }}$]
        for each
${\widehat\gamma} \in \Theta(s')$, 
        $\Theta(s)$ contains the  IOA $\gamma$ defined as follows:
        \begin{enumerate}[label=\emph{\alph*}),start=3]
        \item $\gamma^{\mn{in}}(i) = \mn{cl}^\ast(\Omc)$ and
          $\gamma^{\mn{out}}(i) = \emptyset$
        \item $\gamma^{\mn{in}}(j) = {\widehat\gamma}^{\mn{in}}(i) \cap
          {\widehat\gamma}^{\mn{in}}(j)$ and \\
          $\gamma^{\mn{out}}(j) = {\widehat\gamma}^{\mn{out}}(i) \cup
          {\widehat\gamma}^{\mn{out}}(j)$
        \item $\gamma^{\mn{in}}(\ell) =
          {\widehat\gamma}^{\mn{in}}(\ell)$
and $\gamma^{\mn{out}}(\ell) =
          {\widehat\gamma}^{\mn{out}}(\ell)$  \\
          $\text{for all } \ell \in \{1,\dots,k\} \setminus \{i, j\}$.
        \end{enumerate}
\end{enumerate}
%
%
%
We prove in the appendix that the algorithm is correct.
\begin{restatable}{lemma}{alciaqublem}
  \label{lem:1exp-correctness}
  For all subexpressions $s$ of $s_0$,
  $\Theta(s)=\{ \gamma_{\Imc,\Dmc_s} \mid \Imc \text{ model of }
  \Dmc_s \text{ and } \Omc \}$.
\end{restatable}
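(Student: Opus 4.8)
The plan is to prove the identity by structural induction on the subexpression $s$, establishing both inclusions ($\subseteq$ and $\supseteq$) at each step. Two facts about \ALCI will be used throughout. First, the label names $L_1,\dots,L_k$ occur neither in \Omc nor in $\mn{cl}(\Omc)$, so one may freely reinterpret them in any model of \Omc without changing any type $\mn{tp}_\Imc(d)$ or the status of \Imc as a model of \Omc; only the syntactic label facts of the database enter $\gamma_{\Imc,\Dmc}$. Second, types are \emph{complete} over $\mn{cl}(\Omc)$: for every $C \in \mn{cl}(\Omc)$ exactly one of $C$ and the NNF of $\neg C$ lies in $\mn{tp}_\Imc(d)$, because $\mn{cl}(\Omc)$ is closed under NNF-negation.

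The base case $s = \iota(\Dmc)$ holds by definition of $\Theta(s)$. For $s = s_1 \oplus s_2$, completeness is immediate: a model \Imc of $\Dmc_s$ and \Omc is also a model of $\Dmc_{s_1}$ and of $\Dmc_{s_2}$, so $\gamma_{\Imc,\Dmc_{s_1}} \in \Theta(s_1)$ and $\gamma_{\Imc,\Dmc_{s_2}} \in \Theta(s_2)$ by induction, and since $\mn{adom}(\Dmc_{s_1})$ and $\mn{adom}(\Dmc_{s_2})$ partition the constants, splitting the defining intersection (for $\mn{in}$) and union (for $\mn{out}$) of $\gamma_{\Imc,\Dmc_s}$ over the two parts shows it is exactly the IOA the rule builds from this pair. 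For soundness I merge the two witnessing models by a disjoint union of interpretations $\Imc_1 \uplus \Imc_2$; here I use the locality of \ALCI, namely that $\mn{tp}_{\Imc_1 \uplus \Imc_2}(d)$ equals $\mn{tp}_{\Imc_1}(d)$ or $\mn{tp}_{\Imc_2}(d)$ according to the part containing $d$, which holds because no fact or interpreted edge connects the two parts. Thus $\Imc_1 \uplus \Imc_2$ is a model of $\Dmc_s$ and \Omc realising the combined IOA. The relabeling step $s = \rho_{i \to j}(s')$ is handled by the first fact above: $\Dmc_s$ and $\Dmc_{s'}$ differ only in label facts, so a model of one yields a model of the other by reinterpreting the label names, leaving all types intact; a direct calculation matches the syntactic merge of labels $i$ and $j$ with rules (d) and (e), while rule (c) corresponds to the fact that no constant carries label $i$ in $\Dmc_s$, making its defining intersection $\mn{cl}^\ast(\Omc)$ and its defining union $\emptyset$.

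The main obstacle is the role-insertion step $s = \alpha^r_{i,j}(s')$. Completeness is the easy direction: a model \Imc of $\Dmc_s$ contains all edges $r(a,b)$ with $a$ labeled $i$ and $b$ labeled $j$, so any $\forall r.C$ in the type of an $i$-labeled constant propagates $C$ to every $j$-labeled constant, which is exactly condition (a); condition (b) follows symmetrically through $r^-$. Since labels and types agree in $\Dmc_s$ and $\Dmc_{s'}$ we get $\gamma_{\Imc,\Dmc_s} = \gamma_{\Imc,\Dmc_{s'}} \in \Theta(s')$, and it survives the filter. The delicate direction is soundness: from a model \Imc of $\Dmc_{s'}$ whose IOA $\gamma$ passes the filter, I let $\Imc'$ be \Imc with the new edges added and must show $\Imc'$ is still a model of \Omc with the same IOA.

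The subtlety is that adding edges can a priori both falsify $\forall$-concepts and satisfy new $\exists$-concepts, changing types. I will prove by induction on the structure of $C \in \mn{cl}(\Omc)$ that in fact \emph{no} type changes, i.e.\ $d \in C^{\Imc'} \iff d \in C^\Imc$ for all $d$. The $\forall r.C$ case uses condition (a)/(b) directly: the only new successors occur at $i$- and $j$-labeled constants, and the filter guarantees $C$ at all targets (where $C^{\Imc'}=C^\Imc$ by the induction hypothesis). The $\exists r.C$ case is where completeness of types over $\mn{cl}(\Omc)$ is essential: if a new edge $r(a,b)$ newly satisfied $\exists r.C$ at an $i$-labeled $a$, then $\forall r.C'$, where $C'$ is the NNF of $\neg C$, would lie in $\mn{tp}_\Imc(a)$ and hence in $\gamma^{\mn{out}}(i)$, so condition (a) would force $C' \in \mn{tp}_\Imc(b)$ for the $j$-labeled witness $b$; thus $b \notin C^\Imc$, contradicting $b \in C^{\Imc'} = C^\Imc$ (induction hypothesis). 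With type preservation established, $\Imc' \models \Omc$ since every type still contains $C_\Omc$, clearly $\Imc' \models \Dmc_s$, and $\gamma_{\Imc',\Dmc_s} = \gamma_{\Imc,\Dmc_{s'}} = \gamma$, completing the step. The remaining bookkeeping — empty intersections yielding $\mn{cl}^\ast(\Omc)$ and empty unions yielding $\emptyset$ — is routine.
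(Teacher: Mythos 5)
Your proof is correct and follows essentially the same route as the paper's: structural induction with both inclusions, disjoint unions of models for $\oplus$, label reinterpretation for $\rho_{i \rightarrow j}$, and, for $\alpha^r_{i,j}$, adding the new edges and showing type preservation by an inner induction on concept structure, where the $\exists r.C$ case is handled exactly as in the paper via NNF-completeness of types (the paper phrases it as the case $\neg \exists r.D \in \mn{tp}_{\Imc'}(c)$, using $\forall r.\bar D \in \gamma^{\mn{out}}(i)$ and condition~a)). Your write-up is in fact somewhat more explicit than the paper's on that critical $\exists$-case contradiction, but there is no substantive difference in method.
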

The presented algorithm achieves the intended running time. We do a
single bottom-up pass over $s_0$, considering each subexpression $s$,
of which there are at most $O(|\Dmc|)$ many. The size of each IOA in
$\Theta(s)$ is $O(|\Omc| \cdot k)$ and there are at most
$2^{O(|\Omc| \cdot k)}$ IOAs. It can be verified that each set
$\Theta(s)$ can be constructed in time $2^{O(|\Omc| \cdot k)}$. For
the case $s = \iota(\Dmc)$, this essentially amounts to enumerating all
types for \Omc, more details are in the appendix.

\subsection{\ALCI with UCQs}
\label{sect:alciucq}

\begin{restatable}{theorem}{alciucqtwoexpthm}
  \label{thm:alciucq2exp}
  In $(\ALCI, \text{UCQ})$, an OMQ $Q =(\Omc,q)$
  can be evaluated
  on a database \Dmc of cliquewidth $k$ in time
  $2^{|\Omc| \cdot 
    k^{O(|q| \log(|q|))}} \cdot |\Dmc|$.
\end{restatable}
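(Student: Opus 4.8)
The plan is to extend the bottom-up dynamic programming over the $k$-expression $s_0$ that we used for Theorem~\ref{thm:alc1exp}, but now the succinct representation computed for each subexpression $s$ must carry enough information to detect UCQ matches, not merely to certify model existence. The key conceptual obstacle is that a CQ match can weave through both the database part of $\Dmc_s$ and the anonymous (tree-shaped) parts introduced by existential restrictions in the models of \Omc, and these matches can span constants that only become connected later via role-insertion operations $\alpha^r_{i,j}$. So the abstraction attached to $s$ must record, for each way a model of $\Dmc_s$ and \Omc could be completed, which \emph{partial} query images are realizable, indexed by how the query variables are distributed over the labels $1,\dots,k$ (the only information that survives relabeling and role insertion).

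\medskip

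\noindent\textbf{Main construction.} First I would fix the standard fact that a UCQ $q$ has a match in a model \Imc of $(\Omc,\Dmc)$ iff it has a match in the canonical/universal model, whose anonymous part is a forest of bounded-degree trees whose depth is bounded by $|q|$: a connected CQ of $m$ atoms that maps into the anonymous forest can reach at most depth $m$ below the constants. This lets me decompose any match into its restriction to the active domain together with bounded-depth ``dangling'' subtree-matches rooted at constants. The state I attach to each subexpression $s$ is then a refinement of the IOA from Theorem~\ref{thm:alc1exp}: in addition to the in/out $\forall$-information per label, it records, for each subset $P$ of $\mn{var}(q)$ and each assignment of the variables in $P$ to labels, whether there is a model of $\Dmc_s$ and \Omc realizing a homomorphic image of the subquery $q|_P$ whose variables carry exactly those labels and whose yet-unmatched neighbours are annotated by which concepts they still need (so that the match can be completed either in the anonymous forest or by later role edges). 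The four operations are updated as before: $\iota$ enumerates types and bounded-depth tree matches at a single constant; $\oplus$ takes pairwise combinations, intersecting inputs and uniting outputs exactly as in Theorem~\ref{thm:alc1exp} while keeping the two sets of realizable partial matches disjointly; $\rho_{i\to j}$ merges label-$i$ and label-$j$ information; and the crucial case $\alpha^r_{i,j}$ both prunes IOAs violating the $\forall$-constraints \emph{and} glues partial matches whose dangling edge is the newly inserted $r$-edge between a label-$i$ and a label-$j$ variable. Finally $\Theta(s_0)$ contains a complete match iff the empty partial query (the whole $q$ fully realized) is flagged as present.

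\medskip

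\noindent\textbf{Running-time analysis and the hard part.} The number of states is the product of the number of IOAs, which is $2^{O(|\Omc|\cdot k)}$, with the number of possible partial-match records. A partial match is described by choosing a subset of the $O(|q|)$ variables, a label in $\{1,\dots,k\}$ for each, plus bounded annotation from $\mn{cl}(\Omc)$ per variable; naively this is $k^{O(|q|)}$ configurations, and tracking which \emph{sets} of such partial matches are simultaneously realizable would blow up doubly in $|q|$. The hard part, and where the $k^{O(|q|\log|q|)}$ bound really comes from, is to show that it suffices to track partial matches along a \emph{balanced separator / tree-decomposition of the query itself}: one processes the CQ according to a decomposition of logarithmic depth so that, at any interface, only $O(\log|q|)$ variables need to be ``open'' at once, each requiring a label out of $k$, giving $k^{O(|q|\log|q|)}$ and hence the overall bound $2^{|\Omc|\cdot k^{O(|q|\log|q|)}}\cdot|\Dmc|$. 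I expect the delicate points to be (i) bounding the depth to which the anonymous forest must be explored and packaging those subtree-matches finitely, and (ii) arguing that the logarithmic-width query decomposition is compatible with the linear bottom-up pass over $s_0$ so that each of the $O(|\Dmc|)$ subexpressions is processed in time $2^{|\Omc|\cdot k^{O(|q|\log|q|)}}$, keeping the dependence on $|\Dmc|$ linear. Correctness is then established by an invariant analogous to Lemma~\ref{lem:1exp-correctness}, proved by induction on the structure of $s_0$, stating that the state computed at $s$ records exactly the realizable IOA-plus-partial-match combinations over $\Dmc_s$.
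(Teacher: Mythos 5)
Your overall architecture is the same as the paper's: a bottom-up pass over the $k$-expression with IOAs decorated by partial query matches that record only the \emph{labels} of the constants hit by the variables, an ontology/query rewriting that absorbs the parts of a match landing in the anonymous tree-shaped regions, the four update rules, and an inductive correctness invariant (this is exactly Lemma~\ref{lem:alci-ucq-correctness} and the rewriting $(\Omc_q,\widehat q)$ of Lemma~\ref{lem:qhatworksalc}). The genuine gap is in the part you yourself flag as ``where the $k^{O(|q|\log|q|)}$ bound really comes from.'' Your proposed mechanism --- processing the CQ along a balanced-separator decomposition of logarithmic depth so that only $O(\log|q|)$ variables are ever ``open'' --- is both arithmetically and conceptually wrong. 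Arithmetically, $O(\log|q|)$ open variables with $k$ label choices each would give $k^{O(\log |q|)}$ interface states, not $k^{O(|q|\log(|q|))}$; the expression you want does not come out of that argument. Conceptually, the order in which query variables become matched is dictated by the $k$-expression of the \emph{database}, not by any decomposition of the \emph{query}: for a star CQ with center $z$ and leaves $y_1,\dots,y_m$, all $m$ leaves may be mapped to constants of $\Dmc_s$ while $z$ is matched only after a later $\alpha^r_{i,j}$ operation, so all $m$ variables are simultaneously open even though the star has a size-one separator. No query decomposition can bound the number of open variables at a subexpression, so the state space cannot be pruned this way.

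The irony is that the problem you invented this mechanism to solve is not a problem. You worry that tracking \emph{sets} of label-indexed partial matches ``would blow up doubly in $|q|$'' --- but the bound to be proven, $2^{|\Omc|\cdot k^{O(|q| \log(|q|))}}\cdot|\Dmc|$, \emph{is} doubly exponential in $|q|$, and the naive count is exactly what the paper uses: there are at most $2^{O(|q|\log(|q|))}$ relevant CQs $p$ (the $\log(|q|)$ enters here, from the at most $|q|^{|q|}$ \emph{contractions} of CQs in $q$, which are needed because a homomorphism may identify variables and because the split into an active-domain part and dangling tree parts is defined on a contraction), times $k^{|q|}$ label functions $f$, giving $k^{O(|q|\log(|q|))}$ pairs $(p,f)$ and hence $2^{k^{O(|q|\log(|q|))}}$ possible sets of them; multiplying by the $2^{|\Omc_q|\cdot k}$ choices of the IOA component yields the stated bound, with $O(|\Dmc|)$ subexpressions processed once each. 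So the fix is to delete the separator argument entirely and replace it by this counting, making sure your state space explicitly closes the tracked subqueries under contraction (your ``homomorphic image of $q|_P$'' phrasing gestures at this but never accounts for the $|q|^{|q|}$ quotients in the state count). One further caution: at $s=s_1\oplus s_2$ it does not suffice to keep the two match sets side by side; you must also record combined pairs $(p_1\cup p_2, f_1\cup f_2)$ with disjoint variable sets, since a single CQ match may span both components and only become connected by a later role insertion.
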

The algorithm used to prove Theorem~\ref{thm:alciucq2exp} is a
generalization of the one used for
Theorem~\ref{thm:alc1exp}. It also traverses the
  $k$-expression $s_0$ bottom-up, computing for each subexpression $s$
  a set of IOAs that additionally are annotated with information about
  partial homomorphisms from CQs in the UCQ $q$ into models of
  $\Dmc_s$ and \Omc. The generalization is not entirely
  straightforward. First, the homomorphism may map some variables to
  elements of the model that are `outside' of the database $\Dmc_s$
  while IOAs only store information about constants from $\Dmc_s$. And
  second, we cannot even memorize all homomorphisms that map all
  variables into $\Dmc_s$ as there may be $|\Dmc_s|^{|q|}$ many of
  them, too many for FPL.  Our solution to the first issue is to
  modify the ontology and UCQ so that variables need never be mapped
  outside of \Dmc. We resolve the second issue by the observation that
  it suffices to memorize the \emph{labels} of the constants in
  $\Dmc_s$ to which variables are mapped, but not their precise
  identity.  The generalized algorithm establishes
  Theorem~\ref{thm:alc1exp} as a special case; we singled
  out the algorithm in the previous section for didactic reasons.

Assume that we are given as input an OMQ
$Q(\bar x)=(\Omc,q) \in (\ALCI, \text{UCQ})$, a database $\Dmc_0$, and
a $k$-expression $s_0$ that generates $\Dmc_0$. It is easy to see that
we may assume w.l.o.g.\ that $Q$ is Boolean, details are in the
appendix.
We may also assume the ontology~\Omc to
contain a CI $\top \sqsubseteq A_\top$, with $A_\top$ not used
anywhere else in~\Omc, and that every CQ $p$ in $q$ contains the
concept atom $A_\top(x)$ for every $x \in \mn{var}(p)$. This will
prove useful for the constructions below.


An interpretation \Imc is a \emph{tree} if the undirected graph
$G_\Imc = (V,E)$ with $V=\Delta^\Imc$ and  
$E = \{ \{x,y \} \mid (x,y) \in r^\Imc \text{ for some } r \}$ 
is a tree and there are no self loops and multi-edges, the latter
meaning that $(d,e) \in r_1^\Imc$ implies $(d,e) \notin r_2^\Imc$ for
all distinct roles $r_1,r_2$.  We call a model \Imc of $\Dmc_0$
\emph{tree-extended} if it satisfies the following conditions:
\begin{itemize}

\item $r^\Imc \cap (\mn{adom}(\Dmc_0) \times \mn{adom}(\Dmc_0)) = \{(d, e) \mid r(d, e) \in \Dmc_0\}$
for all role names $r$ and

\item if \Imc is modified by setting
  $r^\Imc = r^\Imc \setminus (\mn{adom}(\Dmc_0) \times
  \mn{adom}(\Dmc_0))$ for all role names $r$, then the result is a
  disjoint union of trees and each of these trees contains exactly
  one constant from $\mn{adom}(\Dmc_0)$. 

\end{itemize}
The following is well-known, see for example \cite{DBLP:conf/dlog/Lutz08}.
\begin{lemma}
  \label{lem:treeextended1}
  If there is a model \Imc of $\Dmc_0$ and \Omc such that
  $\Imc \not \models q$, then there is such a model \Imc that
  is tree-extended.
\end{lemma}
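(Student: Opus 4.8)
The plan is to prove this by a standard \emph{forest unraveling} of the given model, taking care that the inverse roles of \ALCI are handled correctly. So assume $\Imc$ is a model of $\Dmc_0$ and $\Omc$ with $\Imc \not\models q$. I would build a new interpretation $\Imc'$ as follows. Its domain consists of $\mn{adom}(\Dmc_0)$ together with all finite \emph{paths} $c_0 (R_1,e_1)\cdots(R_n,e_n)$ with $n \geq 1$, where $c_0 \in \mn{adom}(\Dmc_0)$, each $R_i$ is a role (a role name or an inverse role), each $e_i \in \Delta^\Imc$, and $(e_{i-1},e_i) \in R_i^\Imc$ with $e_0=c_0$. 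Concept names are inherited via the last element: a path ending in $e_n$ is put into $A^{\Imc'}$ iff $e_n \in A^\Imc$, and each constant $c$ into $A^{\Imc'}$ iff $c \in A^\Imc$. For role names, among constants I keep \emph{exactly} the database edges, $(c,c') \in r^{\Imc'}$ iff $r(c,c') \in \Dmc_0$; and for each path $p$ ending in $e$ and each one-step extension $p' = p\cdot(R,e')$ I add the edge corresponding to $R$, that is $(p,p') \in r^{\Imc'}$ if $R=r$ and $(p',p) \in r^{\Imc'}$ if $R = r^-$. Crucially, extensions are always \emph{fresh} copies, even when $e'$ lies in the active domain, so that distinct constants never share a path node.

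The key device is the projection $f$ that maps a path to its last element and fixes every constant. By construction $f$ is a homomorphism from $\Imc'$ to $\Imc$ for all role names, and in fact its graph is an \ALCI-\emph{bisimulation}: the forth conditions hold because every $\Imc'$-edge projects to an $\Imc$-edge, and the back conditions hold because every $R$-successor $e'$ of $f(p)$ in $\Imc$ is realized by the fresh child $p\cdot(R,e')$ of $p$ --- and this works uniformly for constants and for path nodes. Since \ALCI-concepts are invariant under \ALCI-bisimulations, $\mn{tp}_{\Imc'}(p) = \mn{tp}_\Imc(f(p))$ for every element $p$; hence every CI of $\Omc$ is satisfied in $\Imc'$ (because it is in $\Imc$) and all facts of $\Dmc_0$ hold, so $\Imc'$ is a model of $\Dmc_0$ and $\Omc$. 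For the query, any homomorphism $h$ from a CQ of $q$ into $\Imc'$ composes with $f$ to a homomorphism $f \circ h$ into $\Imc$, so $\Imc' \models q$ would give $\Imc \models q$, contradicting the assumption; thus $\Imc' \not\models q$.

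It remains to check that $\Imc'$ is tree-extended. The first condition is immediate since the only role-name edges between constants are the database edges. For the second, removing those edges leaves, for each constant $c_0$, the tree of paths starting at $c_0$; these trees are pairwise disjoint (paths with different first components share no node), each contains exactly its root $c_0$, and the path-prefix structure is acyclic with one edge per parent-child pair, so there are no self loops or multi-edges. The step I expect to be most delicate is the treatment of inverse roles: a plain forward unraveling (as suffices for \ALC) would fail to witness $\exists r^-.C$-requirements, since a tree node would have only one predecessor, and the argument that $f$ is a bisimulation would break down. Allowing path steps along inverse roles, and thereby letting a node have fresh children that act as its \emph{predecessors}, is exactly what repairs this. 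A secondary point to get right is to always use fresh copies --- including for existential witnesses that happen to lie in the active domain of $\Dmc_0$ --- so that the trees stay disjoint and each retains a single constant, as the definition of tree-extended demands.
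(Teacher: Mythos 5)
Your proof is correct: the two-way unraveling with fresh copies handles exactly the two subtleties that matter here (steps along inverse roles, and existential witnesses that happen to be constants of $\Dmc_0$), the graph of the projection $f$ is an \ALCI-bisimulation so the types and hence the CIs of \Omc are preserved, and composing any query homomorphism with $f$ shows that no new matches of $q$ are created. Note that the paper does not actually prove Lemma~\ref{lem:treeextended1} but cites it as well-known (from the literature on CQ answering in expressive DLs), and your unraveling argument is precisely the standard proof behind that citation.
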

Lemma~\ref{lem:treeextended1} provides the basis for the announced
modification of \Omc and the CQs in $q$: we extend \Omc to an ontology
$\Omc_q$ and $q$ into a UCQ $\widehat q$ such that for every
tree-extended model $\Imc$ of $\Dmc_0$ and $\Omc_q$, $\Imc \models q$
iff $\Imc|_{\mn{adom}(\Dmc_0)} \models \widehat q$. We then proceed to
work with $\Omc_q$ and $\widehat q$ in place of \Omc and $q$, which
allows us to concentrate on homomorphisms from CQs in $\widehat q$ to
models \Imc of \Dmc and $\Omc_q$ that map all variables to
$\mn{adom}(\Dmc_0)$.

We next present the extension of \Omc.  A Boolean or unary CQ $p$
being a \emph{tree} (without self-loops and multi-edges) is defined in
the expected way. Note that a unary tree CQ $p$ can be viewed as an
\ALCI-concept~$C_p$ in an obvious way. For example, if
\mbox{$p(x) = \{A(x),r(y,x),s(y,z),B(z),r(y,z'),A(z')\}$}, then
$C_p=A\sqcap \exists r^-.(\exists s.B\sqcap \exists r.A)$.  
We use $\mn{trees}(q)$ to denote the set of all
(Boolean) tree CQs that can be obtained from a CQ in $q$ by first
dropping atoms and then taking a contraction. When writing $p(x)$ with
$p \in \mn{trees}(q)$ and $x \in \mn{var}(p)$, we mean $p$ with $x$
made answer variable.
%
%
The ontology $\Omc_q$ is obtained
from \Omc by adding the following, and then converting to NNF:
\begin{itemize}

\item for every $p \in \mn{trees}(q)$ and $x \in \mn{var}(p)$, the CIs
  \mbox{$A_{p(x)} \sqsubseteq C_{p(x)}$} and $C_{p(x)} \sqsubseteq A_{p(x)}$
where $A_{p(x)}$ is a fresh concept name;

  \item for every $p \in \mn{trees}(q)$, the CIs
    $A_{p} \sqsubseteq \exists r_{p} . C_{p}$,
    $C_{p} \sqsubseteq A_p$, and $\exists r . A_p \sqsubseteq A_p$
    for every role $r$ used in~\Omc, where $A_p$ is a fresh concept
    name and $r_p$ a fresh role name.


\end{itemize}
Note that $|\Omc_q| \in |\Omc| \cdot 2^{O(|q| \cdot \log
  |q|)}$.


Towards defining the UCQ $\widehat q$, consider all CQs that can be
obtained as follows. Start with a contraction $p$ of a CQ from~$q$,
then choose a set of variables $S \subseteq \mn{var}(p)$ such
that 
$$p^-=p \setminus \{ r(x,y) \in p \mid x,y \in S \}$$ is a disjoint union of
tree CQs each of which contains at most one variable from
$S$ and at least one variable that is not from~$S$.  Include in
$\widehat q$ all CQs that can be obtained by extending $p|_S$ as
follows:
\begin{enumerate}

\item for every maximal connected component $p'$ of $p^-$ that
  contains a (unique) variable $x_0 \in S$, 
  add the atom $A_{p'(x_0)}(x_0)$;


\item for every maximal connected component $p'$ of $p^-$ that
  contains no variable from $S$, add the atom $A_{p'}(z)$ with $z$
  a fresh variable.
  

\end{enumerate}
%
%
It is easy to verify that in Points~1 and~2 above, the CQ~$p'$ is in $\mn{trees}(q)$ and thus CIs for $p'$ have been introduced
in the construction of~$\Omc_q$.
The number of CQs in $\widehat q$ is single exponential in $|q|$ and
each CQ is of size at most~$|q|$. 
%
\begin{restatable}{lemma}{qhatworksalclem}
  \label{lem:qhatworksalc}
  For every tree-extended model \Imc of $\Dmc_0$ and $\Omc_q$,
  $\Imc \models q$ iff $\Imc|_{\mn{adom}(\Dmc_0)} \models \widehat q$.
\end{restatable}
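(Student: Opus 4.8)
The plan is to prove the two directions separately, each time translating between a homomorphism of a CQ in $q$ into $\Imc$ and a homomorphism of a CQ in $\widehat q$ into $\Imc|_{\mn{adom}(\Dmc_0)}$. The common device is to decompose a homomorphism along the tree-extended structure of $\Imc$: deleting the role edges inside $\mn{adom}(\Dmc_0)$ leaves a disjoint union of trees, each holding exactly one constant, so any homomorphism of a contracted CQ splits into a \emph{core part} landing in $\mn{adom}(\Dmc_0)$ and several \emph{tree parts} reaching into the anonymous trees. This is exactly what the construction of $\widehat q$ mirrors, each of its CQs being a core skeleton $p|_S$ enriched by the marker atoms $A_{p'(x_0)}(x_0)$ and $A_{p'}(z)$.

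For the \emph{only if} direction, fix a homomorphism $h$ from a CQ of $q$ into $\Imc$, let $p$ be the contraction of that CQ induced by $h$ (so that $h$ is injective on $p$), and set $S=\{v\in\mn{var}(p)\mid h(v)\in\mn{adom}(\Dmc_0)\}$. Since $\Imc$ is tree-extended, removing the core-core edges yields a forest whose trees each contain a single constant, and every connected component of $p^-$ is mapped by $h$ into one such tree; injectivity of $h$ then forces at most one $S$-variable per component, so $(p,S)$ is an admissible choice in the construction of $\widehat q$ (components that are isolated $S$-variables carry only concept atoms and already sit in $p|_S$). For a component $p'$ anchored at $x_0\in S$, the restriction of $h$ witnesses $h(x_0)\in C_{p'(x_0)}^{\Imc}$, hence $h(x_0)\in A_{p'(x_0)}^{\Imc}$ by the equivalence CI of $\Omc_q$. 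For a floating component $p'$, its image lies in the anonymous part of a single tree with constant $c$; since $h$ restricts to a homomorphism of $p'$, some element $e$ satisfies $e\in C_{p'}^{\Imc}$ and hence $e\in A_{p'}^{\Imc}$, and pushing $A_{p'}$ along the finite path from $e$ up to $c$ by the CIs $\exists r.A_{p'}\sqsubseteq A_{p'}$ gives $c\in A_{p'}^{\Imc}$. Mapping $S$ by $h$ and each fresh $z$ to the corresponding $c$ then yields a homomorphism into $\Imc|_{\mn{adom}(\Dmc_0)}$.

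For the \emph{if} direction, take a homomorphism $\widehat h$ of some CQ of $\widehat q$, assembled from a pair $(p,S)$, into $\Imc|_{\mn{adom}(\Dmc_0)}$, and rebuild a homomorphism of $p$ into $\Imc$. Keep $\widehat h$ on $S$, so that the core-core atoms of $p|_S$ hold in $\Imc$. Each anchored atom gives $\widehat h(x_0)\in A_{p'(x_0)}^{\Imc}=C_{p'(x_0)}^{\Imc}$, which by the definition of $C_{p'(x_0)}$ supplies a homomorphism of the tree CQ $p'$ into $\Imc$ sending $x_0$ to $\widehat h(x_0)$; each floating atom gives $\widehat h(z)\in A_{p'}^{\Imc}$, whence the CI $A_{p'}\sqsubseteq\exists r_{p'}.C_{p'}$ yields a successor satisfying $C_{p'}$, and thus a homomorphism of $p'$ into $\Imc$. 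These partial maps agree on the shared anchors and are otherwise defined on disjoint variable sets partitioning $\mn{var}(p)$, so their union is well defined, and it is a homomorphism because every atom of $p$ lies either in $p|_S$ or within a single component; precomposing with the contraction map shows $\Imc\models q$.

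The crux is the upward propagation of $A_{p'}$ in the \emph{only if} direction: the path from the witness $e$ to the constant $c$ runs through anonymous-tree edges that may be labelled by roles of $\Omc$, by roles of the query, or by the fresh roles $r_{p'}$, and that may point either way. The argument therefore needs $\exists r.A_{p'}\sqsubseteq A_{p'}$ to be available for every role $r$ (and its inverse) that can label such an edge in a model of $\Omc_q$, which is cleanest to guarantee by assuming w.l.o.g.\ that the roles of $q$ already occur in $\Omc$; one must also check that this propagation survives when $\Omc_q$ is put into NNF. The remaining points---well-definedness of the $h$-induced contraction, the fact that the absence of self-loops and multi-edges lets the concepts $C_{p(x)}$ faithfully capture tree-CQ satisfaction, and the handling of the universal marker $A_\top$---are routine.
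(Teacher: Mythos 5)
Your proposal follows the paper's own proof essentially step for step: the same contraction\nobreakdash-plus\nobreakdash-$S$ decomposition of a homomorphism in the ``only if'' direction, the same use of $A_{p'(x_0)} \equiv C_{p'(x_0)}$ and of $C_{p'} \sqsubseteq A_{p'}$ with upward propagation of $A_{p'}$, and in the ``if'' direction the same key observation (isolated as $(*)$ in the paper) that $a \in C_{p(x)}^\Imc$ yields a homomorphism of the tree CQ $p(x)$ sending $x$ to $a$, followed by gluing the partial homomorphisms along the anchors.

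What you call the crux, however, is a genuine soft spot---of the paper's argument even more than of yours. The paper adds $\exists r . A_{p'} \sqsubseteq A_{p'}$ only for roles $r$ \emph{used in} $\Omc$, and its proof of the floating-component case simply asserts that, using these CIs and tree-extendedness, $A_{p'}$ reaches \emph{every} constant in the connected component of the witness. That propagation must cross edges of the anonymous trees (and, in the paper's variant, database edges as well) that may be labelled by roles occurring only in $q$, by the fresh roles $r_{p''}$---whose presence in the anonymous trees is in fact \emph{forced} by the CIs $A_{p''} \sqsubseteq \exists r_{p''}.C_{p''}$---by inverses of such roles, or by role names occurring nowhere in $\Omc_q$ at all; for none of these does the stated construction provide a propagation CI. Indeed, if $\Omc$ uses no role names, one can build a tree-extended model of $\Omc_q$ that satisfies $q$ deep inside an anonymous tree whose connecting edge to its constant uses a role foreign to $\Omc_q$, while no disjunct of $\widehat q$ holds in the restriction; so the lemma literally as stated needs a repair, e.g.\ adding propagation CIs for all roles of $\Omc_q$ and their inverses together with restricting tree-extended models to that signature. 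Your version of the propagation is already tighter than the paper's (you climb only from the witness to the root constant of a single tree, so database roles never enter), but your w.l.o.g.\ patch---assuming the roles of $q$ occur in $\Omc$---is still insufficient, precisely because of the $r_{p''}$-edges and foreign roles just mentioned. In short: same approach, executed at least as carefully as the paper, and the step you flag as needing care is exactly the step where the paper's own proof is incomplete.
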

Let \Qmc denote the set of all subqueries $p$ of CQs in $\widehat q$,
that is, $p$ can be obtained from a CQ in $\widehat q$ by dropping
atoms.
%
%
%
A \emph{decorated IOA} for $\Omc_q$ is a triple
$\gamma=(\mn{in},\mn{out},S)$ where $(\mn{in},\mn{out})$ is an IOA
and 
$S$ is a set of pairs $(p,f)$ with $p \in \Qmc$ and
$f:\mn{var}(p) \rightarrow \{1,\dots,k\}$ a total function.  We use
$\gamma^{\mn{in}}$ to denote $\mn{in}$ and likewise for
$\gamma^{\mn{out}}$ and $\gamma^S$. We argue in the appendix that the
number of decorated IOAs is
$2^{|\Omc| \cdot 
  k^{O(|q| \log
    |q|)}}$.

Every database \Dmc and model \Imc of \Dmc and $\Omc_q$ give rise to a
decorated IOA $\gamma _{\Imc,\Dmc}$ for $\Omc_q$ where
\begin{itemize}

\item $\gamma^{\mn{in}}_{\Imc,\Dmc}$ and $\gamma^{\mn{out}}_{\Imc,\Dmc}$ are
  defined as in the previous section and

\item $\gamma^S$ contains all pairs $(p,f)$ with $p \in \Qmc$ and $f$ a
  function from $\mn{var}(p)$ to $\{1,\dots,k\}$ such that there is a
  homomorphism $h$ from $p$ to \Imc with $f(x)=i$ iff
  $L_i(h(x)) \in \Dmc$ for all variables $x \in \mn{var}(p)$.
  
\end{itemize}
For all
subexpressions $s$ of $s_0$, our algorithm computes
$$\Theta(s)=\{ \gamma_{\Imc,\Dmc_s} \mid \Imc \text{ is a tree-extended model of } \Dmc_s \text{
  and } \Omc_q \}.$$ By Lemma~\ref{lem:qhatworksalc} and since every
tree-extended model of $\Dmc_0$ and \Omc can be extended to a
tree-extended model of $\Dmc_0$ and~$\Omc_q$, this implies that, as
desired, $\Dmc_0,\Omc \not\models q$ iff there is a decorated IOA
$\gamma \in \Theta(s_0)$ such that $\gamma^S$ contains no pair $(p,f)$
with $p$ a CQ in $\widehat q$.
%
The sets $\Theta(s)$ are computed as follows:
\begin{enumerate}[align=left]
\item[$\boldsymbol{s = \iota(\Dmc)\text{: }}$] $\Theta(s)=
  \{ \gamma_{\Imc,\Dmc} \mid \Imc$ is a tree-extended model of $\Dmc$ and  $\Omc_q
  \}$. 
\end{enumerate}


\begin{enumerate}[align=left]

\item[$\boldsymbol{s = s_1 \oplus s_2\text{: }}$] $\Theta(s)$ contains
  a decorated IOA $\gamma$ for each pair
  $(\gamma_1, \gamma_2) \in \Theta(s_1) \times \Theta(s_2)$ where, for
  $1 \leq i \leq k$:
%
  $$
    \gamma^{\mn{in}}(i) = \gamma_1^{\mn{in}}(i)
    \cap \gamma_2^{\mn{in}}(i) \quad
    \gamma^{\mn{in}}(i) = \gamma_1^{\mn{out}}(i) \cup
    \gamma_2^{\mn{out}}(i)).
  $$
    and
    $\gamma^S=\gamma_1^S \cup \gamma_2^S \cup \{ (p_1 \cup p_2,f_1 \cup f_2) 
    \mid (p_i,f_i) \in \gamma^S_i$ for $i \in \{1,2\}$, $p_1 \cup 
    p_2 \in \Qmc$ and $\mn{var}(p_1) \cap \mn{var}(p_2) = \emptyset \}$. 

  \item[$\boldsymbol{s = \alpha_{i, j}^r(s')\text{: }}$] $\Theta(s)$
    contains the decorated IOAs $\gamma$ that can be obtained by
    choosing a
    $\widehat\gamma \in \Theta(s')$
    that satisfies Conditions~a) and~b) from
    Section~\ref{sect:alciupper} and then defining
     \begin{itemize}

     \item  $\gamma^{\mn{in}}={\widehat\gamma}^{\mn{in}}$ and
       $\gamma^{\mn{out}}={\widehat\gamma}^{\mn{out}}$;

     \item $\gamma^S$ to be ${\widehat \gamma}^S$ extended with all pairs
       $(p',f)$ such that \mbox{$p' \in \Qmc$} and for some
       $(p,f) \in \widehat S$, $p' $ can be obtained from $p$ by
       adding zero or more atoms $r(x,y)$, subject to the condition
       that $f(x)=i$ and $f(y)=j$.

     \end{itemize}

   \item[$\boldsymbol{s = \rho_{i \rightarrow j}(s')\text{: }}$] for
     each
     $\widehat\gamma \in \Theta(s')$,
     $\Theta(s)$ contains the decorated IOA $\gamma$
     obtained by
     defining
        \begin{itemize}
        \item $\gamma^{\mn{in}}$ and $\gamma^{\mn{out}}$
          according to Conditions~c) to~e) in
          Section~\ref{sect:alciupper};

        \item $\gamma^S$ to contain all pairs $(p,f')$ such that for
          some \mbox{$(p,f) \in \widehat S$}, $f'$ can be obtained by
          starting with $f$ and then setting $f'(x)=j$ whenever
          $f(x)=i$.          
        \end{itemize}
\end{enumerate}
Note that \Qmc contains $A_\top(x)$ for every variable 
$x$ in $q$ and thus we can start with mapping subqueries
that contain only a single variable $x$ (and a dummy atom)
in the $s = \iota(\Dmc)$ case.
The algorithm achieves the intended goal.
\begin{restatable}{lemma}{alciucqcorrectnesslem}
  \label{lem:alci-ucq-correctness}
  For all subexpressions $s$ of $s_0$,
  $\Theta(s)=\{ \gamma_{\Imc,\Dmc_s} \mid \Imc \text{ is a tree-extended model of }
  \Dmc_s \text{ and } \Omc_q \}$.
\end{restatable}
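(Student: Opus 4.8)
The plan is to prove the stated equality by structural induction on the subexpression $s$, establishing the two inclusions $\subseteq$ and $\supseteq$ in each case. Throughout I would handle the $\mn{in}$ and $\mn{out}$ components exactly as in the proof of Lemma~\ref{lem:1exp-correctness} and concentrate the genuinely new work on the decoration $\gamma^S$ and on the restriction to tree-extended models. The base case $s=\iota(\Dmc)$ is immediate, since $\Theta(s)$ is defined to be precisely $\{\gamma_{\Imc,\Dmc}\mid \Imc \text{ a tree-extended model of } \Dmc \text{ and } \Omc_q\}$; the only thing to check is the shape of $\gamma^S_{\Imc,\Dmc}$: as $\Dmc$ has the single labeled constant $c$, a pair $(p,f)$ lies in it exactly when $p$ contains no role atom (there are no self-loops in a tree-extended model), $f$ is constant equal to the label of $c$, and every concept atom of $p$ holds at $c$. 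The atoms $A_\top(x)$ make the relevant single-variable queries available as starting points, as noted before the lemma.

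For the inductive step I would, in each case, match every tree-extended model on one side with a tree-extended model (or pair of models) on the other that induces the same decorated IOA, invoking the induction hypothesis on $s'$ (resp.\ $s_1,s_2$). For $s=s_1\oplus s_2$ the key structural fact is that, since $\Dmc_s$ is a disjoint union with no edges across the two parts and every tree of a tree-extended model contains exactly one constant, any tree-extended model of $\Dmc_s$ splits uniquely into tree-extended models of $\Dmc_{s_1}$ and $\Dmc_{s_2}$, and conversely disjoint unions of such models are tree-extended. The $\mn{in}$/$\mn{out}$ parts then combine by intersection/union as before, and for $\gamma^S$ one uses that a homomorphism from a subquery sends each connected component into a single part, so $p$ decomposes as $p_1\cup p_2$ on disjoint variable sets with $p_\ell$ mapped into $\Dmc_{s_\ell}$; this is exactly the stated recurrence. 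The relabeling case $s=\rho_{i\to j}(s')$ is the easiest: the interpretation and its types are unchanged and only the label assignment is modified, so tree-extendedness is preserved verbatim, the $\mn{in}$/$\mn{out}$ indices shift by Conditions~c) to~e), and $\gamma^S$ is obtained by the stated renaming of $f$-values.

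The main obstacle is the role-insertion case $s=\alpha^r_{i,j}(s')$, where the underlying set of models genuinely changes. I would relate a tree-extended model $\Imc'$ of $\Dmc_{s'}$ to the interpretation $\Imc$ obtained by adding all $r$-edges from $i$- to $j$-labeled constants, which is a tree-extended model of $\Dmc_s$. For $\mn{in}$ and $\mn{out}$ I reuse the analysis of Lemma~\ref{lem:1exp-correctness}: these depend on $\Imc$ only through the types of labeled constants, and the crucial observation is that, because $\mn{cl}(\Omc_q)$ is closed under NNF-negation, the absence of an existential $\exists r.C$ at a constant manifests as the presence of the universal $\forall r.\neg C$. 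Conditions~a) and~b) therefore guarantee that for every $\widehat\gamma$ they retain, inserting the new edges neither falsifies a universal nor creates a new existential at any labeled constant, so all types—and hence $\mn{in}$ and $\mn{out}$—are preserved. The decoration $\gamma^S$, by contrast, records homomorphisms into the model itself and thus grows by precisely those homomorphisms that traverse a new edge: deleting from any homomorphism into $\Imc$ all atoms $r(x,y)$ with $f(x)=i$ and $f(y)=j$ yields a homomorphism into $\Imc'$, and conversely such atoms can always be re-added since their images are $i$- and $j$-labeled and hence $r$-linked in $\Dmc_s$. For the converse direction, starting from a tree-extended model $\Imc$ of $\Dmc_s$ and needing a witness $\widehat\gamma\in\Theta(s')$, I would delete the inserted edges and repair each existential that was supported only by a deleted edge by attaching a fresh tree successor realizing the type of the lost witness, using the tree-model property of \ALCI. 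The delicate point, which I expect to require the most care, is verifying that this surgery leaves the types of all other constants and all pairs already in $\gamma^S$ intact—ensured by only ever adding unlabeled tree material to which no query variable is mapped—while restoring tree-extendedness with respect to $\Dmc_{s'}$.
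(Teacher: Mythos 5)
Your proposal is correct and follows essentially the same route as the paper's proof: structural induction with both inclusions, splitting/recombining tree-extended models in the $\oplus$ case, preserving types under role insertion via NNF-closure of $\mn{cl}(\Omc_q)$ and Conditions~a) and~b), characterizing the growth of $\gamma^S$ by homomorphisms traversing newly inserted edges, and repairing the reverse $\alpha^r_{i,j}$ direction by deleting inserted edges and attaching fresh tree material realizing the types of lost witnesses. The paper's appendix proof carries out exactly this plan (including your ``delicate point'': the fresh, unlabeled tree witnesses cannot affect $\gamma^S$ since recorded homomorphisms map all variables to labeled constants), so no substantive difference remains.
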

The running time is analyzed in the appendix.

\subsection{\GFtwo with AQs}

%
%
\begin{restatable}{theorem}{gftwoaqtwoexpthm}
  \label{thm:gf2aq2exp}
  In $(\text{GF}_2, \text{AQ})$, an OMQ $Q =(\Omc,q)$ can be evaluated
  on a database \Dmc of cliquewidth $k$ in time
  $2^{2^{{O(|\Omc|)}}\cdot k^2} \cdot |\Dmc|$. 
\end{restatable}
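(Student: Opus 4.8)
The plan is to follow the same high-level strategy as for Theorem~\ref{thm:alc1exp}: reduce AQ evaluation to database satisfiability, and decide the latter by a single bottom-up pass over the $k$-expression $s_0$, maintaining for each subexpression $s$ a set $\Theta(s)$ of ``abstract states'' that describe exactly the type-assignments which extend to models of $\Dmc_s$ and $\Omc$. The reduction to satisfiability is as before: to decide $\Dmc,\Omc \models A(c)$ we test whether $\Dmc \cup \{\overline A(c)\}$ is unsatisfiable w.r.t.\ $\Omc$ extended by the $\text{GF}_2$-sentence $\forall x\,(x = x \rightarrow (\overline A(x) \leftrightarrow \neg A(x)))$, which uses only a guarded equality and increases $|\Omc|$ by a constant. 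A first preprocessing step is to bring $\Omc$ into Scott normal form for $\text{GF}_2$ (a linear blow-up), so that it becomes a conjunction of guarded universal sentences with quantifier-free matrix and guarded ``$\forall\exists$'' sentences; this turns consistency of a local configuration into a purely combinatorial check.

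The core combinatorial objects are \emph{$1$-types} (maximal consistent sets of subformulas of $\Omc$ with one free variable) and \emph{$2$-configurations} $\tau = (t,t',R_\rightarrow,R_\leftarrow)$, where $t,t'$ are $1$-types and $R_\rightarrow,R_\leftarrow$ are sets of roles linking an element of type $t$ to one of type $t'$ and back; there are at most $2^{O(|\Omc|)}$ of each. Using the forest-model property of the guarded fragment, I would first establish the key characterization lemma: $\Dmc$ is satisfiable w.r.t.\ $\Omc$ iff there is an assignment of \emph{realizable} $1$-types to the constants of $\Dmc$ such that for every pair $a,b$ the induced $2$-configuration is realizable, where ``realizable'' means that the type or configuration occurs in some model of $\Omc$. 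The point is that, by guardedness, all existential requirements of a constant can be met by freshly grafted tree-shaped anonymous elements, so global satisfiability reduces to the \emph{local} consistency of the configurations induced on database edges; the sets of realizable $1$-types and $2$-configurations are precomputed in time $2^{O(|\Omc|)}$ by type elimination.

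The bottom-up algorithm then maintains, for each subexpression $s$, the set of abstract states $\sigma_{\Imc,\Dmc_s}$ that record, for each label $i$, the set $T_i$ of $1$-types realized by $i$-labeled constants and, for each ordered pair $(i,j)$ of distinct labels, the set $C_{ij}$ of $2$-configurations realized between $i$-labeled and $j$-labeled constants, keeping only states all of whose recorded configurations are realizable. The transitions mirror those in Section~\ref{sect:alciupper}: $\iota(\Dmc)$ picks a single realizable $1$-type; $\oplus$ takes unions of the $T_i$ and of the $C_{ij}$ and adds the ``no-edge'' configurations $(t,t',\emptyset,\emptyset)$ for the newly adjacent cross pairs; $\alpha^r_{i,j}$ adds $r$ to the forward roles of every configuration in $C_{ij}$ and to the backward roles of every configuration in $C_{ji}$, discarding the state if this produces a non-realizable configuration; and $\rho_{i \rightarrow j}$ merges the $i$-entries into the $j$-entries. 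A crucial but pleasant observation is that no within-label configurations need to be tracked: once two constants share a label they are never linked by a future role insertion, so the configuration between them is frozen and was already verified to be realizable. Finally $\Dmc_0$ is satisfiable iff $\Theta(s_0) \neq \emptyset$. Since there are $2^{O(|\Omc|)}$ configurations and we store a subset of them for each of the $k^2$ ordered label pairs, an abstract state has $2^{O(|\Omc|)} \cdot k^2$ bits, so $|\Theta(s)| \le 2^{2^{O(|\Omc|)}\cdot k^2}$; each of the $O(|\Dmc|)$ transitions is computed in time polynomial in this quantity, yielding the claimed bound $2^{2^{O(|\Omc|)}\cdot k^2}\cdot|\Dmc|$.

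I expect the main obstacle to be the characterization lemma, specifically its ``if'' direction: assembling a single model of $\Dmc$ and $\Omc$ from a locally consistent type-assignment. This requires showing that the anonymous tree-parts witnessing the existential requirements of distinct constants can be grafted simultaneously without interfering with one another or with the database edges, which is exactly where the guardedness of $\text{GF}_2$ is essential; the two-variable restriction without guards would make such a local-to-global argument fail.
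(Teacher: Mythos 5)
Your algorithm is essentially the paper's: the same reduction to satisfiability, the same bottom-up pass, and the same state structure (your 2-configurations $(t,t',R_\rightarrow,R_\leftarrow)$ are exactly the paper's triples $(t_1,M,t_2)$ with $M$ a multi-edge, stored per ordered pair of distinct labels), with the same transitions and the same counting. However, there is a genuine gap: your states carry nothing that synchronizes the \emph{sentences} (the 0-type) across constants, and your ``crucial but pleasant observation'' is justified incorrectly. It is true that two constants sharing a label are never linked by a later role insertion, but it is \emph{false} that their configuration ``was already verified to be realizable'': two constants that first meet at a disjoint union \emph{under the same label} have never had any joint check performed on them, and never will. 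Concretely, let $\Omc = \{\forall x\,(x{=}x \rightarrow (A(x) \rightarrow \forall y\,(y{=}y \rightarrow \neg B(y))))\}$ (``if some element is an $A$, then no element is a $B$'') and consider the expression $s_0 = \iota(\{A(a),L_1(a)\}) \oplus \iota(\{B(b),L_1(b)\})$. The database $\{A(a),B(b)\}$ is unsatisfiable w.r.t.\ $\Omc$, but your algorithm accepts it: each leaf contributes an individually realizable 1-type, the union step creates no cross pair with \emph{distinct} labels, within-label pairs are untracked, so no realizability check ever fires and $\Theta(s_0) \neq \emptyset$. The defect is visible in the types themselves: the 1-type of $a$ contains the sentential subformula $\forall y\,(y{=}y \rightarrow \neg B(y))$ while the 1-type of $b$ contains its negation; each is realizable, but they cannot coexist in a single model, and your local checks never compare them. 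Note that your characterization lemma (which quantifies over \emph{all} pairs, including non-adjacent and same-label ones) is fine; the algorithm as described simply fails to implement it.

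The repair is precisely what the paper builds into its model abstractions: every abstraction records a 0-type $\gamma^0$ (the set of sentential subformulas of $\Omc$ that hold), all 1-types occurring anywhere in a state are required to contain this common 0-type, and the $\oplus$ rule only combines states with $\gamma_1^0 = \gamma_2^0$. Equivalently, you could check coexistence of every pair of 1-types coming from the two sides of $\oplus$ for \emph{all} label pairs, including equal labels, without storing these frozen configurations. With this fix your argument goes through and coincides with the paper's proof; your remaining deviations --- Scott normal form preprocessing, and organizing correctness around a local-to-global characterization lemma rather than the paper's inductive invariant that $\Theta(s)$ is exactly the set of abstractions $\gamma_{\Imc,\Dmc_s}$ of actual models --- are cosmetic, and the complexity analysis is unaffected.
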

As in Section~\ref{sect:alciupper}, we may concentrate on the
satisfiability
of databases w.r.t.\ ontologies. To decide whether
$\Dmc,\Omc \models A(\bar c)$, we then simply check whether
$\Dmc \cup \{\overline{A}(\bar c)\}$ is unsatisfiable w.r.t.\ the ontology
$\Omc \cup
\{
\forall \bar x \, (\overline{A}(\bar x) \rightarrow \neg {A}(\bar x))
\}$. 
Note that this also captures `binary AQs' $A(\bar
c)$. 

Assume that we are given as input a
GF$_2$-ontology \Omc, a database $\Dmc_0$, and a $k$-expression
$s_0$ that generates $\Dmc_0$.  For $i \in
\{1,2\}$, we use
$\mn{cl}_i(\Omc)$ to denote the set of all subformulas of \Omc with
at most $i$ free variables, closed under single negation. For
$i=2$, we additionally assume that
$\mn{cl}_i(\Omc)$ contains all formulas $r(x,y)$ and $r(y,x)$ with
$r$ a role name that occurs in
$\Dmc_0$.\footnote{Recall that $x$ and $y$ are the two fixed
  (and distinct) variables admitted in GF$_2$.}
An
\emph{$i$-type for \Omc} is a set $t \subseteq \mn{cl}_i(\Omc)$ such
that for some model \Imc of \Omc and some
$\bar d \in (\Delta^\Imc)^i$,
\begin{align*}
  t &=  \{ \varphi(\bar x) \in \mn{cl}_i(\Omc) \mid \Imc
                             \models \vp(\bar e), \\
  & \qquad\qquad \bar e \text{ a subtuple of } \bar d \text{ of length
    } |\bar x| \in \{0,\dots,i\}\}.
\end{align*}
%
We then also denote $t$ with $\mn{tp}^i_\Imc(\bar d)$ and
use $\mn{TP}_i$
to denote the set of all $i$-types for \Omc. For
$t_1,t_2 \in \mn{TP}_1$ and $t \in \mn{TP}_2$, we say that $t_1$ and
$t_2$ \emph{are compatible with}~$t$ and write
$t_1 \rightsquigarrow_{t} t_2$ if there is a model \Imc of \Omc and
$d_1,d_2 \in \Delta^\Imc$ such that $\mn{tp}_\Imc^1(d_i)=t_i$ for $i \in
\{1,2\}$ and \mbox{$\mn{tp}_\Imc^2(d_1,d_2)=t$}. Note that every
1-type contains a 0-type and every 2-type contans a 0-type and two
1-types, identified by the subformulas that use free variable $x$ and
$y$, respectively.

We again traverse the $k$-expression $s_0$ bottom-up, computing for
each subexpression~$s$ a representation of the models of the database
$\Dmc_s$ and ontology \Omc. This representation, however,
  is different from the ones in the previous sections and does not use
  IOAs. The reason is as follows. A concrete model \Imc of $\Dmc_s$
  and \Omc assigns a 1-type to every constant in $\Dmc_s$ and thus a
  set $T_i$ of 1-types to every label $i$. In \ALCI, we may represent
  each set $T_i$ by the sets
  $\{ \forall r . C \mid \exists t \in T_i: \forall r . C \in t\}$ and
  $\{ C \in \mn{cl}^*(\Omc) \mid \forall t \in T_i: C \in t\}$ of an
  IOA because they contain all information that we need when using the
  $\alpha^r_{i,j}$ operator to introduce new edges. The virtue is that
  there are only single exponentially many IOAs. Now consider the
  GF$_2$-ontology \Omc that contains the sentence
  $$
      \forall x \forall y \, r(x,y) \rightarrow \bigwedge_{1 \leq i
        \leq n} (A_i(x) \leftrightarrow A_i(y)).
  $$
  The 1-types in $T_i$ may contain any combination of the formulas
  $A_1(x),\dots,A_n(x)$ and for dealing with the $\alpha^r_{i,j}$
  operator we clearly have to know all of them. However, there are
  double exponentially many sets of such combinations, and this
  results in a double exponential upper bound.  

We now give
details of how we represent models.  A \emph{multi-edge} is a set of
atoms $r(x,y)$ and $r(y,x)$. Let \Mmc denote the set of all
multi-edges that only use role names that occur in $\Dmc_0$.  For a
database \Dmc and distinct $a,b \in \mn{adom}(\Dmc)$, we use
$M_\Dmc(a,b)$ to denote the multi-edge that consists of all atoms
$r(x,y)$ such that \Dmc contains the fact $r(a,b)$, and $r(y,x)$ such
that $\Dmc$ contains the fact $r(b,a)$.

A \emph{model abstraction} is a pair $\gamma=(T,E)$ where
\begin{itemize}


\item $T: \{1,\dots,k\} \rightarrow 2^{\mn{TP}_1}$ is a partial
  function that associates each label with a set of 1-types;

\item
  $E: \{1,\dots,k\}^2 \rightarrow 2^{\mn{TP}_1 \times \Mmc \times
    \mn{TP}_1}$ is a partial function that associates each pair of labels
  $(i,j)$, where $i \neq j$, with a set of triples $(t_1,M,t_2)$ from
  $\mn{TP}_1 \times \Mmc \times \mn{TP}_1$
  
\end{itemize}
such that all 1-types in the ranges of $T$ and $E$ contain exactly the
same subformulas of \Omc that are sentences, that is, they contain the
same 0-type. We use $\gamma^0$ to denote this 0-type, $\gamma^T$ to
denote $T$, and likewise for~$\gamma^E$. We use
  multi-edges in model abstractions rather than 2-types as the former
  allow us to add edges later when the operator $\alpha^r_{i,j}$ is
  applied whereas the latter fix all edges from the beginning.

Every database \Dmc and model \Imc of
\Dmc give rise to a model abstraction $\gamma_{\Imc,\Dmc}$ defined by
setting, for \mbox{$1 \leq i,j \leq k$},
\begin{align*}
  \gamma^T_{\Imc,\Dmc}(i) &= \{ \mn{tp}^1_\Imc(c) \mid L_i(c) \in \Dmc \}
                    \\
  \gamma^E_{\Imc,\Dmc}(i,j) &= \{ (\mn{tp}^1_\Imc(c),
                              M_\Dmc(c,c'),\mn{tp}^1_\Imc(c')) \mid L_i(c) \in \Dmc\\
  & \qquad\qquad\qquad  \text{ and } L_j(c') \in \Dmc
                            \}. 
\end{align*}
%
%
Our algorithm computes, for each
subexpression $s$ of $s_0$,
$$\Theta(s)=\{ \gamma_{\Imc,\Dmc_s} \mid \Imc \text{ model of } \Dmc_s \text{
  and } \Omc \}.
$$
It then remains to check whether $\Theta(s_0)$ is non-empty.
The sets $\Theta(s)$ are computed as follows:
\begin{enumerate}[align=left]
\item[$\boldsymbol{s = \iota(\Dmc)\text{: }}$] Set $\Theta(s)=
  \{ \gamma_{\Imc,\Dmc} \mid \Imc \text{ model of } \Dmc \text{ and } \Omc
  \}$. 
\end{enumerate}


\begin{enumerate}[align=left]

\item[$\boldsymbol{s = s_1 \oplus s_2\text{: }}$] $\Theta(s)$ contains
  a model abstraction $\gamma=(T,F)$ for each pair of model abstractions
  $(\gamma_1, \gamma_2) \in \Theta(s_1) \times \Theta(s_2)$ with
  $\gamma_1^0 = \gamma_2^0$  defined by
  setting, for \mbox{$1 \leq i,j \leq k$},
  \begin{alignat*}{2}
    \gamma^T(i) &=\; &&\gamma^T_1(i) \cup \gamma^T_2(i) 
                      \\
    \gamma^E(i,j) &= &&\gamma^E_1(i,j) \cup \gamma^E_2(i,j) \, \cup\\
                 & &&\{ (t_1,\emptyset,t_2) \mid t_1 \in \gamma^T_{1}(i) 
                   \text{ and }
                    t_2 \in \gamma^T_{2}(j) \} \, \cup \\
                 & &&\{ (t_1,\emptyset,t_2) \mid t_1 \in \gamma^T_{2}(i) 
                   \text{ and }
                    t_2 \in \gamma^T_{1}(j) \}. 
  \end{alignat*}

\item[$\boldsymbol{s = \alpha_{i, j}^r(s')\text{: }}$] $\Theta(s)$
  contains a model abstraction $\gamma$ for every model
  abstraction $\widehat{\gamma} \in \Theta(s')$ such  that
  \begin{enumerate}


  \item[($*$)] for all $(t_1,M,t_2) \in {\widehat{\gamma}}^E(i,j)$, 
    there is a
    $t \in \mn{TP}_2$ such that $M \cup \{ r(x,y) \} \subseteq t$ and
    $t_1 \rightsquigarrow_t t_2$.

    
  \end{enumerate}
  The model abstraction $\gamma$ is defined like $\widehat{\gamma}$
  except that
  \begin{align*}
    &\text{ for all } (t_1, M, t_2) \in {\widehat{\gamma}}^E(i,j), \text{ add } r(x,y) \text{ to } M \text{, and }\\
    &\text{ for all } (t_1, M, t_2) \in {\widehat{\gamma}}^E(j,i), \text{ add } r(y,x) \text{ to } M\text{.}
  \end{align*}

\item[$\boldsymbol{s = \rho_{i \rightarrow j}(s')\text{: }}$] for each
  ${\widehat{\gamma}} \in \Theta(s')$, $\Theta(s)$ contains the model abstraction
  $\gamma$ defined like $\widehat{\gamma}$ except that,
  for \mbox{$1 \leq \ell \leq k$},
  \begin{alignat*}{2}
  \gamma^T(i) &= \emptyset  & \qquad  \gamma^E(i,\ell) &= \emptyset \\
  \gamma^T(j) &= {\widehat{\gamma}}^T(j) \cup {\widehat{\gamma}}^T(i) & \gamma^E(\ell,i) &= \emptyset\\
  \gamma^E(j,\ell) &= {\widehat{\gamma}}^E(j,\ell) \cup {\widehat{\gamma}}^E(i,\ell) \\
  \gamma^E(\ell,j) &= {\widehat{\gamma}}^E(\ell,j) \cup {\widehat{\gamma}}^E(\ell,i). 
  \end{alignat*}
\end{enumerate}
The algorithm achieves the intended goal and runs within the stated
time bounds.
\begin{restatable}{lemma}{gftwoaqalgocorrectnesslem}
  \label{lem:1exp_correctnes_gf2_aq}
  For all subexpressions $s$ of $s_0$,
  $\Theta(s)=\{ \gamma_{\Imc,\Dmc_s} \mid \Imc \text{ model of }
  \Dmc_s \text{ and } \Omc \}$.
\end{restatable}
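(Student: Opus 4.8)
The plan is to prove both inclusions by structural induction on the subexpression $s$, exploiting that $\gamma_{\Imc,\Dmc_s}$ depends only on the 1-types of the constants and the multi-edges and 2-types between them, and not on how existential requirements are witnessed outside $\mn{adom}(\Dmc_s)$. The base case $s=\iota(\Dmc)$ holds by the very definition of $\Theta(\iota(\Dmc))$, so the content is in the three inductive cases, treated separately for completeness (every $\gamma_{\Imc,\Dmc_s}$ is computed) and soundness (every computed abstraction is realized).

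The completeness inclusion is uniform and routine. A model $\Imc$ of $\Dmc_s$ and $\Omc$ is also a model of each smaller database occurring in the rule ($\Dmc_{s'}$, resp.\ $\Dmc_{s_1}$ and $\Dmc_{s_2}$) together with $\Omc$, since these are subsets of $\Dmc_s$; hence the induced abstraction(s) lie in the relevant $\Theta$ by the induction hypothesis, and one reads off directly from the definition of $\gamma_{\Imc,\Dmc_s}$ that the combination or update rule reproduces $\gamma_{\Imc,\Dmc_s}$. For $s=s_1\oplus s_2$ this uses that $\mn{adom}(\Dmc_{s_1})$ and $\mn{adom}(\Dmc_{s_2})$ are disjoint and carry no connecting edges, so a cross pair $(c,c')$ contributes precisely the triple $(\mn{tp}^1_\Imc(c),\emptyset,\mn{tp}^1_\Imc(c'))$, matching the two product terms of the rule; the precondition $\gamma_1^0=\gamma_2^0$ is automatic because both abstractions arise from the same $\Imc$. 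For $s=\alpha_{i,j}^r(s')$ one verifies Condition~$(*)$ by taking $t=\mn{tp}^2_\Imc(c,c')$ for a newly linked pair and checks that the multi-edges grow exactly by the inserted atoms. For $s=\rho_{i\to j}(s')$ it is pure bookkeeping about which constants now carry label $j$.

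For soundness, the relabeling case is again bookkeeping, and for $s=s_1\oplus s_2$ I would take the models $\Imc_1,\Imc_2$ supplied by the induction hypothesis and form their disjoint union $\Imc=\Imc_1\uplus\Imc_2$. The key point is that $\Imc\models\Omc$: guarded sentences are preserved under disjoint unions, since every guarded tuple lies inside a single component and guarded formulas are local to the connected component of their free tuple. The hypothesis $\gamma_1^0=\gamma_2^0$ guarantees that the two components agree on all sentences in $\mn{cl}_2(\Omc)$, so $\Imc$ realizes a single 0-type and the 1-types contributed by the two parts are mutually consistent; locality also gives $\mn{tp}^1_\Imc(c)=\mn{tp}^1_{\Imc_1}(c)$ for $c\in\Delta^{\Imc_1}$, whence $\gamma_{\Imc,\Dmc_s}$ is exactly the abstraction produced by the rule (cross pairs contributing empty multi-edges).

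The main obstacle is the soundness of the role-insertion case $s=\alpha_{i,j}^r(s')$. Given $\widehat\gamma\in\Theta(s')$ satisfying Condition~$(*)$, the induction hypothesis yields a model $\Imc'$ of $\Dmc_{s'}$ and $\Omc$ with $\gamma_{\Imc',\Dmc_{s'}}=\widehat\gamma$, and I must turn it into a model $\Imc$ of $\Dmc_s$, i.e.\ add an $r$-edge between every $i$-labeled and every $j$-labeled constant, without falsifying $\Omc$ and while preserving all 1-types. The danger is that a new edge from $c$ (type $t_1$) to $c'$ (type $t_2$) could violate a universal conjunct $\forall y\,(r(x,y)\to\phi)$ of $t_1$, or create an existential witness that $t_1$ forbids and thus change the type of $c$. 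Condition~$(*)$ rules this out pair by pair: the 2-type $t\supseteq M\cup\{r(x,y)\}$ with $t_1\rightsquigarrow_t t_2$ certifies a model in which an $r$-edge between a $t_1$- and a $t_2$-element is consistent with $\Omc$, which forces $\phi\in t_2$ for every such universal conjunct and $\psi\notin t_2$ for every existential $\exists y\,(r(x,y)\wedge\psi)$ absent from $t_1$. The crucial extra ingredient is that we work in a \emph{two-variable} logic: no formula can constrain two distinct $r$-neighbours of $c$ at once, so these pairwise guarantees never interfere, and they compose into a globally consistent choice of 2-types on $\mn{adom}(\Dmc_s)$. I would then build $\Imc$ by installing on the active domain the 1-types of $\Imc'$ together with these 2-types and re-attaching to each constant fresh anonymous tree-like witness structures for its existential requirements, which is possible because each 1-type is realizable and the guarded fragment has a tree-model property. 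A final comparison shows $\gamma_{\Imc,\Dmc_s}=\gamma$, the 1-types being unchanged and the multi-edges extended by $r(x,y)$ on $\widehat\gamma^E(i,j)$ and by $r(y,x)$ on $\widehat\gamma^E(j,i)$ in the prescribed orientation, completing the induction.
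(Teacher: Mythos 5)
Your proposal is correct and follows essentially the same route as the paper's proof: structural induction on the $k$-expression, both inclusions argued per case, soundness of $\oplus$ via disjoint union justified by the shared 0-type $\gamma_1^0=\gamma_2^0$ and locality of guarded formulas, and soundness of role insertion reduced to the pairwise compatibility condition $(*)$. The only divergence is in that last case: the paper keeps the model $\Imc'$ supplied by the induction hypothesis and merely adds the new $r$-edges, asserting that all 1-types are preserved, whereas you rebuild a model from the 1-types and chosen 2-types with fresh tree-shaped witnesses; both constructions work, and your explicit appeal to the two-variable non-interference of the pairwise guarantees spells out precisely the step that the paper leaves implicit.
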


\section{Lower Bounds}
\label{sect:lower}


We prove that for $(\text{GF}_2, \text{AQ})$ and $(\ALC,
\text{CQ})$, the double exponential running time in $|\Omc|$ cannot be
improved unless the exponential time hypothesis (ETH) is false. The
ETH states that if $\delta$ is the infimum of real numbers such that
3SAT can be solved in time $O(2^{\delta n})$, where $n$ is the number
of variables in the input formula, then $\delta > 0$.
The following are the two main results of this section.
\begin{restatable}{theorem}{twoexplowerthm}
\label{thm:2explower}
If the ETH is true, then there is no algorithm with running time
$2^{2^{o(|Q|)}} \cdot \mn{poly}(|\Dmc|)$ for OMQ
evaluation~in
\begin{enumerate}
\item $(\text{GF}_2, \text{AQ})$ on databases of cliquewidth 2;
\item $(\ALC, \text{CQ})$ on databases of cliquewidth 3.
\end{enumerate}
\end{restatable}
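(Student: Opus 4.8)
The plan is to reduce 3SAT to OMQ evaluation in such a way that the produced OMQ is \emph{exponentially more succinct} than the formula it encodes. From a 3CNF $\phi$ with $n$ variables (and hence at most $O(n^3)$ clauses) I would build an OMQ $Q$ with $|Q| = O(\log n)$, together with a database \Dmc of size $\mn{poly}(n)$ and the prescribed cliquewidth, such that $\Dmc \models Q$ iff $\phi$ is \emph{un}satisfiable. The reason for this parameter choice is the following calculation: an algorithm running in time $2^{2^{o(|Q|)}} \cdot \mn{poly}(|\Dmc|)$ would then decide 3SAT in time $2^{2^{o(\log n)}} \cdot \mn{poly}(n) = 2^{n^{o(1)}} = 2^{o(n)}$, contradicting the ETH. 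The reduction itself runs in polynomial time, so this would indeed refute the ETH.

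For part~1 the central device is to encode variable indices in binary using $\log n$ fresh concept names $B_1,\dots,B_{\log n}$, exploiting that \GFtwo can compare two binary-encoded indices by a formula of size $O(\log n)$ that reuses the two variables $x$ and $y$. I take \Dmc to be the complete bipartite graph, under a single role $r$, between $n$ variable-constants and the clause-constants; this graph has cliquewidth~$2$, since one may introduce the variable-constants under one label, the clause-constants under the other, and apply a single role-insertion $\alpha_{1,2}^{r}$. Each variable-constant carries the bits of its index, and each clause-constant carries, for each of its three literals, the literal's index bits and a sign bit. The ontology first guesses a truth value per variable via $\top \sqsubseteq \mn{True} \sqcup \mn{False}$, and then a nested \GFtwo formula of size $O(\log n)$ expresses, for a clause $y$, that some literal is satisfied by the unique adjacent variable-constant whose index matches that literal; a clause with no satisfied literal forces a distinguished constant $c$, adjacent to all clauses, into a derived concept $\mn{Bad}$. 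Since the restriction of any model to the variable-constants is an arbitrary truth assignment, and the satisfaction of each clause is determined by it, $\mn{Bad}$ holds at $c$ in all models iff every assignment leaves some clause unsatisfied; thus the AQ $\mn{Bad}(x)$ gives $\Dmc \models Q(c)$ iff $\phi$ is unsatisfiable.

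For part~2 the same logical content must be realised in \ALC with a CQ, and the difficulty is that a uniform \ALC concept of size $O(\log n)$ cannot read a clause's literal bits and match them against those of a successor variable-constant---which is precisely what \GFtwo's variable reuse achieved. The plan is to shift this comparison into the query: the CQ, of size $O(\log n)$, is shaped as a chain of per-bit gadgets that, matched homomorphically into a model, pins down for a clause and a literal the adjacent variable-constant of matching index whose guessed truth value \emph{falsifies} the literal, so that the Boolean CQ $q$ holds in a model exactly when that model's assignment leaves some clause unsatisfied. The \ALC ontology then only has to guess truth values and stays of size $O(\log n)$; the extra label (cliquewidth~$3$ in place of~$2$) is spent on routing these bit-gadgets so that the required incidence structure remains producible by a $3$-expression. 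As before, $\Dmc \models Q$ iff $q$ holds in every model iff $\phi$ is unsatisfiable.

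The step I expect to be the main obstacle is the succinct, cliquewidth-faithful realisation of the index matching in both settings. On the \GFtwo side one must verify that a clause-constant, although adjacent to \emph{all} $n$ variable-constants, can still single out exactly its three relevant variables purely through the $O(\log n)$ bit comparisons, so that clause satisfaction is captured faithfully and no spurious witnesses arise. On the \ALC-with-CQ side the delicate point is to design a CQ of size $O(\log n)$ whose homomorphic images correspond precisely to falsified clauses while the underlying database remains of cliquewidth~$3$. Once these gadgets are verified, correctness of the reduction and the ETH calculation above are routine.
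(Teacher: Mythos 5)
Your high-level strategy is sound, and in one respect cleaner than the paper's: by encoding variable indices in binary you get $|Q| = O(\log n)$ with $n$ the number of \emph{variables}, so the contradiction $2^{2^{o(\log n)}} \cdot \mn{poly}(n) = 2^{o(n)}$ uses the ETH directly, whereas the paper reduces 3SAT to \textsc{List Coloring} with $O(m)$ colors and therefore needs the sparsification lemma to rule out $2^{o(m)}$-time algorithms in the number of clauses $m$. The genuine gap is in the reduction itself, and it is not the mere verification detail you flag but a structural flaw: OMQ evaluation has certain-answer semantics, so $\Dmc \models Q(c)$ means $\mn{Bad}(c)$ holds in \emph{every} model of \Dmc and \Omc, and models may contain arbitrary additional facts and additional elements. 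Your ontology derives $\mn{Bad}$ from the \emph{absence} of a satisfied literal, i.e., from a negated existential. If $\phi$ is unsatisfiable, a model can nevertheless add an anonymous element that is $r$-adjacent to a clause constant and carries exactly the bit pattern and truth value ``satisfying'' one of its literals (or add bit-concept facts to a variable constant so that it impersonates a different index); in such a model no clause is flagged, $\mn{Bad}(c)$ fails, and your reduction reports ``satisfiable'' for an unsatisfiable $\phi$. Your sentence claiming that ``the satisfaction of each clause is determined by'' the assignment is false in arbitrary models, and no formula can confine witnesses to database constants. The repair is to invert the detection: protect the bit encoding with two concept names per bit plus disjointness axioms (so every model agrees with the database on every bit of every constant), and derive, for each literal position $j$, a marker $F_j$ at a clause from the \emph{existence} of an adjacent index-matching variable whose value \emph{falsifies} literal $j$; then $F_1 \sqcap F_2 \sqcap F_3 \sqsubseteq \mn{Bad}$. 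Extra facts can then only create more defects, never hide one. This robustness is exactly what the paper obtains for free by routing through \textsc{List Coloring}, where the violated constraint (two adjacent vertices of equal color) is already a positive condition over database edges.

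For Point~2 your text is a plan rather than a proof, and what is missing is the heart of the matter. A CQ has no disjunction, so ``bit $i$ of $x_j$ equals bit $i$ of literal $j$ of $y$'' cannot be written as a small conjunction of atoms; one needs per-bit gadgets that admit exactly two homomorphic matchings, one per bit value. In the paper's corresponding construction this requires auxiliary database elements (``copies'' whose bit patterns the ontology forces to be complementary), $r$-self-loops used to park the gadget's path variables, and a delicate argument that the countermodel built from a satisfying assignment admits \emph{no} unintended homomorphism of the gadget chain; one must also verify that the enriched database still has cliquewidth~3. Your setting is strictly harder than the paper's (three simultaneous index matchings plus sign checks per clause, rather than a single color-equality check per edge), so this step cannot be treated as routine; without the gadget construction and the accompanying no-spurious-homomorphism argument, Point~2 remains unproven.
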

Note that OMQ evaluation in $(\text{GF}_2, \text{AQ})$ and 
$(\ALC, \text{CQ})$ are only \ExpTime-complete in combined
complexity on unrestricted databases. The former is folklore and the
latter is proved in \cite{DBLP:conf/dlog/Lutz08}.  Thus,
Theorem~\ref{thm:2explower} implies that, in these two OMQ languages,
one has to pay for the running time to be polynomial in $|\Dmc|$ on
databases of bounded cliquewidth by an exponential increase in the
running time in $|Q|$.

We now prove Theorem~\ref{thm:2explower}.
A major contribution of the original paper bringing forward the ETH is
the sparsification lemma, which implies the 
following useful consequence of the ETH \cite{ImpagliazzoPZ01}:
\begin{description}
\item[($*$)] If the ETH is true, then 3SAT cannot be solved in time $2^{o(m)}$, with $m$ the number of clauses of the input formula. 
\end{description}
Both results in Theorem~\ref{thm:2explower} are proved by
contradiction against ($*$), i.e.\ we show that an algorithm for these
OMQ evaluation problems with running time
$2^{2^{o(|Q|)}} \cdot \mn{poly}(|\Dmc|)$ can be used to design an algorithm
for 3SAT with running time $2^{o(m)}$, contradicting ($*$). To achieve
this, it suffices to give a polynomial time reduction from 3SAT to
OMQ evaluation that produces an ontology of size
$O(\log m)$ where $m$ is the number of clauses in the input formula. For both points of
Theorem~\ref{thm:2explower}, we first use a known reduction from 3SAT
to \textsc{List Coloring} (Theorem~\ref{lem:3sat-precol}) and then
reduce \textsc{List Coloring} to OMQ evaluation. In \textsc{List
  Coloring}, the input is a graph in which every
vertex is associated with a list of colors, and the question is
whether it is possible to assign to each vertex a color from its list
and obtain a valid coloring of the graph. Formally: Given an
undirected graph $G = (V, E)$ and a set $C_v \subseteq \mathbb{N}$ of
colors for each node $v\in V$, decide whether there is a map
$f : V \rightarrow \mathbb{N}$ such that $f(v) \in C_v$ for every
$v \in V$ and $f(u) \neq f(v)$ for every $\{u,v\} \in E$.

The following is proved in \cite{DBLP:tr/trier/MI96-41},
Theorem~11.  The graphs of cliquewidth~2 that are used in the proof
are (non-disjoint) unions of two cliques.
\begin{theorem}
\label{lem:3sat-precol}
There is a polynomial time reduction from \textsc{3SAT} to
\textsc{List Coloring} that turns a formula with $m$ clauses into an
instance of \textsc{List Coloring} with a connected graph of cliquewidth $2$ and $O(m)$ colors.
\end{theorem}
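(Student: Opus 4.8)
The plan is to give a direct polynomial-time reduction from \textsc{3SAT} to \textsc{List Coloring} whose output graph is a \emph{union of two cliques}, and which uses only $O(m)$ colors. I would first pass to the equivalent reformulation of \textsc{3SAT} in which one must designate, for each clause, one of its (up to three) literals as a \emph{satisfying literal}, subject to the global \emph{consistency} requirement that no variable is designated both positively (in one clause) and negatively (in another): a formula is satisfiable iff such a consistent designation exists. The entire difficulty is then to encode the two ingredients ``every clause designates at least one literal'' and ``the designation is variable-consistent'' using nothing but list constraints and the adjacency of two cliques $A$ and $B$, recalling that in such a graph the only non-adjacent pairs are those with one endpoint private to $A$ and the other private to $B$, so that each color may be used at most once inside each clique (hence at most twice overall, once on each side).

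Concretely, I would let clique $A$ host a selector vertex $C_j$ for each clause $c_j$, its assigned color encoding which literal of $c_j$ is designated; giving each $C_j$ a private list (a fresh color per literal occurrence) makes the within-$A$ distinctness harmless and realizes the per-clause choice with at most $3m$ colors. Clique $B$ would overlap $A$ in a shared set $P_{12}$ of variable vertices $y_i$ (one per variable $x_i$, with a two-element list $\{t_i,f_i\}$ recording the truth value of $x_i$), and would be used to propagate and check consistency, so that designating $x_i$ positively and negatively simultaneously becomes a coloring clash attached to $y_i$. I would then prove both directions of correctness: a satisfying assignment yields a consistent designation and hence a proper list coloring, while any proper list coloring reads off a consistent designation and thus a satisfying assignment.

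The main obstacle is exactly this clause/consistency gadget. Standard \textsc{List Coloring} hardness gadgets encode Boolean constraints by sharing two-element lists along edges and produce sparse graphs, whereas here every gadget vertex lives in a clique and must carry an almost-private list to avoid spurious clashes; a naive routing of designations through the variable vertices collides at once with the fact that a single literal may satisfy many clauses while a color survives at most twice across the two cliques. Overcoming this reuse barrier — presumably by spreading each variable's occurrences across the two cliques and chaining local equalities rather than using generic implication chains, and realizing a genuinely $3$-ary disjunction with only $O(m)$ colors — is the crux of the construction. The remaining verifications I expect to be routine: the graph is a union of two cliques by construction, hence (with private parts $P_1,P_2$ and shared clique $P_{12}$) it is the cograph obtained by joining $P_1 \sqcup P_2$ to $P_{12}$, and cographs have cliquewidth at most~$2$; since the instance contains an edge and cliquewidth~$1$ means edgeless, its cliquewidth is exactly~$2$. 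Taking $P_{12}\neq\emptyset$ makes it connected, the palette has size $O(m)$, and the whole instance is computable in polynomial time; I would finish by exhibiting a $2$-expression that builds each clique through the usual relabel--introduce--insert-edge loop, which is what the subsequent reduction to OMQ evaluation consumes.
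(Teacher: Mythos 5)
Your proposal gets the framing right: the target structure (a non-disjoint union of two cliques, which is exactly the kind of graph the cited construction uses), the key combinatorial constraint (each color occurs at most once per clique, hence at most twice overall), and the reason the naive clause-designation scheme breaks (one literal may satisfy many clauses, but one color can absorb at most two designations). The peripheral claims are also fine: such a graph is a cograph (the join of the shared clique with the disjoint union of the two private cliques), hence has cliquewidth exactly $2$ as soon as it has an edge, and connectivity, color count, and polynomial computability are immediate. For calibration, note that the paper itself does not prove this statement at all; it cites Theorem~11 of Jansen's technical report and only records that the graphs used there are non-disjoint unions of two cliques, so there is no in-paper construction to compare against.

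The genuine gap is that the reduction itself is never constructed. The entire content of the theorem is the gadget that simultaneously realizes (i) a $3$-ary disjunction per clause (``some literal of $c_j$ is designated'') and (ii) global variable-consistency of designations, under the use-once-per-clique regime and with only $O(m)$ colors; your proposal explicitly names this ``the crux'' and leaves it open, offering only the guess that one should spread each variable's occurrences over the two cliques and chain local equalities. That guess is not yet a proof, and completing it is not routine. If each clause selector has a private list, its choice constrains nothing. If designations are routed through shared variable vertices with lists $\{t_i,f_i\}$, you hit the reuse cap of two, since a vertex in the shared part excludes its color from both cliques. And if you move to per-occurrence colors $\{t_i^l,f_i^l\}$ with transfer gadgets (e.g.\ auxiliary vertices listed $\{t_i^l,f_i^{l+1}\}$ and $\{f_i^l,t_i^{l+1}\}$, adjacent to both occurrence vertices) to synchronize consecutive occurrences, then in a globally true (or globally false) assignment those auxiliary vertices are forced onto exactly the complementary colors that clause vertices and neighboring gadgets also need, so naive placements of the gadget and clause vertices into the two cliques produce spurious conflicts that rule out legitimate satisfying assignments. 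Making all of this work is precisely where the difficulty of the theorem lives, so the proposal as written does not establish it.
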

To prove Point~1 of Theorem~\ref{thm:2explower}, we reduce
\textsc{List Coloring} to OMQ evaluation in
$(\text{GF}_2, \text{AQ})$.  To highlight the idea of the reduction,
we first sketch the proof of a slightly weaker version of Point~1 of
Theorem~\ref{thm:2explower}, namely that there is no algorithm with
the stated running time on databases of cliquewidth~3 (instead of~2).
Moreover, we use queries of the form $\exists x \, B(x)$ rather than
AQs. As an intermediate, we consider the problem \textsc{Precoloring
  Extension}, where the input is a number of colors $k$ and a graph in
which some nodes are already colored, and the question is whether the
coloring can be extended to a $k$-coloring of the whole graph.
Consider the following standard reduction from \textsc{List Coloring}
to \textsc{Precoloring Extension} \cite{DBLP:journals/iandc/FellowsFLRSST11}.
For every vertex $v$ and every
color $c \notin C_v$, add a new vertex that is precolored with $c$ and
is only adjacent to $v$. This reduction increases the cliquewidth by
at most~$1$, so when starting from Theorem~\ref{lem:3sat-precol}, the
resulting graphs have cliquewidth at most 3. It remains to reduce
\textsc{Precoloring Extension} to OMQ evaluation.

Without the precoloring, we may view the input graph to
\textsc{Precoloring Extension} as a database~$\Dmc$ by replacing every
vertex by a constant and every edge $\{u,v\}$ by two facts $r(u,v)$
and $r(v,u)$.  Let us assume that the number of colors is a power
of~2, say $2^\ell$. Each of the $2^\ell$ colors is assigned a unique
bit string of length~$\ell$. Introduce concept names
$A_1^j, \ldots, A_\ell^j$ for $j \in \{0, 1\}$ and add the sentence
$\forall x ( A_i^0(x) \leftrightarrow \neg A_i^1(x))$ to the
ontology~\Omc, which says that every constant in $\Dmc$ should be
assigned a unique sequence of $\ell$ bits, representing a color. To
express that adjacent vertices should be colored differently, we just
have to say that their bit strings should differ in at least one
place:
$\forall x \forall y \, ( r(x,y) \rightarrow \bigvee_{i=1}^\ell
A_i^0(x) \leftrightarrow A_i^1(y))$. For every vertex $v$ that is
precolored by a color $c$, we add facts $A_i^j(v)$ that represent this
color. As the query, choose $\exists x \, B(x)$ with $B$ a fresh
concept name. This query is implied if and only if $\Dmc$ is not
satisfiable w.r.t.\ $\Omc$, which is the case if and only if we
started with a no-instance of \textsc{Precoloring Extension}.

The following lemma improves the reduction to use only databases of
cliquewidth~2 and AQs. 
\begin{restatable}{lemma}{listcolgftwolem}
\label{lem:listcol-gf2}
There is a polynomial time reduction from \textsc{List Coloring} on
connected graphs of cliquewidth~2 to OMQ evaluation in $(\text{GF}_2,
\text{AQ})$ on databases of cliquewidth~2, where the constructed
ontology is of size $O(\log (k))$ with $k$ the number of distinct colors of the input graph.
\end{restatable}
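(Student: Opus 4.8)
The plan is to reduce \textsc{List Coloring} to the \emph{unsatisfiability} of a database with respect to a \GFtwo-ontology, and then to detect this unsatisfiability with an AQ. For the AQ I would fix an arbitrary constant $a_0$ of the constructed database together with a fresh concept name $B$ that occurs neither in the ontology nor anywhere in the database, and take the query $B(a_0)$. Since $B$ is fresh, any model of \Dmc and \Omc can be expanded to one with $B^{\Imc}=\emptyset$, so $\Dmc,\Omc\models B(a_0)$ holds if and only if \Dmc is unsatisfiable w.r.t.\ \Omc, i.e.\ if and only if no proper list colouring exists. This already replaces the query $\exists x\,B(x)$ from the sketch and, crucially, introduces no new structure into the database, so it cannot affect the cliquewidth.

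For the colouring itself I would reuse the binary encoding. With $\ell=\lceil\log_2 k\rceil$, introduce concept names $A_1^0,A_1^1,\dots,A_\ell^0,A_\ell^1$ and put into \Omc the sentences $\forall x\,(A_i^0(x)\leftrightarrow\neg A_i^1(x))$ for $1\le i\le\ell$, forcing every constant to encode one of the at most $2^\ell$ colours, together with the adjacency sentence $\forall x\forall y\,(r(x,y)\rightarrow\bigvee_{i=1}^\ell(A_i^0(x)\leftrightarrow A_i^1(y)))$, which makes $r$-adjacent constants differ in at least one bit. The input graph becomes a database by turning every vertex into a constant and every edge $\{u,v\}$ into the two facts $r(u,v)$ and $r(v,u)$; this database has the same underlying graph as the input and hence still has cliquewidth $2$. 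The ontology built so far has size $O(\ell)=O(\log k)$, as required, and models of it correspond exactly to proper colourings (ignoring lists).

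The genuinely delicate part, and the place where the step from cliquewidth $3$ down to cliquewidth $2$ must be earned, is the encoding of the \emph{lists}. The sketch handles lists only indirectly, through \textsc{Precoloring Extension}, by attaching to each vertex a pendant precoloured with each forbidden colour; but pendants create induced $P_4$'s and push the cliquewidth up to $3$, so I must instead enforce ``$v$ uses a colour from $C_v$'' \emph{in place}. My plan is to add one shared palette: $k$ colour-constants $\gamma_0,\dots,\gamma_{k-1}$, precoloured by their bit patterns through unary facts, and to link each graph constant to the palette constants of its allowed colours by a fresh role. The ontology then states, at size still $O(\log k)$, that every graph constant shares its bit pattern with some allowed palette neighbour, where ``sharing a bit pattern'' is the formula $\bigwedge_{i=1}^\ell(A_i^0(x)\leftrightarrow A_i^0(y))$ guarded by the allowed-role; unsatisfiability then coincides precisely with non-colourability.

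The main obstacle is to realise these allowed-links so that the \emph{whole} database still has cliquewidth $2$: for arbitrary lists the vertex--palette incidence is an arbitrary bipartite graph, whose cliquewidth is unbounded, and any naive per-vertex attachment reintroduces induced $P_4$'s. I would overcome this by not treating the incidence as arbitrary but by exploiting the structure guaranteed by Theorem~\ref{lem:3sat-precol}, namely that the input graph is a (non-disjoint) union of two cliques, and by interleaving the construction of the palette and of the allowed-links with the two-clique $2$-expression of the input, reusing and relabelling only two labels throughout so that the allowed-links decompose into few complete-bipartite pieces inserted by $\alpha^r_{i,j}$-operations after suitable relabelling. Exhibiting an explicit $2$-expression for the combined database, and verifying that the equivalence ``a model exists $\iff$ a proper list colouring exists'' survives the palette encoding, is the technical heart of the argument; the remaining bookkeeping ($|\Omc|=O(\log k)$ and a polynomial-size database computable in polynomial time) is then routine.
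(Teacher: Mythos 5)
Your first two steps are sound and essentially match the paper: detecting unsatisfiability through an AQ (the paper does it with a defect concept $D$ rather than a fresh unentailed concept, but both work), and the binary colour encoding with the $O(\log k)$ adjacency sentence. The genuine gap is exactly where you locate ``the technical heart'': the list encoding at cliquewidth~2, and the route you propose cannot be made to work. A shared palette linked to each vertex by an ``allowed'' role makes the vertex--palette incidence part of the database, and this incidence is an essentially arbitrary bipartite graph: in the instances produced by Theorem~\ref{lem:3sat-precol} the graph itself is merely a union of two cliques, and \emph{all} of the 3CNF formula's structure is pushed into the lists, so the incidence is as unstructured as the formula. It therefore does not ``decompose into few complete-bipartite pieces'', and a $2$-expression, with only two labels, cannot create it. Concretely, as soon as two vertices $u,v$ in the same clique and two palette constants $p,p'$ satisfy $p \in C_v \setminus C_u$ and $p' \in C_u \setminus C_v$, the Gaifman graph of your database contains the induced path $p - v - u - p'$; since the cliquewidth of a database bounds the cliquewidth of its Gaifman graph from above (replace each $\alpha^r_{i,j}$ by an undirected edge insertion), and since graphs of cliquewidth~$2$ are exactly the cographs, your combined database has cliquewidth at least~$3$, no matter how the palette construction is interleaved with the two-clique expression. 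Note also that restricting to the instances of Theorem~\ref{lem:3sat-precol} silently weakens the lemma as stated, but even the weakened statement is out of reach this way.

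The idea you are missing is the paper's: keep the \textsc{Precoloring Extension}-style \emph{per-vertex} pendants $a_v^c$, one for each \emph{forbidden} colour $c \notin C_v$ (so no global palette and no arbitrary incidence), and kill the induced $P_4$'s --- which you correctly identified as the obstruction --- by giving each pendant the \emph{same label} as its vertex $a_v$ in the $2$-expression, right after linking them with a separate role $s$. From then on every edge insertion that touches $a_v$ also touches its pendants, so the pendants inherit all adjacencies of $a_v$; these extra ``unintended'' $r$-facts are precisely what destroys the $P_4$'s and keeps the cliquewidth at~$2$. The resulting database no longer represents $G$ faithfully, and this is repaired on the ontology side at no asymptotic cost: real vertices carry a marker concept $B$, pendant links use the role $s$, and the $O(\log k)$-size ontology compares colours only along $s$-facts and along $r$-facts whose \emph{both} endpoints satisfy $B$, ignoring all unintended facts. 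A detected defect raises a concept $D$ that is propagated along all roles, so by connectedness the AQ $D(x)$ evaluated at an arbitrary constant completes the reduction.
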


We are now ready to prove Point~1 of Theorem~\ref{thm:2explower}.

\smallskip

\noindent
\begin{proof}
  Assume for a contradiction that there is an algorithm with running
  time $2^{2^{o(|\Omc|)}} \cdot \mn{poly}(|\Dmc|)$ for OMQ evaluation in
  $(\text{GF}_2, \text{AQ})$ on databases of cliquewidth at most
  $2$. Then the following is an algorithm for 3SAT.
  %
  Given a formula in 3CNF with $m$ clauses, apply the reduction
  from Theorem~\ref{lem:3sat-precol} followed by the one from
  Lemma~\ref{lem:listcol-gf2}. This yields 
%
  an OMQ $Q \in (\text{GF}_2, \text{AQ})$ where the ontology
  \Omc is of size $O(\log(m))$, a database $\Dmc$ of cliquewidth $2$ and an answer candidate $a \in \mn{adom}(\Dmc)$.
  Use the algorithm for OMQ evaluation in $(\text{GF}_2, \text{AQ})$ whose existence we assumed on input $Q$, $\Dmc$ and $a$. 
The running time of the first step is polynomial. The running
time of the second step is $2^{2^{o(\log m)}} \cdot \mn{poly}(|\Dmc|)$, which
is~$2^{o(m)}$, contradicting the ETH.
\end{proof}
Interestingly, the reduction does not rely on the additional
expressive power of $\text{GF}_2$ compared to $\ALC$, but rather on
the ability of $\text{GF}_2$ to express certain statements more
succinctly. To demonstrate this, we introduce the extension $\ALC^=$ of
$\ALC$ with  the following constructor:

If $A_1,\ldots,A_n$ are concept names, then $\exists r.[A_1,\ldots,A_n]_{=}$ is a concept. The semantics is defined as follows:
\begin{align*}
\exists r.[A_1,\ldots,A_n]_{=}^\Imc  =  \{d \in \Delta^\Imc \mid \exists e \in \Delta^\Imc : (d,e) \in r^\Imc \text{ and }\\
d \in A_i^\Imc \Leftrightarrow e \in A_i^\Imc \text{ for all } i \in \{1, \ldots, n\}\}.
\end{align*}
It is easy to see that the expressive power of \ALC and of $\ALC^=$
are identical, as the additional constructor
$\exists r.[A_1,\ldots,A_n]_{=}$ can be expressed as
$$
   \bigsqcup_{C_1 \in \{A_1, \neg A_1\}} \!\!\!\!
   \cdots \!\!\!\!
   \bigsqcup_{C_n \in \{A_n, \neg A_n\}} \!
   \big (
   C_1 \sqcap \cdots \sqcap C_n \sqcap \exists r.
   (C_1 \sqcap \cdots \sqcap C_n)
   \big ).
   $$
   Note that the above concept is exponentially larger than the original
   one. In GF$_2$, in contrast, the new constructor can easily be
   expressed using only a linear size formula.

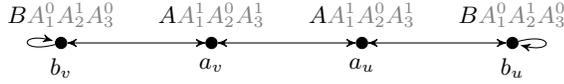
\begin{figure}[t]
\begin{center}
\begin{tikzpicture}[->,>=stealth',font=\sffamily\small]
\tikzstyle{node}=[shape=circle, draw,inner sep=1.5pt, fill=black]
\node[node, label=below:$b_v$, label=above:$B${\color{gray} $A_1^0A_2^1A_3^0$}] (BV) at (0,0) {}; 
\node[node, label=below:$a_v$, label=above:$A${\color{gray} $A_1^1A_2^0A_3^1$}] (AV) at (2,0) {}; 
\node[node, label=below:$a_u$, label=above:$A${\color{gray} $A_1^1A_2^0A_3^1$}] (AU) at (4,0) {}; 
\node[node, label=below:$b_u$, label=above:$B${\color{gray} $A_1^0A_2^1A_3^0$}] (BU) at (6,0) {}; 
\draw [<->] (BV) -- (AV) node[midway,above left] {}; 
\draw [<->] (AV) -- (AU) node[midway,above] {}; 
\draw [<->] (AU) -- (BU) node[midway,above] {}; 
\draw [->] (BV) edge[loop left] (BV); 
\draw [->] (BU) edge[loop right] (BU); 
\end{tikzpicture}
\caption{Black: Fragment of \Dmc that represents edge $\{u,v\} \in E$. 
Gray: Concept names that could appear in an interpretation where the coloring has a defect, 
for $\ell = 3$.}
\label{fig:cq-lower-db}
\end{center}
\vspace*{-6mm}
\end{figure}
In the reduction used to prove Lemma~\ref{lem:listcol-gf2}, we may
replace the $\text{GF}_2$ ontology by the following $\ALC^=\!$-ontology:
\begin{align*}
\Omc = \, &\{ \exists s.[A_1,\ldots,A_n]_{=} \sqsubseteq D \} \,\cup\\[0.5mm]
&\{ B \sqcap \exists r.[A_1,\ldots,A_n, B]_{=} \sqsubseteq D \} \,\cup
  \\[1mm]
&\{ \exists r.D \sqsubseteq D, D \sqsubseteq \forall r.D, \exists s.D \sqsubseteq D, D \sqsubseteq \forall s.D  \} \,\cup  
\end{align*}
\begin{align*}
&\{\top \sqsubseteq A_i^0 \sqcup A_i^{1}, A_i^0 \sqcap A_i^{1}
\sqsubseteq \bot \mid i \in \{1, \ldots, \ell\}\}.
\end{align*} 
Thus, Point~1 of
Theorem~\ref{thm:2explower} still holds if $(\text{GF}_2,\text{AQ})$
is replaced with $(\ALC^=, \text{AQ})$.

To prove Point~2 of Theorem~\ref{thm:2explower}, we use a reduction
that is similar to the one presented before
Lemma~\ref{lem:listcol-gf2}. 
This time we are only allowed to use an $\ALC$ ontology, but also
CQs in place of AQs, which again allows us to express succinctly
that adjacent nodes have to be colored differently.
\begin{restatable}{lemma}{listcolalccqlem}
\label{lem:listcol-alc-cq}
There is a polynomial time reduction from \textsc{List Coloring} on
connected graphs of cliquewidth~2 to OMQ evaluation in
$(\ALC, \text{CQ})$ on databases of cliquewidth~3, where the
constructed OMQ is of size $O(\log k)$, $k$ the number of
distinct colors of the input graph.
\end{restatable}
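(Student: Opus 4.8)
The plan is to adapt the reduction sketched just before Lemma~\ref{lem:listcol-gf2}, but to move the check that adjacent vertices receive \emph{different} colors out of the ontology --- where in plain \ALC it can only be expressed with an exponential blow-up, as witnessed by the $\ALC^=$ detour --- and into the CQ, while encoding the list constraints into the \emph{database} so that the ontology stays of size $O(\log k)$. Write $k=2^\ell$ (padding the colour set if necessary), so $\ell=O(\log k)$, and fix concept names $A_i^0,A_i^1$ for $1\le i\le\ell$ that encode the $\ell$ bits of a colour. A model of $\Dmc$ and $\Omc$ will correspond to a colouring of the vertices, and the Boolean CQ $q$ will hold in exactly those models whose colouring is \emph{defective}, i.e.\ either assigns equal colours to the two endpoints of some edge or uses a forbidden colour; the goal is to arrange that $\Dmc,\Omc\models q$ iff the \textsc{List Coloring} instance is a no-instance.

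First I would build $\Dmc$ from the input graph $G$ and the lists $(C_v)_{v\in V}$ following Figure~\ref{fig:cq-lower-db}: each vertex $v$ is represented by a constant $a_v$ (in the concept $A$), which will carry the colour of $v$ via the $A_i^j$, together with a helper constant $b_v$ (in the concept $B$) linked to $a_v$ and carrying a self-loop; every edge $\{u,v\}\in E$ becomes the two facts $r(a_u,a_v),r(a_v,a_u)$; and each list $C_v$ is encoded by attaching to $a_v$, for every colour $c\notin C_v$, a fully precoloured helper constant whose bits are fixed by facts. The ontology $\Omc$ is an \ALC ontology of size $O(\ell)$ that (i)~makes every node pick a colour via $\top\sqsubseteq A_i^0\sqcup A_i^1$ and $A_i^0\sqcap A_i^1\sqsubseteq\bot$, and (ii)~forces each helper $b_v$ to carry the bitwise complement of its unique $A$-neighbour $a_v$, which is expressible bit-by-bit by the $O(\ell)$ CIs $B\sqcap A_i^0\sqsubseteq\forall r.(\neg A\sqcup A_i^1)$ and $B\sqcap A_i^1\sqsubseteq\forall r.(\neg A\sqcup A_i^0)$ (and symmetrically with $A,B$ swapped). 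Conditioning on $A$ and $B$ ensures these CIs act only along the $a_v$--$b_v$ links and not along the real edges, and the self-loop on $b_v$ is harmless since $b_v\notin A^\Imc$.

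The crux is the CQ $q$, which must detect a \emph{monochromatic} edge --- an edge whose endpoints $a_u,a_v$ carry the same $\ell$-bit colour --- using only $O(\ell)$ atoms. This is exactly the step where a naive approach fails: ``$a_u$ and $a_v$ agree on bit $i$'' is a disjunction ($A_i^0(a_u)\wedge A_i^0(a_v)$ or $A_i^1(a_u)\wedge A_i^1(a_v)$), so a UCQ testing agreement on all bits would need $2^\ell$ disjuncts. The gadget is designed precisely to circumvent this: exploiting the complement helpers $b_u,b_v$ and their self-loops, I would write a single tree-shaped CQ of size $O(\ell)$ whose homomorphic images into a model are exactly the monochromatic edges --- intuitively, the self-loop lets a variable ``stay in place'' on a helper node, and the complementary labelling of the $b$-nodes turns the equality test on the $a$-nodes into a pattern that a monotone conjunction can witness. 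I would then establish the two-sided correctness statement: (a)~from any proper list-colouring one obtains a model of $\Dmc$ and $\Omc$ (colour each $a_v$ accordingly, complement each $b_v$, and colour the precolouring helpers as prescribed) admitting \emph{no} homomorphism from $q$, whence $\Dmc,\Omc\not\models q$; and conversely (b)~in every model the induced colouring of the $a_v$ either hits a forbidden colour (a monochromatic edge to a precolouring helper) or colours two adjacent vertices equally, and in either case $q$ matches, whence $\Dmc,\Omc\models q$.

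Finally I would discharge the bookkeeping. For the size bound, $\Omc$ has $O(\ell)=O(\log k)$ CIs and $q$ has $O(\ell)$ atoms, so the whole OMQ has size $O(\log k)$. For the cliquewidth bound, I would verify that $\Dmc$ has cliquewidth at most $3$ by exhibiting an explicit $3$-expression: starting from a $2$-expression of $G$ (which exists as $G$ has cliquewidth $2$), the per-vertex helper $b_v$ with its link, its self-loop, and the precolouring helpers, together with all bit-facts, can be introduced using one further label and repeated relabelling, much as in Example~\ref{ex:teacher_pupils_schools}; the details are routine. Granting the two-sided correctness, $\Dmc,\Omc\models q$ iff the instance is a no-instance, which is the lemma. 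I expect the main obstacle to be the design and verification of the constant-per-bit CQ in the crux step --- in particular, arguing that the complement/self-loop gadget makes $q$ match \emph{all} monochromatic edges yet leaves a colour-avoiding model available for \emph{every} proper colouring. Once Lemma~\ref{lem:listcol-alc-cq} is in place, Point~2 of Theorem~\ref{thm:2explower} follows by composing it with Theorem~\ref{lem:3sat-precol} (so $k=O(m)$ and $|\Omc|=O(\log m)$) and arguing against ($*$) verbatim as in the proof of Point~1, since $2^{2^{o(|Q|)}}\cdot\mn{poly}(|\Dmc|)=2^{2^{o(\log m)}}\cdot\mn{poly}(|\Dmc|)=2^{o(m)}$.
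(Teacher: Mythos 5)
Your database and ontology match the paper's construction almost verbatim (the paper writes $B \sqcap A_i^j \sqsubseteq \forall r.(B \sqcup A_i^{1-j})$ together with $A \sqcap B \sqsubseteq \bot$ where you condition on $\neg A$ instead, an immaterial difference), and your correctness plan and ETH composition are the paper's as well. But there is a genuine gap at exactly the step you yourself flag as ``the main obstacle'': the CQ that detects a monochromatic edge is never constructed, only asserted to exist. That CQ is the entire content of the lemma --- the bit encoding, the anti-color copies, the $3$-expression bookkeeping, and the composition with Theorem~\ref{lem:3sat-precol} are all routine --- so postulating it proves nothing. For the record, the paper's gadget (Figure~\ref{fig:cq-lower}) consists, for each bit $i$, of \emph{two} directed length-four paths from $x$ to $y$, one with interior labels $A_i^0, A, A_i^1$ and the other with $A_i^1, A, A_i^0$, where $x$ and $y$ are shared across all $\ell$ gadgets and carry $B$ and self-loop atoms $r(x,x), r(y,y)$; the two parallel paths realize, for each bit, the disjunction ``both endpoints have bit $0$ or both have bit $1$'' by letting the homomorphism slide each path one step along $b_v \rightarrow a_v \rightarrow a_u \rightarrow b_u$ depending on the bit value, using the database self-loops to ``wait'' at the copies. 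That is the one non-trivial idea, and it is absent from your proposal.

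Moreover, your guess that the gadget would be ``a single tree-shaped CQ of size $O(\ell)$'' points in a provably wrong direction: no tree-shaped CQ can work. A Boolean tree CQ rolls up (componentwise) into \ALCI-concepts, so evaluating the OMQ $(\Omc,q)$ reduces to polynomially many satisfiability checks of \,$\Dmc$ w.r.t.\ ontologies of the form $\Omc \cup \{C_p \sqsubseteq \bot\}$, i.e.\ to AQ evaluation for OMQs of size $O(\log k)$. By Theorem~\ref{thm:alc1exp}, these can be evaluated on cliquewidth-$3$ databases in time $2^{O(\log k)} \cdot |\Dmc| = \mn{poly}(k) \cdot |\Dmc|$, so your reduction would solve \textsc{List Coloring} on connected cliquewidth-$2$ graphs --- NP-hard by Theorem~\ref{lem:3sat-precol} --- in polynomial time. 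Hence the gadget \emph{must} be cyclic (the paper's has $2\ell$ parallel paths between the shared self-looped variables $x$ and $y$), and this cyclicity is precisely why the lower bound does not collide with the single-exponential upper bound for AQs. A second, smaller gap: as written you attach precoloring helpers $a_v^c$ but no copies $b_v^c$ for them; without those copies the gadget has no $B$-labeled, self-looped anchor at one end of the ``monochromatic edge'' $\{a_v, a_v^c\}$, so the query cannot detect that $v$ took a forbidden color, and the reduction fails on no-instances whose models color some $a_v$ outside its list.
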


\noindent
\begin{proof} 
Let $G=(V, E)$ be an undirected graph of cliquewidth~2
and $C_v \subseteq \mathbb{N}$ a list of possible colors for every $v \in V$.
We construct a database $\Dmc$ of cliquewidth~3 and a Boolean
OMQ $Q=(\Omc,q)$ from $(\ALC, \text{CQ})$ 
such that $G$ and $C_v$ are a yes-instance for \textsc{List Coloring}
if and only if $\Dmc \not \models Q$.
In contrast to what was described above, we do not carry out a
reduction
from {\sc List Coloring} to {\sc Precoloring Extension} as a separate
first step, but build it directly into the main reduction.
Let $k$ be the number of distinct colors that occur, w.l.o.g.\ let $\bigcup_{v \in V} C_v = \{1, \ldots, k\}$.
Define $\ell = \lceil \log k \rceil$ and $L = \{1, \ldots, 2^\ell\}$.

We use concept names $A_i^j$,
$i \in \{1, \ldots, \ell\}$ and $j \in \{0, 1\}$, another concept name
$B$, one role name~$r$, and constants
$a_v, b_v, a_v^c$, and $b_v^c$ for every $v \in V$ and
$c \in L \setminus C_v$.  The database is defined as 
\begin{align*}
\Dmc = & \{r(a_v, a_u), r(a_u, a_v) \mid \{u, v\} \in E\}\, \cup\\ 
& \{r(a_v, a_v^c), r(a_v^c, a_v) \mid v \in V, c \in L \setminus C_v\}\, \cup\\
& \{A_i^j(a_v^c) \mid v \in V, c \in L \setminus C_v \text{ and the } i \text{-th bit of } c \text{ is } j\}\,\cup\\ 
& \{r(a_v, b_v), r(b_v, a_v), r(b_v, b_v) \mid v \in V\}\,\cup\\ 
& \{r(a_v^c, b_v^c), r(b_v^c, a_v^c), r(b_v^c, b_v^c) \mid v \in V, c \in L \setminus C_v\} \, \cup\\ 
& \{A(a_v), A(a_v^c) \mid v \in V, c \in L \setminus C_v\} \, \cup\\ 
& \{B(b_v), B(b_v^c) \mid v \in V, c \in L \setminus C_v\}.
\end{align*}
Here, every constant $a_v$ represents a vertex of $G$, and every constant $a_v^c$ is a
constant adjacent to $a_v$ that is precolored with the color $c$. Furthermore, for every constant
$a_v$ (resp.\ $a_v^c$) there is a constant $b_v$ (resp.\ $b_v^c$) such that $b_v$ (resp. $b_v^c$) is
adjacent only to $a_v$ (resp.\ $a_v^c$).
We refer to the $b_v$ and $b_v^c$ as the \emph{copies} of $a_v$ and $a_v^c$.
Every copy has an $r$-selfloop and a unary marker $B$,
and every non-copy has a unary marker~$A$.
Figure~\ref{fig:cq-lower-db} depicts a part of the database.
We argue in the appendix that $\Dmc$ has cliquewidth~3.
\begin{figure}[t]
\begin{center}
\begin{tikzpicture}[->,>=stealth',font=\sffamily\small]
\tikzstyle{node}=[shape=circle, draw,inner sep=1.5pt, fill=black]
\node[node, label=below:$x$, label=above:$B$] (X) at (0,0) {}; 
\node[node, label=below:$z_1^i$, label=above:$A_i^0$] (Z1) at (1,1) {}; 
\node[node, label=below:$z_2^i$, label=above:$A$] (Z2) at (2,1) {}; 
\node[node, label=below:$z_3^i$, label=above:$A_i^1$] (Z3) at (3,1) {}; 
\node[node, label=below:$y$, label=above:$B$] (Y) at (4,0) {}; 
\node[node, label=below:$z_4^i$, label=above:$A_i^1$] (Z4) at (1,-1) {}; 
\node[node, label=below:$z_5^i$, label=above:$A$] (Z5) at (2,-1) {}; 
\node[node, label=below:$z_6^i$, label=above:$A_i^0$] (Z6) at (3,-1) {}; 
\draw [->] (X) -- (Z1) node[midway,above left] {}; 
\draw [->] (Z1) -- (Z2) node[midway,above] {}; 
\draw [->] (Z2) -- (Z3) node[midway,above] {}; 
\draw [->] (Z3) -- (Y) node[midway,above right] {}; 
\draw [->] (X) -- (Z4) node[midway,above right] {}; 
\draw [->] (Z4) -- (Z5) node[midway,above] {}; 
\draw [->] (Z5) -- (Z6) node[midway,above] {}; 
\draw [->] (Z6) -- (Y) node[midway,above left] {}; 
\draw [->] (X) edge[loop left] (X); 
\draw [->] (Y) edge[loop right] (Y); 
\end{tikzpicture}
\caption{Gadget $i$ of CQ $q$. Every edge represents an $r$-fact. The variables $x$ and $y$
are shared among gadgets.}
\label{fig:cq-lower}
\end{center}
\vspace*{-6mm}
\end{figure}
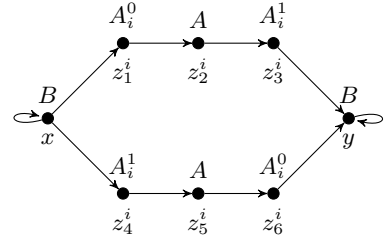

Next, we construct the ontology, which has two purposes. First, we assign a unique color to every
constant of the database. For every $i \in \{1, \ldots, \ell\}$, we
include the CIs
$\top \sqsubseteq A_i^0 \sqcup A_i^1$ and $A_i^0 \sqcap A_i^1 \sqsubseteq \bot$.
The second purpose is to assure that every copy $b_v$ (resp.\ $b_v^c$) is
colored using the anti-color of the color of $a_v$ (resp.\ $a_v^c$).
Here, the \emph{anti-color} of a color $c$
is the color obtained from $c$ by flipping every bit, e.g.\ the anti-color of $10010$ is $01101$.
Since the copies $b_v$ and $b_v^c$ carry the concept name~$B$, this can be achieved using
the following CIs for every $i \in \{1, \ldots, \ell\}$ and $j \in \{0, 1\}$:
$$
A \sqcap B \sqsubseteq \bot \qquad
B \sqcap A_i^j  \sqsubseteq \forall r. (B \sqcup A_i^{1-j}).
$$
Now we construct the (Boolean) CQ $q$, whose purpose it is to
detect a defect in the coloring. It is a union of $\ell$ gadgets,
with gadget number $i$ checking that
the colors of two adjacent vertices agree on bit number $i$.  We use
two variables $x$ and $y$, as well as variables
$z_1^i, z_2^i, \ldots, z_6^i$ for every $i \in \{1, \ldots, \ell\}$.
The variables $x$ and $y$ are shared between all the gadgets, whereas
the variables $z_1^i, z_2^i, \ldots, z_6^i$ belong to gadget number~$i$.  The $i$-th gadget is displayed in Figure~\ref{fig:cq-lower}.
It is clear that the OMQ $(\Omc,q)$ is of size $O(\log k)$. 
%
%
\\[2mm]
\emph{Claim.} $G$ and $C_v$ are a yes-instance for \textsc{List Coloring}
if and only if $\Dmc \not\models Q$.
\\[2mm]
We only show the `if' direction here as it demonstrates how the
constructed CQ $q$ can identify defects in the coloring. 
Let $\Imc$ be a model of $\Omc$
and $\Dmc$ such that $\Imc \not \models q$.  We show that $G$ and the
$C_v$ are a yes-instance of \textsc{List Coloring}. Since $\Imc$ is a
model of $\Omc$, every $a_v$ is assigned a unique color $c_v$, encoded
using the $A_i^j$. We define $f(v) = c_v$ for every $v \in V$. If any
two adjacent constants $a_u$ and $a_v$ were colored with the same
color $c = j_1 j_2 \ldots j_\ell \in \{0, 1\}^\ell$, then we can
construct a homomorphism $h$ from $q$ to $\Imc$ by
$h(x) = b_v, h(y) = b_u$ and, if $j_i = 0$, then
\begin{align*}
&h(z_1^i) = a_v, h(z_2^i) = a_u, h(z_3^i) = b_u\\
&h(z_4^i) = b_v, h(z_5^i) = a_v, h(z_6^i) = a_u
\end{align*}
and if $j_i = 1$, then
\begin{align*}
&h(z_1^i) = b_v, h(z_2^i) = a_v, h(z_3^i) = a_u\\
&h(z_4^i) = a_v, h(z_5^i) = a_u, h(z_6^i) = b_u\,.
\end{align*}
Figure~\ref{fig:cq-lower-db} depicts the case where $c=101$.
But we assumed that $\Imc \not \models q$, so every two adjacent $a_v$ and $a_u$ have
different colors. Thus, $f$ witnesses that $G$ and the $C_v$ are a yes-instance for
\textsc{List Coloring}. 
\end{proof}
%
%
With Lemma~\ref{lem:listcol-alc-cq} at hand, the proof for Point~2 of Theorem~\ref{thm:2explower}
continues exactly as the proof of Point~1.


\section{Bounded Treewidth}
\label{sect:btw}

We prove two results that concern OMQ evaluation on databases of
bounded treewidth, mainly to contrast with the results obtained for
databases of bounded cliquewidth.
\begin{restatable}{theorem}{thmtwone}
  \label{thmtwone}
  In $(\text{GF}_2,\text{AQ})$, an OMQ $Q=(\Omc,q)$ can be evaluated on
  a database \Dmc of treewidth $k$ in time
  \mbox{$2^{O(|\Omc| \cdot k^2)} \cdot |\Dmc|$}.
\end{restatable}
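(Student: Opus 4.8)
The plan is to prove Theorem~\ref{thmtwone} by a bottom-up dynamic programming algorithm over a tree decomposition of $\Dmc$, in the same spirit as the cliquewidth algorithms of Section~\ref{sect:upper} but adapted to tree decompositions. As in Section~\ref{sect:alciupper}, it suffices to decide database satisfiability w.r.t.\ a \GFtwo-ontology \Omc within the stated time bound, since $\Dmc,\Omc \models A(\bar c)$ can be reduced to unsatisfiability of $\Dmc \cup \{\overline{A}(\bar c)\}$ w.r.t.\ \Omc extended with $\forall \bar x\,(\overline{A}(\bar x) \rightarrow \neg A(\bar x))$. First I would assume that a width-$k$ tree decomposition of $\Dmc$ is given as part of the input, which may be taken to have $O(|\Dmc|)$ bags of size at most $k+1$; Bodlaender's linear-time algorithm computes such a decomposition in time $f(k)\cdot|\Dmc|$ if it is not given. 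The key idea is to exploit that \GFtwo\ semantics is local: because quantification in \GFtwo\ is guarded, whether a $1$-type is realizable at a constant depends only on the binary $2$-types it forms with its neighbours, and in a tree decomposition every edge of $\Dmc$ lives inside some bag.

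The central objects are the $1$-types and $2$-types for \Omc\ introduced in Section~\ref{sect:upper}. The crucial counting observation is that, since a $1$-type is a subset of $\mn{cl}_1(\Omc)$ and $|\mn{cl}_1(\Omc)| \in O(|\Omc|)$, there are at most $2^{O(|\Omc|)}$ many $1$-types, and similarly $2^{O(|\Omc|)}$ many $2$-types (a $2$-type is a subset of $\mn{cl}_2(\Omc)$, also of size $O(|\Omc|)$). For each bag $X$ of the tree decomposition I would define the DP table to record which partial $1$-type assignments $\tau : X \to \mn{TP}_1$ to the constants in $X$ can be extended to a $1$-type assignment on the entire subtree rooted at $X$ that is consistent with \Omc\ and with all facts of $\Dmc$ appearing in that subtree. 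Consistency of an assignment means: each assigned $1$-type realizes the unary and self-loop facts present at that constant, and for every edge $r(a,b)\in\Dmc$ within a bag there is a $2$-type $t$ with $\mn{tp}^1(a)\rightsquigarrow_t \mn{tp}^1(b)$ that contains the multi-edge $M_\Dmc(a,b)$ between $a$ and $b$. Since each bag has at most $k+1$ constants, the number of $1$-type assignments to a bag is $(2^{O(|\Omc|)})^{k+1} = 2^{O(|\Omc|\cdot k)}$; each entry of such a table is a Boolean flag, so each table has size $2^{O(|\Omc|\cdot k)}$.

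For the recurrence I would treat a standard nice tree decomposition with leaf, introduce, forget, and join nodes. At an introduce node that adds a constant $c$ to bag $X$, I extend each surviving assignment $\tau$ on $X\setminus\{c\}$ by every $1$-type for $c$ that is compatible, via the relation $\rightsquigarrow$, with the $1$-types of all already-present neighbours of $c$ in $X$ through the multi-edges those neighbours form with $c$ in $\Dmc$. At a forget node that removes $c$, I project out $c$'s $1$-type, retaining $\tau$ on $X\setminus\{c\}$ if some compatible choice for $c$ survives. At a join node I keep the assignments agreeing on the shared bag and common to both child tables. The root table is nonempty iff $\Dmc$ is satisfiable w.r.t.\ \Omc. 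The dominant cost is the join and introduce steps, where I compare and combine pairs of assignments: processing one node costs the square of the table size, namely $(2^{O(|\Omc|\cdot k)})^2 = 2^{O(|\Omc|\cdot k)}$, and checking compatibility of a single assignment involves verifying $O(k^2)$ pairwise $2$-type constraints within the bag. With $O(|\Dmc|)$ nodes this gives the bound $2^{O(|\Omc|\cdot k^2)}\cdot|\Dmc|$, where the extra factor $k$ in the exponent reflects the per-entry pairwise edge checks inside bags of size $k+1$.

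The main obstacle I anticipate is establishing soundness and completeness of the table semantics, i.e.\ proving that the DP correctly characterizes global satisfiability rather than merely local bag-consistency. The subtle point is that \GFtwo\ formulas can force constraints that propagate across the structure (for instance universally quantified guarded formulas that impose a $1$-type at $c$ depending on a neighbour, which in turn constrains a further neighbour), so I must argue that recording full $1$-types --- not just coarser summaries --- suffices to stitch together independently chosen models of the subtrees into one global model. The key lemma would state that a family of $1$-type assignments that is locally consistent at every bag (every within-bag edge is witnessed by a compatible $2$-type) can be realized by a single model of $\Dmc$ and \Omc; this follows because \GFtwo\ has the property that a model can be assembled from a consistent assignment of $1$- and $2$-types whenever all guarded constraints are respected on the realized edges, and in a tree decomposition every edge of $\Dmc$ occurs in a common bag so no binary constraint is missed. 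I would prove the realizability lemma once, then invoke it together with a routine induction over the tree decomposition to obtain correctness of $\Theta$ at each node.
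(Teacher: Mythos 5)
Your overall strategy (bottom-up dynamic programming over a tree decomposition with type-based states, and reducing AQ evaluation to satisfiability) is the same as the paper's, but the key realizability lemma you rely on is false as stated, and it fails exactly where the paper's proof does extra work. First, $1$-types contain $0$-types, i.e.\ the subformulas of \Omc that are sentences, and all elements of a single model must share one $0$-type. Edge-compatibility via $\rightsquigarrow$ propagates this agreement only along edges, so for databases that are not connected, local bag-consistency does not imply global satisfiability. Concretely, let $\Omc = \{\varphi \vee \psi,\ \forall x\, (C(x) \rightarrow \neg C'(x))\}$ with $\varphi = \forall x\, (A(x) \rightarrow C(x))$ and $\psi = \forall x\, (B(x) \rightarrow \neg C(x))$, and let $\Dmc = \{A(c_1), C'(c_1), B(c_2), C(c_2)\}$. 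Assigning $c_1$ a (realizable) $1$-type containing $\neg\varphi, \psi$ and $c_2$ a (realizable) $1$-type containing $\varphi, \neg\psi$ is locally consistent --- there are no edges to check --- yet \Dmc is unsatisfiable w.r.t.\ \Omc. This is precisely why the paper's cliquewidth algorithm for \GFtwo stores the shared $0$-type $\gamma^0$ in every model abstraction and combines abstractions at $\oplus$-nodes only when $\gamma_1^0 = \gamma_2^0$; your DP needs the same synchronization, e.g.\ by iterating over the $2^{O(|\Omc|)}$ candidate $0$-types and running the DP once per candidate.

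Second, recording only $1$-types makes your join step unjustified as argued. A model may contain binary atoms between database constants that are not database facts (e.g.\ $\forall x \forall y\,(r(x,y) \rightarrow s(x,y)) \in \Omc$ forces $s(a,b)$ on a database edge $r(a,b)$), and two child models that agree on the $1$-types of the shared bag can realize \emph{different} such extra atoms on the same shared pair; their union then need not be a model, so ``keep the assignments common to both child tables'' is not established by the gluing you sketch. The paper's states are \emph{diagrams}, which fix the truth of every $\mn{cl}_2(\Omc'_q)$-formula --- in particular every binary atom --- on \emph{every} pair of bag constants; child models therefore agree on everything concerning shared pairs, and the union argument in Lemma~\ref{lem:twalgointended} goes through. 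This per-pair information is also where the $k^2$ in the exponent actually comes from; your accounting, attributing it to performing $O(k^2)$ pairwise checks per table entry, is wrong, since that multiplies the running time by $k^2$ rather than squaring $k$ in the exponent. Your leaner $1$-type-only state could in principle be salvaged (and would even give $2^{O(|\Omc| \cdot k)} \cdot |\Dmc|$), but only via a re-wiring lemma showing that the $2$-type realized on a pair can be exchanged for any other compatible one after re-attaching fresh tree witnesses; that lemma is the real content and is missing. A final small point: to obtain the decomposition you should use the $2^{O(k)} \cdot |\mn{adom}(\Dmc)|$ $2$-approximation of \cite{korhonen2021single}, as the paper does; exact linear-time treewidth algorithms have $2^{\Theta(k^3)}$ dependence on $k$ and would already overshoot the claimed bound when $k$ exceeds $|\Omc|$.
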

Note that we obtain single exponential running time while
Theorem~\ref{thm:2explower} states that when treewidth is
replaced by cliquewidth, we cannot be better than double exponential.
\begin{restatable}{theorem}{thmtwtwo}
  \label{thmtwtwo}
    In $(\text{GF}_2,\text{UCQ})$, evaluating an OMQ $Q=(\Omc,q)$ on
  a database \Dmc of treewidth $k$ is possible in time
  $2^{|\Omc| \cdot k^{O(|q| \log(|q|))}} \cdot |\Dmc|$.
\end{restatable}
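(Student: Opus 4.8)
The plan is to combine the two techniques developed earlier in the paper: the $\text{GF}_2$ model-abstraction machinery used for Theorem~\ref{thmtwone} (and Theorem~\ref{thm:gf2aq2exp}) and the UCQ-handling of Theorem~\ref{thm:alciucq2exp}, now carried out over a tree decomposition rather than a $k$-expression. First I would reduce to the Boolean case and then reuse the query-compilation step from Section~\ref{sect:alciucq} almost verbatim. Over the unary/binary signatures considered here, $\text{GF}_2$ still has a tree model property, so the analogue of Lemma~\ref{lem:treeextended1} holds and it suffices to consider tree-extended countermodels. A unary tree CQ translates into a $\text{GF}_2$ formula with one free variable in exactly the way it translates into an \ALCI-concept $C_{p}$, so the ontology $\Omc_q$ and the UCQ $\widehat q$ can be built as before, with the fresh axioms for the markers $A_{p(x)}$ and $A_{p}$ all expressible in $\text{GF}_2$. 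This yields the analogue of Lemma~\ref{lem:qhatworksalc}: on every tree-extended model \Imc of $\Dmc_0$ and $\Omc_q$ we have $\Imc \models q$ iff $\Imc|_{\mn{adom}(\Dmc_0)} \models \widehat q$, so we may track only homomorphisms of subqueries of $\widehat q$ that map into $\mn{adom}(\Dmc_0)$, and $|\Omc_q| = |\Omc| \cdot 2^{O(|q| \log |q|)}$ as before.

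Next I would run a dynamic program on a nice width-$k$ tree decomposition of $\Dmc_0$ of size $O(|\Dmc_0|)$, which may be assumed given (or computed in FPL). The state attached to a bag $B_t$ is a \emph{decorated abstraction}: the $\text{GF}_2$ model abstraction of $B_t$ from the proof of Theorem~\ref{thmtwone} (a consistent assignment of $1$-types to the constants of $B_t$ and of multi-edges to its pairs, of which there are $2^{O(|\Omc_q| \cdot k^2)}$), decorated with a set $S$ of pairs $(p,f)$ where $p$ is a subquery of $\widehat q$ and $f : \mn{var}(p) \to B_t$ is a \emph{partial} map; variables outside the domain of $f$ are understood to be mapped strictly below~$t$. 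For a tree-extended model \Imc of the subdatabase below $t$ together with $\Omc_q$, the induced state records the $1$-types and multi-edges realized on $B_t$ and all $(p,f)$ admitting a homomorphism $h$ of $p$ into \Imc that agrees with $f$ on $B_t$ and sends the remaining variables below the bag, and $\Theta(t)$ collects these states over all such \Imc.

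The computation of $\Theta(t)$ follows the operations of the nice tree decomposition. Leaf and introduce-vertex nodes initialise the $1$-types and seed the single-variable subqueries (using the $A_\top$ atoms as in Section~\ref{sect:alciucq}); introduce-edge nodes fold the new fact into the relevant multi-edge, discard states whose $1$-types and multi-edge admit no compatible $2$-type (the check via $\rightsquigarrow$ that mirrors the $\alpha^r_{i,j}$ step), and extend $S$ by adding atoms $r(x,y)$ to every subquery whose $f$ places $x$ and $y$ on the two endpoints; forget nodes project out the forgotten constant and replace each $f$ by its restriction that drops it, internalising the affected variables; and join nodes merge two states on the same bag, requiring their abstractions to agree on $B_t$ and combining partial matches on variable-disjoint subqueries as in the $\oplus$ step. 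Finally $\Dmc_0, \Omc \not\models q$ iff some state of $\Theta$ at the (empty) root carries no pair $(p,\emptyset)$ with $p$ a full CQ of $\widehat q$. For the running time, the abstractions contribute $2^{O(|\Omc_q| \cdot k^2)} = 2^{|\Omc| \cdot 2^{O(|q|\log|q|)} \cdot k^2}$ and the decorations contribute $2^{|\Qmc| \cdot (k+2)^{|q|}}$ with $|\Qmc| = 2^{O(|q|\log|q|)}$; since both $2^{O(|q|\log|q|)} \cdot k^2$ and $|\Qmc| \cdot (k+2)^{|q|}$ are bounded by $k^{O(|q|\log|q|)}$ for $k \ge 2$, there are $2^{|\Omc| \cdot k^{O(|q|\log|q|)}}$ states per bag, and a single bottom-up pass over the $O(|\Dmc|)$ nodes gives the claimed bound.

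The step I expect to be the main obstacle is the correctness of the homomorphism tracking across forget nodes. Soundness is immediate, since an atom $r(x,y)$ is added to a subquery only when the corresponding fact is introduced, so every recorded full match of $\widehat q$ is genuine. Completeness is more delicate: given an actual homomorphism $h$ of a CQ in $\widehat q$ into $\Imc|_{\mn{adom}(\Dmc_0)}$, I must show that its atoms can be assembled in an order compatible with the decomposition, each atom $r(x,y)$ being added at the introduce-edge node for the fact $r(h(x),h(y))$ while both images still lie in the bag. This relies on the defining property that the bags containing any fixed constant form a connected subtree, so that no image is forgotten before all of its incident query atoms have been introduced. Making this invariant precise, and verifying that it is preserved by the join operation on variable-disjoint partial matches, is where the argument needs the most care; the remaining bookkeeping then follows the patterns already established for Theorems~\ref{thm:alciucq2exp} and~\ref{thmtwone}.
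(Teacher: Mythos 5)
Your high-level plan coincides with the paper's: reduce to the Boolean case, rewrite $(\Omc,q)$ into $(\Omc_q,\widehat q)$ with marker predicates for tree-shaped subqueries (the paper's Lemma~\ref{lem:qhatworksgf}), and run a bottom-up dynamic program over a tree decomposition whose states pair model information with partial matches of subqueries of $\widehat q$; your counting of states also matches the claimed bound. However, the paper deliberately does \emph{not} transplant the $1$-type/multi-edge abstractions of Theorem~\ref{thm:gf2aq2exp} to the treewidth setting; it uses \emph{diagrams}, i.e.\ extensions of $\Dmc_0|_{B_v}$ by marker facts $A_\vp(c)$ and $A_\vp(c,c')$ for all $\vp \in \mn{cl}_i(\Omc'_q)$, which pin down the full $1$- and $2$-types of every constant and pair in a bag, must be satisfiable w.r.t.\ $\Omc_q$ as a whole, and must agree literally between parent and children on shared constants. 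Your multi-edge variant breaks exactly where diagrams do this work, namely at join nodes. In a nice tree decomposition every fact is introduced at exactly one node, so two facts over the same bag pair, say $r(c,c')$ and $s(c,c')$, may be introduced in different subtrees below a join. The two children's abstractions then record different multi-edges for $(c,c')$, so requiring them to ``agree on $B_t$'' wrongly empties $\Theta$ (already for $\Omc=\emptyset$), and if you instead union the multi-edges you must re-run the $\rightsquigarrow$-compatibility check at the join, which your description confines to introduce-edge nodes. Omitting it is unsound: for $\Omc=\{\forall x\forall y\,(r(x,y)\rightarrow\neg s(x,y))\}$ each child's state passes its own introduce-edge check, the $1$-types agree (here $\mn{cl}_1(\Omc)$ contains essentially only the sentence itself), yet $\{r(c,c'),s(c,c')\}$ is unsatisfiable w.r.t.\ \Omc. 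The same point resurfaces in the completeness proof for joins: to glue two models that are only known to agree on the $1$-types of the bag, you must pick, for every bag pair, a single $2$-type extending the union of the recorded multi-edges, and the existence of such a $2$-type is precisely the missing check.

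There is a second, independent gap: GF$_2$ does not have the strict tree model property you invoke. For instance, $\forall x\,(A(x)\rightarrow\exists y\,(r(x,y)\wedge s(x,y)))$ forces parallel edges in every model, so countermodels can only be assumed to be extended by parts of \emph{treewidth}~$1$ (self-loops and parallel edges allowed), which is how the paper defines tree-extended models in Section~\ref{sect:btw}. Correspondingly, the paper's $\mn{trees}(q)$ consists of treewidth-$1$ CQs, expressible as GF$_2$ formulas though not as \ALCI concepts; if you take over the \ALCI notion of tree CQ ``verbatim'', the analogue of Lemma~\ref{lem:qhatworksalc} fails, e.g.\ for $q=\exists x\exists y\,(r(x,y)\wedge s(x,y))$ and $\Dmc_0=\{A(c)\}$ with the ontology above, where $q$ is entailed but matched only in the anonymous part, and $\widehat q$ would contain no marker able to detect this. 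Both defects are repairable (re-check unioned multi-edges at joins and prove a GF$_2$ gluing lemma, or switch to the paper's diagrams; and define $\mn{trees}(q)$ via treewidth~$1$), whereas the forget-node issue you single out as the main obstacle is in fact the unproblematic part, handled by connectivity of bags exactly as you sketch.
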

Theorem~\ref{thmtwtwo} is relevant because we leave open the
corresponding case for cliquewidth, that is, we do not even know
whether OMQ evaluation in $(\text{GF}_2,\text{UCQ})$ on databases of
bounded cliquewidth is in \PTime in data complexity. 
We remark that
the overall running time in Theorem~\ref{thmtwtwo} cannot be improved
to single exponential because OMQ evaluation in $(\ALCI,\text{CQ})$
is \TwoExpTime-complete in combined complexity on databases
of the form $\{ A(c) \}$ \cite{DBLP:conf/dlog/Lutz07}.

We give a single algorithm that yields both Theorem~\ref{thmtwone}
and~\ref{thmtwtwo}.  Assume that we are given as input an OMQ
$Q=(\Omc,q) \in (\text{GF}_2, \text{UCQ})$, and a database $\Dmc_0$ of
treewidth~$k$.  As in the proof of Theorem~\ref{thm:alciucq2exp}, we
may assume w.l.o.g.\ that $Q$ is Boolean.
In contrast to the case of cliquewidth, we do not need to assume that
a tree decomposition is given as part of the input. In fact, a tree
decomposition $T=(V,E, (B_v)_{v \in V})$ of $\Dmc_0$ of width at most
$2k+1$ can be computed in time $2^{O(k)} \cdot
|\mn{adom}(\Dmc_0)|$ 
\cite{korhonen2021single}. 

Our algorithm traverses the tree
decomposition $T$ bottom-up. For every node $v$, we compute a representation of
the models of $\Dmc_v$ and \Omc, where $\Dmc_v$ is the restriction of
$\Dmc_0$ to the constants that occur in the bags of the subtree of $T$
rooted at $v$. 
Along with the
representations of models, we keep track of partial homomorphisms from
CQs in $q$ to these models (we actually rewrite the query beforehand,
in the style of Section~\ref{sect:alciucq}). 
Details are in the appendix.

\section{Conclusion}
\label{sect:concl}

  We leave open the interesting question whether OMQ evaluation in
  $(\text{GF}_2,\text{(U)CQ})$ on databases of bounded cliquewidth is
  in \PTime in data complexity.  While all
  other OMQ languages studied in this paper are easily translated into
  MSO$_1$, this is not the case for
  $(\text{GF}_2,\text{(U)CQ})$. In fact, it is straightforward to
  express the problem \textsc{Monochromatic Triangle}\footnote{Can
    the edges of a given graph be colored with two colors without generating a
    monochromatic triangle?}
  in $(\text{GF}_2,\text{CQ})$ and to our knowledge it is open whether
  this problem can be expressed in MSO$_1$. 

  It would also be interesting to consider natural
  extensions of \ALCI. We believe that adding role inclusions has no
  impact on the obtained results. It is less clear what
  happens for transitive roles, number restrictions, and nominals. In
  the appendix, we make the following observation for the extension
  $\mathcal{ALCF}$ of \ALC with (globally) functional roles.
  \begin{restatable}{theorem}{thmalcf}
    \label{thm:alcf}
    OMQ evaluation in $(\mathcal{ALCF},\text{CQ})$ on databases of
    treewidth~2 is \coNP-hard in data complexity if the unique
    name assumption is not made. 
  \end{restatable}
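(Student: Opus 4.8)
The plan is to establish coNP-hardness by a polynomial-time reduction from graph 3-colorability (which is NP-complete) to the \emph{complement} of OMQ evaluation, that is, to the non-entailment problem $\Dmc \not\models Q$. I would fix a single OMQ $Q = (\Omc, q)$ and, given an undirected graph $G = (V, E)$, construct a database $\Dmc_G$ of treewidth $2$ such that $\Dmc_G \not\models Q$ iff $G$ is 3-colorable. Since non-entailment is then NP-hard, entailment is coNP-hard in data complexity; note that $Q$ stays fixed while only $\Dmc_G$ depends on $G$.

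For the fixed ontology I would use three concept names $C_1, C_2, C_3$ for colors, a fresh concept name $\mathit{Clash}$, a role name $s$ representing edges of $G$, and two \emph{functional} role names $f_0, f_1$ used for bit navigation. The ontology $\Omc$ asserts $\top \sqsubseteq C_1 \sqcup C_2 \sqcup C_3$ and $C_i \sqcap C_j \sqsubseteq \bot$ for $i \neq j$ (each element receives exactly one color), the clash rules $C_i \sqcap \exists s.C_i \sqsubseteq \mathit{Clash}$ for $i \in \{1,2,3\}$ (a monochromatic $s$-edge is detected), and it declares $f_0, f_1$ functional. The query is the AQ $q = \exists x\,\mathit{Clash}(x)$, which is in particular a CQ, so hardness for the AQ transfers to $(\mathcal{ALCF},\text{CQ})$. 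Using $\mathit{Clash}$ rather than trying to detect monochromatic edges directly in $q$ is what lets a single CQ, instead of a union of three, suffice.

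The main obstacle, and the heart of the proof, is that $\Dmc_G$ must let every model reconstruct the \emph{arbitrary} graph $G$ while itself having treewidth only $2$: any encoding that physically links, in the database, the two endpoints of each edge to a shared per-vertex representative produces $G$ as a minor of the Gaifman graph and hence treewidth $\mathrm{tw}(G)$. I would resolve this by \emph{unfolding} the navigation. Assign each vertex $v$ a bit string $\mathrm{id}(v)$ of length $\ell = \lceil \log |V|\rceil$. Fix one global constant $p$. For each edge $e = \{u,v\}$, add a path of \emph{fresh} constants starting at $p$ and spelling $\mathrm{id}(u)$ (bit $b$ traversed via $f_b$), ending in a fresh leaf $a_e$; a second such path spelling $\mathrm{id}(v)$ ending in a fresh leaf $b_e$; and the facts $s(a_e, b_e)$, $s(b_e, a_e)$. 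Since only $p$ is shared across gadgets, the Gaifman graph of $\Dmc_G$ is a union of cycles (for edge $e$: down the two id-paths and back across the $s$-edge) that pairwise meet only in $p$, and such a graph has treewidth $2$. The construction is clearly polynomial, of size $O(|E| \cdot \log |V|)$.

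It remains to argue correctness, which is where functionality and the absence of UNA do the work: in every model they \emph{re-fold} the unfolded database. Starting from $p^\Imc$, functionality of $f_0, f_1$ forces the element reached by spelling a fixed bit string to be unique, so all leaves of paths spelling the same id are identified into one element $\widehat v$; hence every constant standing for an occurrence of $v$ collapses to $\widehat v$, which carries a single color, and the $s$-facts become exactly the edges of $G$ on the elements $\{\widehat v \mid v \in V\}$. (Under UNA these forced identifications of distinct constants would make $\Dmc_G$ unsatisfiable, collapsing the reduction, which is why the hypothesis is needed.) For the `if' direction, a proper 3-coloring of $G$ yields the model that colors each $\widehat v$ accordingly and adds no superfluous $s$-edges; no element then has a same-colored $s$-successor, so $\mathit{Clash}$ is never derived and $\Dmc_G \not\models q$. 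For the `only if' direction, a model with $\Imc \not\models q$ has no $\mathit{Clash}$, so no $s$-edge is monochromatic; reading off the color of each $\widehat v$ gives a proper 3-coloring of $G$ (a model may additionally identify vertices, but identifying two \emph{adjacent} vertices would create a monochromatic self-$s$-edge and thus a clash, so this cannot happen). This yields $\Dmc_G \not\models Q$ iff $G$ is 3-colorable, completing the reduction. I expect the only subtle points to be the verification that gluing cycles at a single vertex gives treewidth exactly $2$ and the careful treatment of voluntary identifications of constants in the `only if' direction.
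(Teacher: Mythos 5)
Your proof is correct, and it takes a genuinely different route from the paper's. The paper reduces from the word problem of polynomial-time NTMs: it invokes an observation of Figueira that there is a treewidth-2 database $\widehat{\Dmc}_{n\times n}$ whose models w.r.t.\ $\{\mn{func}(r_x)\}$ (without UNA) are exactly the models of the $n\times n$ grid, then encodes the machine's computation on that grid via a fixed ontology and reads off rejection with the AQ $F(x)$ at the corner constant, after propagating $F$ back through the grid. You instead reduce from 3-colorability and build your own folding gadget: bit-string addressing from a single root, where functionality of $f_0,f_1$ merges equally-addressed paths of the flower-of-cycles database (treewidth 2) into one representative per vertex, recovering $G$ itself; a Boolean CQ then detects a monochromatic edge. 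Both proofs exploit exactly the same underlying phenomenon --- functional roles without UNA force identifications, so a treewidth-2 unfolding can encode a dense structure --- but your trie gadget replaces Figueira's grid and your direct graph encoding replaces the TM simulation, which makes the argument more elementary and self-contained, with a shorter correctness proof (your handling of voluntary identifications via the self-loop clash is the right way to close that gap). What the paper's version buys in exchange is a slightly stronger result: since it uses the AQ $F(x)$ (with $F$ propagated to a fixed constant), it shows \coNP-hardness already for $(\mathcal{ALCF},\text{AQ})$, whereas your query $\exists x\,\mathit{Clash}(x)$ is a Boolean CQ, not an AQ in the paper's sense (AQs have a free answer variable); to match the stronger form you would only need to add propagation CIs such as $\exists f_b.\mathit{Clash}\sqsubseteq\mathit{Clash}$ and $\exists s.\mathit{Clash}\sqsubseteq\mathit{Clash}$ and query $\mathit{Clash}(x)$ at the root $p$. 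For the theorem as stated (CQs), your proof is complete.
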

  The proof relies on an observation from
  \cite{figueira2016semantically}. We conjecture that when the unique
  name assumption is made, fixed-parameter linearity is regained and
  can be proved using the methods in this paper.

  It would also be very interesting to pursue the idea of
  decomposing databases into a (hopefully small) part of high
  cliquewidth and parts of low cliquewidth, with the aim of achieving
  efficiency in practical applications.

\section*{Acknowledgements}
  Supported by the DFG CRC 1320 EASE - Everyday Activity Science and Engineering.
  
\bibliographystyle{kr}
\bibliography{literature}

\cleardoublepage 

\appendix

\section{Additional Preliminaries}

{\bf $O$-Notation.} The use of the $O$-notation is somewhat subtle
when multiple inputs/variables and exponential functions are involved.
Regarding the multiple inputs, we follow the standard convention that
is mentioned in \cite{cormen2022introduction} and actually rather well explained on the
Wikipedia page on ``Big $O$ Notation''. Assume that we have three
variables $n,m,k$. When writing something like
$f(n,m,k) \in n \cdot 2^{m \cdot 2^{O(k)}}$ what we mean is
$$\log \log f(n,m,k) \in O(\log \log n \cdot \log m \cdot k).$$ As a
convention, we only use the $O$-notation for the highest exponent,
corresponding to the intuition that multiplicative factors are more
`effective' there (given that none of our input sizes/variables can be
zero).

\smallskip
{\bf Homomorphisms and the semantics of CQs.} 
A \emph{homomorphism} from interpretation $\Imc_1$ to interpretation 
$\Imc_2$ is a function $h :\Delta^{\Imc_1} \to \Delta^{\Imc_2}$ such 
that $d \in A^{\Imc_1}$ implies $h(d) \in A^{\Imc_2}$ and 
$(d,e) \in r^{\Imc_1}$ implies $(h(d),h(e)) \in r^{\Imc_2}$ for all 
$d,e \in \Delta^{\Imc_1}$, $A \in \NC$, and $r \in \NR$. 

A CQ $q$ gives rise to a database $\Dmc_q$, often called the
\emph{canonical database} for $q$, obtained by viewing the variables
in $q$ as constants and the atoms as facts.  A homomorphism from CQ
$q$ to interpretation \Imc is a homomorphism from $\Dmc_q$ to \Imc. As
state in the main body of the paper, a
tuple $\bar d \in (\Delta^\Imc)^{|\bar x|}$
is then an \emph{answer} to $q$ on
\Imc 
if there is a homomorphism $h$
from~$q$ to \Imc with $h(\bar x)=\bar d$. 

\smallskip {\bf Tree Interpretation.} 
An interpretation \Imc is a \emph{tree}
if the undirected graph
$G_\Imc = (V,E)$ with $V=\Delta^\Imc$ and
$E=\{ \{d,e\} \mid (d,e) \in r^\Imc \text{ for some } r \in \NR \}$ is
a tree and there are no self loops and multi-edges.

\smallskip {\bf Treewidth.}  Treewidth is a widely used notion that
measures the degree of tree-like\-ness of a graph.
A \emph{tree decomposition} of a database $\Dmc$ is a triple
$(V,E, (B_v)_{v \in V})$ where $(V, E)$ is an undirected tree and
$(B_v)_{v \in V} $ is a family of subsets of $\mn{adom}(\Dmc)$, often
referred to as \emph{bags}, such that:
\begin{enumerate} 
\item for all $c \in \mn{adom}^\Dmc$, $\{v \in V\mid c \in B_v\}$ is
  nonempty and connected in~$(V,E)$;
\item for every $r(c_1,c_2)\in \Dmc$, there is a $v \in V$ with $c_1,c_2 \in B_v$.
\end{enumerate}
The \emph{width} of $(V,E, (B_v)_{v \in V})$ is $\min\{ |B_v| \mid v
 \in V\}-1$.
%
$|B_v \cap B_{v'}| \leq \ell$ and 
$|B_v| \leq k$. 
A database \Dmc \emph{has treewidth} $k$ if it has tree decomposition
of width $k$, but not of width~$k-1$.

We also speak about the treewidth of CQs and of interpretations.  Each
CQ $q$ gives rise to a database $\Dmc_q$, often called the canonical
database for $q$, obtained by viewing the variables in $q$ as
constants and the atoms as facts. The treewidth of $q$ is that of
$\Dmc_q$. Likewise, an interpretation \Imc gives rise to the
(potentially infinite) database $\Dmc_\Imc = \{ A(d) \mid d \in A^\Imc
\}
\cup \{ r(d,e) \mid (d,e) \in r^\Imc\}$ and the treewidth of \Imc is
that of $\Dmc_\Imc$.

\section{More on Example~\ref{ex:teacher_pupils_schools}}

We first make precise the class of databases that we have described
informally in Example~\ref{ex:teacher_pupils_schools}. To obtain a
database from the class, we choose a number
of schools~$n_s$, a number of teachers $n_t$, numbers of pupils
$n^{(1)}_p,\dots,n^{(n_t)}_p$, and a surjective assignment of teachers
to schools $w:\{1,\dots,n_t\} \rightarrow \{1,\dots,n_s\}$.
The associated database then consists of the following facts:
\begin{itemize}

\item $\mn{School}(s_i)$ for $1 \leq i \leq n_s$;

\item $\mn{Teacher}(t_i)$ for $1 \leq i \leq n_t$;

\item $\mn{Pupil}(p_{i,j})$ for $1 \leq i \leq n_t$ and $1 \leq j \leq
  n^{(i)}_p$;

\item $\mn{teaches}(t_i,p_{i,j})$ for $1 \leq i \leq n_t$ and
  $1 \leq j \leq n^{(i)}_p$;

\item $\mn{worksAt}(t_i,s_{w(i)})$ for $1 \leq i \leq n_t$.
  
\end{itemize}
This is illustrated in Figure~\ref{fig:example_class_of_databases_structure}.
\newcommand\single[2]{ 
  \foreach \x in {1,...,#2}{
  \pgfmathsetmacro{\ang}{360/#2}
      \pgfmathparse{(\x-1)*\ang}
      \node[draw,fill=black,circle,inner sep=0.7pt] (#1-\x) at (\pgfmathresult:0.25cm) {};
    }
    \foreach \x [count=\xi from 1] in {1,...,#2}{
      \foreach \y in {\x,...,#2}{
      \path (#1-\xi) edge[-] (#1-\y);
    }
  }
}
\begin{figure}
  \centering
  \begin{tikzpicture}
    \begin{scope}[shift={(-3,0)}]    
      \node [draw, circle, inner sep=8pt, label={[label distance=0cm, scale=0.8]180:$P_1$}] (c1) at (0,0) {};
      \single{a}{3}
      \node[draw,circle,inner sep=0.7pt] (t1) at (0, 1.4) {$t_1$};
    \end{scope}
    \node [draw, circle, inner sep=8pt, label={[label distance=0cm, scale=0.8]180:$P_2$}] (c2) at (0,0) {};
    \node[draw,circle,inner sep=0.7pt] (t2) at (0, 1.4) {$t_2$};
    \node[draw,circle,inner sep=0.7pt] (w1) at (-1.5, 2.7) {$s_1$};
    \single{b}{4}
    \begin{scope}[shift={(2,0)}]   
      \node [draw, circle, inner sep=8pt, label={[label distance=0cm, scale=0.8]180:$P_3$}] (c3) at (0,0) {};
      \single{a}{4}
      \node[draw,circle,inner sep=0.7pt] (t3) at (0, 1.4) {$t_3$};
      \node[draw,circle,inner sep=0.7pt] (w2) at (0, 2.7) {$s_2$};
    \end{scope}
    \draw[->] (t1) -- node[auto, sloped, scale=0.5] {teaches} (c1);
    \draw[->] (t2) -- node[auto, sloped,scale=0.5] {teaches} (c2);
    \draw[->] (t3) -- node[auto, sloped,scale=0.5] {teaches} (c3);
    \draw[->] (t1) -- node[auto, sloped,scale=0.5] {worksAt} (w1) ;
    \draw[->] (t2) -- node[auto, sloped,scale=0.5] {worksAt}  (w1);
    \draw[->] (t3) -- node[auto, sloped,scale=0.5] {worksAt}  (w2);
  \end{tikzpicture}
  \caption{Structure of the considered class of databases.}
  \label{fig:example_class_of_databases_structure}
\end{figure}
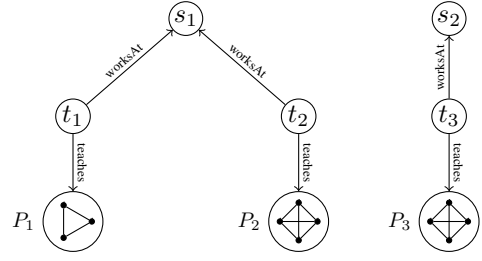
All databases from this class have cliquewidth~3.
Figure~\ref{fig:solution_for_ex_1} presents a 3-expression that
generates the concrete example database given in the main body of the
paper, where 
    \begin{align*}
      \Dmc_1 &= \{\mn{Pupil}(a_1), L_1(a_1)\} 
      & \Dmc_2 &= \{\mn{Pupil}(a_2), L_2(a_2)\} \\
      \Dmc_3 &= \{\mn{Teacher}(b), L_2(b)\} 
      & \Dmc_4 &= \{\mn{School}(c), L_3(c)\} 
    \end{align*}
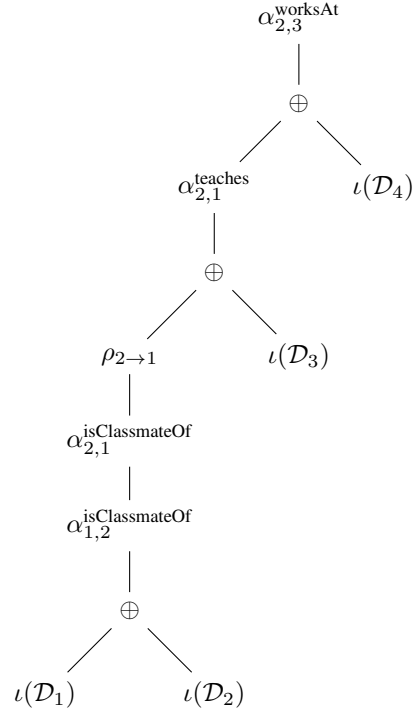
\begin{figure}[t]
  \centering
    \begin{tikzpicture}[align=center, scale=0.75]
    \node {$\alpha_{2, 3}^\text{worksAt}$}
    child {node{$\oplus$}
        child {node {$\alpha_{2, 1}^{\text{teaches}}$}
          child {node{$\oplus$}
            child {node{$\rho_{2 \rightarrow 1}$}
              child {node{$\alpha_{2, 1}^{\text{isClassmateOf}}$}
                child {node{$\alpha_{1, 2}^{\text{isClassmateOf}}$}
                  child {node{$\oplus$}
                    child {node{$\iota(\Dmc_1)$}}
                    child[missing] {}
                    child {node{$\iota(\Dmc_2)$}}
                  }
                }
              }
            }
            child[missing] {}
            child {node{$\iota(\Dmc_3)$}}
          }
        }
      child[missing] {}
      child {node {$\iota(\Dmc_4)$}}
    };
  \end{tikzpicture}
  \caption{$3$-expression generating the database in Example
    \ref{ex:teacher_pupils_schools}.}
    \label{fig:solution_for_ex_1}
\end{figure}  
In general, one starts with creating the first group of pupils along
with their teacher using a 2-expression, then the second group along
with their teacher, for all teachers that work at the same school. One
then takes a disjoint union, introduces the school (with a third
label) that all generated teachers work at, and puts the required
\mn{worksAt} edges. After that, one proceeds in the same way with the
second school, and so forth, at the end taking a disjoint union.

\section{Proofs for Section~\ref{sect:upper}}
\subsection{\ALCI with AQs}

\alciaqublem*
\noindent
\begin{proof}
%
  We prove by induction on the structure of $s$ that the equality in Lemma \ref{lem:1exp-correctness} holds.
  For the base case $s = \iota(\Dmc)$ it holds per definition.
  There are three cases in the induction step that we further split into the ``$\subseteq$'' and ``$\supseteq$'' direction.

  $s = s_1 \oplus s_2$, ``$\subseteq$'': Let $\gamma \in \Theta(s)$ be
  an IOA. We have to identify a model \Imc of $\Dmc_s$ and \Omc such
  that $\gamma=\gamma_{\Imc,\Dmc_s}$.  Since $\gamma \in \Theta(s)$,
  there are $\gamma_1 \in \Theta(s_1)$ and $\gamma_2 \in \Theta(s_2)$
  that result in $\gamma$ being included in $\Theta(s)$.  By the induction
  hypothesis, we find models $\Imc_i$ of $\Dmc_{s_i}$ and \Omc
  such that
  $\gamma_i = \gamma_{\Imc_i,\Dmc_{s_i}}$ for $i \in \{1,2\}$.  We can
  assume the domains of $\Imc_1$ and $\Imc_2$ to be disjoint because
  the active domains of $\Dmc_{s_1}$ and $\Dmc_{s_2}$ are disjoint.
  Choose as \Imc the union of $\Imc_1$ and $\Imc_2$, which  is a model of
  $\Dmc_s$ and \Omc. Moreover, $\mn{tp}_\Imc(c)= \mn{tp}_{\Imc_i}(c)$
  for all $c \in \mn{adom}(\Dmc_{s_i})$, $i \in \{1,2\}$. Thus, the
  definition of the IOA $\gamma_{\Imc,\Dmc_s}$ and the construction of
  $\gamma$ from $\gamma_1=\gamma_{\Imc_i,\Dmc_{s_1}}$ and
  $\gamma_2=\gamma_{\Imc_i,\Dmc_{s_2}}$ imply that
  $\gamma_{\Imc,\Dmc_s}=\gamma$, as required.

  $s = s_1 \oplus s_2$, ``$\supseteq$'': Let $\Imc$ be a model of $\Dmc_s$
  and $\Omc$. Then $\Imc$ is also a model of $\Dmc_{s_1}$ and
  $\Dmc_{s_2}$. Thus by the induction hypothesis,
  $\gamma_{\Imc,\Dmc_{s_i}} \in \Theta(s_i)$ for $i \in \{1,2\}$.
  What remains to be shown is that this results in $\gamma_{\Imc, \Dmc_{s}}$
  to be included in $\Theta(s)$. We may argue similarly to the
  previous case.

  $s = \alpha_{i, j}^r(s')$, ``$\subseteq$'':
  Let $\gamma \in \Theta(s)$ be an IOA.
  We have to identify a model \Imc of $\Dmc_s$ and \Omc such that $\gamma=\gamma_{\Imc,\Dmc_s}$.
  The IOA $\gamma$ is also in $\Theta(s')$ because our algorithm only removes IOAs in the construction of $\Theta(s)$.
  By the induction hypothesis, we find a model $\Imc'$ of $\Dmc_{s'}$ and $\Omc$ such that $\gamma_{\Imc', \Dmc_{s'}} = \gamma$.
  Let $\Imc = \Imc' \cup \{r(a,b) \mid L_i(a) \in \Dmc_{s'} \text{ and } L_j(b) \in \Dmc_{s'}\}$.
  Then $\Imc$ is a model of $\Dmc_s$.
  Moreover, $\mn{tp}_\Imc(c)= \mn{tp}_{\Imc'}(c)$ for all $c \in \mn{adom}(\Dmc_{s})$ because $\gamma$ satisfies the two conditions in the $\Theta(s)$ construction.
  In fact this can be shown by a straightforward induction on the structure of concepts $C \in \mn{cl}(\Omc)$: 
  $C \in \mn{tp}_\Imc(c)$ iff. $C \in \mn{tp}_{\Imc'}(c)$ for all $c \in \mn{adom}(\Dmc_{s})$.
  The most interesting case in this induction is where $\neg \exists r.D \in \mn{tp}_{\Imc'}(c)$.
  For this consider that we also have the NNF of all concepts in our types, so $\forall r.\bar D \in \mn{tp}_{\Imc'}(c)$, 
  with $\bar D$ being $\neg D$ in NNF. So now $\bar D \in \gamma^{\mn{out}}$ per induction hypothesis and thus rule $a)$ prevents $D$
  from being satisfied by any constant labeled $j$.
  Thus $\Imc$ is also a model of $\Omc$ and $\gamma_{\Imc,\Dmc_s} = \gamma_{\Imc', \Dmc_{s'}} = \gamma$ as required.

  $s = \alpha_{i, j}^r(s')$, ``$\supseteq$'':
  Let $\Imc$ be a model of $\Dmc_s$ and $\Omc$.
  Then $\Imc$ is also a model of $\Dmc_{s'}$.
  By the induction hypothesis, $\gamma_{\Imc, \Dmc_{s'}} \in \Theta(s')$.
  Thus $\gamma_{\Imc, \Dmc_{s'}} = \gamma_{\Imc, \Dmc_{s}}$ and what remains to be shown is that $\gamma_{\Imc, \Dmc_{s'}}$ does not get removed in the construction of $\Theta(s)$.
  This can be shown by a straightforward induction on the structure of concepts $C \in \mn{cl}(\Omc)$: if $C$ satisfies the ``if'' part of one condition, in the construction of $\gamma_{\Imc, \Dmc_{s}} \in \Theta(s)$ from $\gamma_{\Imc, \Dmc_{s'}} \in \Theta(s')$, then the ``then'' part is also satisfied. 

  $s = \rho_{i \rightarrow j}(s')$, ``$\subseteq$'':
  Let $\gamma \in \Theta(s)$ be an IOA.
  We have to identify a model \Imc of $\Dmc_s$ and \Omc such that $\gamma=\gamma_{\Imc,\Dmc_s}$.
  Since $\gamma \in \Theta(s)$ there is a $\widehat{\gamma} \in \Theta(s')$ that results in $\gamma$ being included in $\Theta(s)$.
  By the induction hypothesis, we find a model $\Imc'$ of $\Dmc_{s'}$ and $\Omc$ such that $\gamma_{\Imc', \Dmc_{s'}} = \widehat{\gamma}$.
  Let $\Imc$ be just like $\Imc'$ except $L_i^\Imc = \emptyset$ and $L_j^\Imc = L_i^{\Imc'} \cup L_j^{\Imc'}$.
  Thus, $\Imc$ is a model of $\Dmc_s$ and $\Omc$ and the definition of the IOA $\gamma_{\Imc,\Dmc_s}$ and the construction of $\gamma$ from $\widehat{\gamma}$ imply that $\gamma_{\Imc,\Dmc_s} = \gamma$, as required.

  $s = \rho_{i \rightarrow j}(s')$, ``$\supseteq$'':
  Let $\Imc$ be a model of $\Dmc_s$ and $\Omc$.
  Moreover let $\Imc'$ be just like $\Imc$ except $L_i^{\Imc'} = \{a \mid L_i(a) \in \Dmc_{s'}\}$ and $L_j^{\Imc'} = L_j^{\Imc} \setminus L_i^{\Imc'}$.
  Then $\Imc'$ is a model of $\Dmc_{s'}$ and $\Omc$.
  Thus by the induction hypothesis, $\gamma_{\Imc', \Dmc_{s'}} \in \Theta(s')$.
  What remains to be shown is that this results in $\gamma_{\Imc, \Dmc_{s}}$ to be included in $\Theta(s)$. 
  We can argue similarly to the previous case.
\end{proof}
\alciaqubthm*
\noindent
\begin{proof}
  Correctness of the algorithm follows immediately from Lemma
  \ref{lem:1exp-correctness}.  Next we analyze the running time.

  For every subexpression $s$ of $s_0$, we construct a set of IOAs
  $\Theta(s)$. Every IOA is of size $O(|\Omc| \cdot k)$ and there are
  at most $2^{O(|\Omc| \cdot k)}$ IOAs.  In the construction of
  $\Theta(s_0)$, we do a bottom-up walk over $s$, visiting every
  subexpression $s_0$ only once. By assumption, the number of such
  subexpressions is $O(|\Dmc|)$. The computation of each $\Theta(s)$
  is clearly possible in time $2^{O(|\Omc| \cdot k)}$.  Let us only
  mention that in the case $s = \iota(\Dmc)$, we may construct the set
  of all types and then remove all types $t$ such that one of the following
  conditions is satisfied:
  \begin{itemize}

  \item there is an $A(c) \in \Dmc$ with $A \notin t$;

  \item there is $r(c,c) \in \Dmc$ and $\forall r . C \in t$
    with $C \notin t$;

  \item there is $r(c,c) \in \Dmc$ and $\forall r^- . C \in t$
    with $C \notin t$.

  \end{itemize}
  We may then construct an IOA $\gamma \in \Theta(s)$ from each of the
  remaining types by setting, for $1 \leq i \leq k$,
  $\gamma^{\mn{in}}(i)=\{ C \mid \forall r . C \in t \}$ and 
  $\gamma^{\mn{out}}(i)=\{ \forall r C \mid \forall r . C \in t \}$.
  All
  this is clearly possible in time $O(k) \cdot 2^{O(|\Omc|)}$. In
  particular, the set of all types can be constructed by enumerating
  all subsets $t \subseteq\mn{sub}(\Omc)$ and then checking for
  satisfiability w.r.t.\ \Omc, which is possible in time
  $2^{O(|\Omc|)}$ using a type elimination procedure
  \cite{DBLP:books/daglib/0041477}.  The overall running time achieved
  is thus $2^{O(|\Omc| \cdot k)} \cdot |\Dmc|$, which is $2^{O(|Q| \cdot k)} \cdot |\Dmc|$.
\end{proof}

\subsection{\ALCI with UCQs}

We first argue that we may assume w.l.o.g.\ that $Q$ is
Boolean. Assume that it is not. Then we are given as an additional
input a tuple $\bar c \in \mn{adom}(\Dmc_0)^{|\bar x |}$.  Let
$\bar x = x_1,\dots,x_n$ and $\bar c=c_1,\dots,c_n$. We introduce
fresh concept names $A_1,\dots,A_n$. Let the Boolean UCQ $q'$ be
obtained from $q$ by adding the atoms $A_1(x_1),\dots,A_n(x_n)$ to
every CQ in it and quantifying all answer variables. Let the database
$\Dmc'_0$ be obtained from $\Dmc_0$ by adding the facts
$A_1(c_1),\dots,A_n(c_n)$. We may then decide whether
$\bar c \in Q(\Dmc_0)$ by checking whether $\Dmc'_0 \models Q'$ where
$Q'=(\Omc,q')$. It remains to note that $\Dmc'_0$ admits a
$k$-expression that is isomorphic to $s_0$ and can be constructed from
it in linear time.

\qhatworksalclem*
\noindent
\begin{proof}
  Let $\Imc$ be a tree-extended model of $\Dmc_0$ and $\Omc_q$
  such that $\Imc \models q$.  We construct a disjunct $\widehat{q}_d$ of
  $\widehat q$ such that there is a homomorphism from $\widehat{q}_d$ to
  $\Imc|_{\mn{adom}(\Dmc_0)}$.  Since $\Imc \models q$, there is a
  disjunct $q_d$ of $q$ and a homomorphism $h$ from
  $q_d$ to $\Imc$.  Consider the contraction $p$ of
  $q_d$ obtained by identifying variables $x$ and $y$ if
  $h(x)=h(y)$. Let the restriction of $h$ to the variables in $p$ also
  be called $h$. Then $h$ is an injective homomorphism from $p$ to
  $\Imc$.  Let $S = h^{-1}(\mn{adom}(\Dmc_0))$.  We define $\widehat{q}_d$ to be
  $p|_S$ extended as follows: For every maximal connected component
  $p'$ of $p^-$ that contains exactly one variable $x_0 \in S$, add
  $A_{p'(x_0)}(x_0)$ to $\widehat{q}_d$.  
  For every maximal connected component
  $p'$ of $p^-$ that contains no variable from~$S$, add $A_{p'}(z)$ to
  $\widehat{q}_d$, where $z$ is a new variable. 
  
  To prove $\Imc|_{\mn{adom}(\Dmc_0)} \models \widehat{q}_d$ we construct a homomorphism
  $\widehat{h}$ from $\widehat{q}_d$ to $\Imc|_{\mn{adom}(\Dmc_0)}$. For every $x \in S$ we define $\widehat{h}(x) = h(x)$ 
  and every newly introduced variable $z \in \mn{var}(\widehat{q}_d) \setminus S$ that was introduced for a maximal connected component $p'$ of $p^-$ is mapped to any constant of the database
  that lies in the same connected component as
  the images of $\mn{var}(p')$ under $h$.

  Let us first consider the atoms in $\widehat q_d$ of the form $A_{p'(x_0)}(x_0)$.
  Since $h$ is a homomorphism from $p$ to $\Imc$,
  we have $h(x_0) \in C_{p'(x_0)}^\Imc$.
  Consequently $\widehat h(x_0)$ satisfies $A_{p'(x_0)}(x_0)$ because of 
  the CI $C_{p'(x_0)} \sqsubseteq A_{p'(x_0)} \in \Omc_q$. 
  Now consider the atoms of the form $A_{p'}(z)$.
  Analogously to before there has to be an $a \in C_{p'}^\Imc$.
  Using the CIs $C_{p'} \sqsubseteq A_{p'}$ and $\exists r . A_{p'} \sqsubseteq A_{p'}$ and tree extendedness, we can argue 
  that $b \in A_{p'}^{\Imc|_{\mn{adom}(\Dmc_0)}}$ for every $b$ that lies in the same connected component as $a$.
  Thus $\Imc|_{\mn{adom}(\Dmc_0)} \models \widehat{q}_d$ and by definition, $\widehat{q}_d$ is a
  disjunct of $\widehat q$, which implies
  $\Imc|_{\mn{adom}(\Dmc_0)} \models \widehat q$.
  
  For the other direction, assume that $\Imc|_{\mn{adom}(\Dmc_0)} \models \widehat q$.
  Thus, there is a disjunct $\widehat{q}_d$ of $\widehat q$ and a homomorphism $\widehat{h}$ from $\widehat{q}_d$
  to $\Imc|_{\mn{adom}(\Dmc_0)}$.
  Let $p$ be the disjunct of $q$ that $\widehat{q}_d$ was constructed from.
  We want to show $\Imc \models p$ so we construct a homomorphism $h$ from $p$ to $\Imc$.
  The following observation is a straightforward implication of the definition of $C_p$ 
  and helps us construct partial homomorphisms.
  \begin{enumerate}[label=$(*)$]
    \item Let $\Imc$ be a tree extended model of $\Omc_q$ and $p(x)$ a unary tree CQ. 
    For any $a \in C_{p(x)}^\Imc$ we can construct a homomorphism $\bar h$ from $p(x)$ to $\Imc$ with $\bar h(x) = a$. 
  \end{enumerate}
  We define $h(x) = \widehat{h}(x)$ for all $x \in \mn{var}(p) \cap \mn{var}(\widehat{q}_d)$.
  According to the construction of $\widehat q$ we can categorize the remaining variables in $p$ 
  as being in one of two types of tree CQs.
  We construct a homomorphism for each of those subqueries.
  \begin{enumerate}
    \item A tree CQ $p'$ that is a subquery of $p$ with $\mn{var}(p') \cap \mn{var}(\widehat{q}_d) = \{x_0\}$. 
    In this case there is single shared variable $x_0$ with $A_{p'(x_0)}(x_0) \in \widehat{q}_d$ as defined in step~1 
    of the construction of $\widehat q$.
    We argue that there is a homomorphism $\bar{h}$ from $p'$ to $\Imc$.
    First, we know $\widehat h(x_0) \in C_{p'(x_0)}^\Imc$ because of the CI $A_{p'(x_0)} \sqsubseteq C_{p'(x_0)}$.
    Thus, $(*)$ applied on $p'(x_0)$ and $\widehat h(x_0)$ gives us the homomorphism $\bar{h}$ with $\bar{h}(x_0) = \widehat{h}(x_0)$.
    \item A tree CQ $p'$ that is a subquery of $p$ with $\mn{var}(p') \cap \mn{var}(\widehat{q}_d) = \emptyset$. 
    According to step~2 in the construction of $\widehat q$ there is an $A_{p'}(z) \in \widehat{q}_d$ with $z$ a fresh variable.
    The CI $A_{p'} \sqsubseteq \exists r_{p'} . C_{p'}$ implies there is a domain element 
    $a \in \Delta^\Imc$ with $a \in C_{p'}^\Imc$.
    Let $z'$ be the variable that was chosen as the root
    to interpret $C_{p'}$ as a $\ALCI$ concept.
    Then $(*)$ applied on $p'(z)$ and $a$ and gives us a homomorphism $\bar{h}$ from $p'(z')$ to $\Imc$ 
    and thus also from $p'$ to $\Imc$.
    Note that $\bar{h}$ maps only to variables not used in any other partial homomorphism 
    because for each subquery $p'$ there is a unique CI using the fresh role $r_{p'}$.
  \end{enumerate}

  It is easy to show that all these partial homomorphisms
  map shared variables to the same domain element.
  Thus combining them gives us the required homomorphism from $p$ to $\Imc$ and $\Imc \models p$ 
  implies $\Imc \models q$.
\end{proof}

We add a technical detail left out in the main part of the paper for
readability. We assume that the ontology $\Omc_q$ contains an
inclusion $\top \sqsubseteq A_\top$ and $A_\top$ does not occur
elsewhere, that is, $A_\top$ is a concept name that represents
$\top$. We may then also assume that every CQ $p$ in $\widehat q$
contains the atom $A_\top(x)$ for every $x \in \mn{var}(p)$. This is
important for the case $s = \iota(\Dmc)$ in the proof below.

\alciucqcorrectnesslem*
\noindent
\begin{proof}
We prove the statement by induction on the structure of $s$.
If $s = \iota(\Dmc)$, the statement holds by definition.
For the induction step, there are three cases to consider.

\smallskip
\noindent
\textbf{Case 1:} $s = s_1 \oplus s_2$. First, let $\gamma \in \Theta(s)$, we need to show that
there is a tree-extended model $\Imc$ of $\Dmc_s$ and $\Omc_q$
such that $\gamma = \gamma_{\Imc, \Dmc_s}$.
By definition of $\Theta$, there is a pair of decorated IOAs
$(\gamma_1, \gamma_2) \in \Theta(s_1) \times \Theta(s_2)$ that produced $\gamma$.
By the induction hypothesis, there are tree-extended models
$\Imc_j$ of $\Dmc_{s_j}$ and $\Omc_q$, $j \in \{1, 2\}$
such that $\gamma_j = \gamma_{\Imc_j, \Dmc_{s_j}}$.
Let $\Imc$ be the disjoint union of $\Imc_1$ and $\Imc_2$.
Clearly, $\Imc$ is a tree-extended model of $\Dmc_s$ and $\Omc_q$.

We argue that $\gamma = \gamma_{\Imc, \Dmc_s}$.
Since the extensions of concepts in $\Imc_1$ and $\Imc_2$ are not changed by
taking the disjoint union of these two models,
we have
\begin{align*}
\gamma^\mn{in} &= \gamma_1^\mn{in} \cap \gamma_2^\mn{in}\\
&= \gamma_{\Imc_1, \Dmc_{s_1}}^\mn{in} \cap \gamma_{\Imc_2, \Dmc_{s_2}}^\mn{in}\\
&= \mn{cl}^\ast(\Omc) \cap \bigcap_{L_i(c) \in \Dmc_{s_1}} \mn{tp}_{\Imc_1}(c) \cap \bigcap_{L_i(c) \in \Dmc_{s_2}} \mn{tp}_{\Imc_2}(c)\\
&= \mn{cl}^\ast(\Omc) \cap \bigcap_{L_i(c) \in \Dmc_{s}} \mn{tp}_{\Imc}(c)\\
&= \gamma^\mn{in}_{\Imc, \Dmc_s}
\end{align*}
and by a similar calculation, $\gamma^\mn{out} = \gamma_{\Imc, \Dmc_s}^\mn{out}$.

It remains to argue that $\gamma^S = \gamma_{\Imc, \Dmc_s}^S$.
To show $\gamma^S \subseteq \gamma_{\Imc, \Dmc_s}^S$, let $(p, f) \in \gamma^S$.
If $(p, f) \in \gamma_1^S \cup \gamma_2^S$, this is witnessed by a homomorphism
$h$ from $p$ to $\Imc_j$ for some $j \in \{1, 2\}$. The same $h$ is a homomorphism from
$p$ to $\Imc$ that witnesses $(p, f) \in \gamma_{\Imc, \Dmc_s}^S$. Otherwise, $(p, f)$ takes the form
$(p_1 \cup p_2, f_1 \cup f_2)$ with $(p_i,f_i) \in \gamma^S_i$ for $i \in \{1,2\}$,
$p_1 \cup p_2 \in \Qmc$ and $\mn{var}(p_1) \cap \mn{var}(p_2) = \emptyset$.
Thus, there are homomorphisms $h_i$ from $p_i$ to $\Imc_i$ that witness
$(p_i, f_i) \in \gamma_i^S$ for $i \in \{1, 2\}$.
The union of these two homomorphisms witnesses that $(p, f) \in \gamma_{\Imc, \Dmc_s}^S$.
To show $\gamma^S \supseteq \gamma_{\Imc, \Dmc_s}^S$, let $(p, f) \in \gamma_{\Imc, \Dmc_s}^S$,
and let $h$ be a homomorphism from $p$ to $\Imc$ witnessing this.
Since $\Imc$ is the disjoint union of $\Imc_1$ and $\Imc_2$, we can
distinguish two cases:
First, if $\mn{range}(h) \subseteq \Delta^{\Imc_i}$ for some $i \in \{1, 2\}$,
then $h$ witnesses that $(p, f) \in \gamma_i^S \subseteq \gamma^S$.
Otherwise, we can represent $p$ as the disjoint union of $p_1$ and $p_2$,
such that $h|_{p_1}$ is a homomorphism from $p_1$ to $\Imc_1$ and
$h|_{p_2}$ is a homomorphism from $p_2$ to $\Imc_2$, and we set
$f_i = f|_{\mn{var}(p_i)}$ for $i \in \{1, 2\}$.
These two homomorphisms witness that $(p_i, f_i) \in \gamma_{\Imc_i, \Dmc_{s_i}}^S$
for $i \in \{1, 2\}$. By the induction hypothesis, we have $(p_i, f_i) \in \gamma_i^S$ for $i \in \{1, 2\}$,
and it follows that $(p, f) = (p_1 \cup p_2, f_1 \cup f_2) \in \gamma^S$.
This concludes the argument that $\gamma^S = \gamma_{\Imc, \Dmc_s}^S$,
and thus, $\gamma = \gamma_{\Imc, \Dmc_s}$.

For the other direction, let $\Imc$ be a tree-extended model of $\Dmc_s$ and $\Omc$ and consider
$\gamma = \gamma_{\Imc, \Dmc_s}$. We need to show that $\gamma \in \Theta(s)$.
Since $\Imc$ is tree-extended and $\Dmc_s$ is the disjoint union of $\Dmc_{s_1}$ and $\Dmc_{s_2}$,
there are interpretations $\Imc_1$ and $\Imc_2$ such that
$\Imc$ is the disjoint union of $\Imc_1$ and $\Imc_2$, and
$\Imc_i$ is a tree-extended model of $\Dmc_{s_i}$ and $\Omc$ for $i \in \{1, 2\}$.
Thus, there are decorated IOAs $\gamma_i = \gamma_{\Imc_i, \Dmc_{s_i}}$ for $i \in \{1, 2\}$.
By the induction hypothesis, we have $\gamma_i \in \Theta(s_i)$ for $i \in \{1, 2\}$.
It then follows that $\gamma$ is the decorated IOA obtained from the
pair $(\gamma_1, \gamma_2)$. The argument is analogous to the proof that
$\gamma = \gamma_{\Imc, \Dmc_s}$ from the previous direction.
Thus, $\gamma \in \Theta(s)$.

\smallskip
\noindent
\textbf{Case 2:} $s = \alpha_{i, j}^r(s')$. First, let $\gamma \in \Theta(s)$.
We need to show that there is a tree-extended model $\Imc$ of $\Dmc_s$ and $\Omc$
such that $\gamma = \gamma_{\Imc, \Dmc_s}$. Let $\widehat{\gamma} \in \Theta(s')$ be
the decorated IOA that $\gamma$ was obtained from.
By the induction hypothesis, there is a tree-extended model $\Imc'$ of $\Dmc_{s'}$ and $\Omc$
such that $\widehat{\gamma} = \gamma_{\Imc', \Dmc_{s'}}$.
Let $\Imc = \Imc' \cup \{r(d,e) \mid L_i(d) \in \Dmc_{s'} \text{ and } L_j(e) \in \Dmc_{s'}\}$.
Then $\Imc$ is a model of $\Dmc_s$.
To show that $\Imc$ is also a model of $\Omc_q$, it suffices to show that
$\mn{tp}_\Imc(c)= \mn{tp}_{\Imc'}(c)$ for all $c \in \mn{adom}(\Dmc_{s})$.
This can be shown by a straightforward induction on the structure of concepts
$C \in \mn{cl}^\ast(\Omc_q)$: $C \in \mn{tp}_\Imc(c)$ iff $C \in \mn{tp}_{\Imc'}(c)$
for all $c \in \mn{adom}(\Dmc_{s})$.
The only interesting case in the induction is the case where $C$ takes the form
$\forall r. C'$, but this case is proved using Conditions~a) and~b).
Thus, $\Imc$ is also a model of $\Omc_q$ and $\gamma_{\Imc,\Dmc_s} = \gamma_{\Imc', \Dmc_{s'}} = \gamma$ as required.
We argue that $\gamma = \gamma_{\Imc, \Dmc_s}$.
It is easy to see that $\gamma^\mn{in} = \gamma_{\Imc, \Dmc_s}^\mn{in}$
and $\gamma^\mn{out} = \gamma_{\Imc, \Dmc_s}^\mn{out}$.
To show that $\gamma^S \subseteq \gamma_{\Imc, \Dmc_s}^S$, let $(p, f) \in \gamma^S$.
If also $(p, f) \in \widehat{\gamma}^S$, then $(p, f) \in \gamma_{\Imc, \Dmc_s}^S$ follows using
the induction hypothesis. If $(p, f) \notin \widehat{\gamma}^S$, then $(p, f)$ is one of the pairs that was
added to $\gamma^S$ when going from $\Theta(s')$ to $\Theta(s)$.
Thus, there is a pair $(p', f) \in \widehat{\gamma}^S$ and $p$ can be obtained from
$p'$ by adding some atoms $r(x,y)$ subject to the condition that
$f(x)=i$ and $f(y)=j$.
Since $\widehat{\gamma} = \gamma_{\Imc', \Dmc_s'}$, there is a homomorphism
$h$ from $p'$ to $\Imc'$ that witnesses $(p', f) \in \gamma_{\Imc', \Dmc_s'}$.
This $h$ is also a homomorphism from $p$ to $\Imc$
that witnesses $(p, f) \in \gamma_{\Imc, \Dmc_s}^S$, since
$\Imc$ was constructed such that $(h(x), h(y)) \in r^\Imc$ for every
pair $(x, y)$ with $f(x) = i$ and $f(y) = j$.
To show $\gamma^S \supseteq \gamma_{\Imc, \Dmc_s}^S$, let $(p, f) \in \gamma_{\Imc, \Dmc_s}^S$.
There is a homomorphism $h$ from $p$ to $\Imc$ witnessing this.
Let $p'$ be obtained from $p$ by removing all $r$-atoms $r(x,y)$ where
$L_i(x) \in \Dmc_s$ and $L_j(y) \in \Dmc_s$.
The homomorphism $h$ also witnesses that $(p', f) \in \gamma_{\Imc', \Dmc_{s'}}^S = \widehat{\gamma}^S$.
By definition of $\Theta$, $(p', f) \in \widehat{\gamma}^S$ yields $(p, f) \in \gamma^S$.

For the other direction, let $\Imc$ be a tree-extended model of $\Dmc_s$ and $\Omc_q$,
and consider $\gamma = \gamma_{\Imc, \Dmc_s}$.
We need to show that $\gamma \in \Theta(s)$.
Let $\Imc'$ be the interpretation obtained from $\Imc$ as follows:
Removing all pairs $(d, e)$ from
$r^\Imc$ where $L_i(d) \in \Dmc$, $L_j(e) \in \Dmc$ and $r(d, e) \notin \Dmc_{s'}$. For each such removed pair $(d, e)$, introduce two fresh
individuals $d'$ and $e'$ to $\Delta^{\Imc'}$,
add $(d, e')$ and $(d', e)$ to $r^{\Imc'}$,
and attach a tree model of $\mn{tp}_{\Imc}(d)$ to $d'$
and a tree model of $\mn{tp}_{\Imc}(e)$ to $e'$.
Clearly, $\Imc'$ is a tree-extended model of $\Dmc_{s'}$.
To show that $\Imc'$ is also a model of $\Omc_q$, it suffices to show that
$\mn{tp}_\Imc(c)= \mn{tp}_{\Imc'}(c)$ for all $c \in \mn{adom}(\Dmc_{s})$.
This can be shown by a straightforward induction on the structure of concepts
$C \in \mn{cl}^\ast(\Omc_q)$: $C \in \mn{tp}_\Imc(c)$ iff $C \in \mn{tp}_{\Imc'}(c)$
for all $c \in \mn{adom}(\Dmc_{s})$.
The only interesting case in the induction is the case where $C$ takes the form
$\forall r. C'$, but this case is proved using Conditions~a) and~b).
Thus, $\Imc$ is also a model of $\Omc_q$ and
$\gamma_{\Imc,\Dmc_s} = \gamma_{\Imc', \Dmc_{s'}} = \gamma$ as required.

By the induction hypothesis, there is a decorated IOA
$\widehat{\gamma} \in \Theta(s')$
such that $\widehat{\gamma} = \gamma_{\Imc', \Dmc_{s'}}$.
Now it can be argued analogously to the previous direction,
that $\gamma$ is the decorated IOA obtained from $\widehat{\gamma}$,
so $\gamma \in \Theta(s)$.

\smallskip
\noindent
\textbf{Case 3:} $s = \rho_{i \rightarrow j}(s')$. First, let $\gamma \in \Theta(s)$.
We need to show that there is a tree-extended model $\Imc$ of $\Dmc_s$ and $\Omc$
such that $\gamma = \gamma_{\Imc, \Dmc_s}$.
Let $\widehat{\gamma} \in \Theta(s')$ be the decorated IOA that $\gamma$ was obtained from.
By the induction hypothesis, there is a tree-extended model $\Imc'$ of $\Dmc_{s'}$ and $\Omc$ such that
$\widehat{\gamma} = \gamma_{\Imc', \Dmc_{s'}}$. Let $\Imc$ be the interpretation obtained from $\Imc'$
by setting $L_i^\Imc = \emptyset$ and $L_j^\Imc = L_i^{\Imc'} \cup L_j^{\Imc'}$.
Since $\Imc'$ is a tree-extended model of $\Dmc_{s'}$, $\Imc$ is a tree-extended
model of $\Dmc_s$. It is easy to show that $\gamma = \gamma_{\Imc, \Dmc_s}$.

For the other direction, let $\Imc$ be a tree-extended model of $\Dmc_s$ and $\Omc$,
and set $\gamma = \gamma_{\Imc, \Dmc_s}$. We need to show that $\gamma \in \Theta(s)$.
Let $\Imc'$ be the interpretation that is obtained from $\Imc$,
but $L_i^{\Imc'} = \{a \in \Delta^{\Imc'} \mid L_i(a) \in \Dmc_{s'}\}$
and $L_j^{\Imc'} = \{a \in \Delta^{\Imc'} \mid L_j(a) \in \Dmc_{s'}\}$.
This interpretation $\Imc'$ is a tree-extended model of $\Dmc_{s'}$ and $\Omc$, so
by the induction hypothesis, there is a decorated IOA $\widehat{\gamma} \in \Theta(s')$ with
$\widehat{\gamma} = \gamma_{\Imc', \Dmc_{s'}}$. It is straightforward to show that
$\gamma$ is the decorated IOA obtained from $\widehat{\gamma}$, so $\gamma \in \Theta(s)$.
\end{proof}

\alciucqtwoexpthm*
\noindent
\begin{proof}
First, we prove correctness of the algorithm, that is,
$\Dmc_0, \Omc \not \models q$ if and only if there is a $\gamma \in \Theta(s)$
such that $\gamma^S$ contains no pair $(p, f)$ with $p$ a CQ in $\widehat q$. 
First, let $\Dmc_0, \Omc \not \models q$.
Then, by Lemma~\ref{lem:treeextended1},
there is a tree-extended model $\Imc$ of $\Dmc_0$ and $\Omc$
such that $\Imc \not \models q$.
The model $\Imc$ of $\Omc$ can be extended to a model $\Imc'$ of $\Omc_q$
as follows:
\begin{itemize}
\item For every $p \in \mn{trees}(q)$ and $x \in \mn{var}(p)$, we have the CIs
$A_{p(x)} \sqsubseteq C_{p(x)}$ and $C_{p(x)} \sqsubseteq A_{p(x)}$ in $\Omc_q$.
To satisfy these, for every $d \in C_{p(x)}^\Imc$, add $d$ to $A_{p(x)}^{\Imc'}$.
\item For every $p \in \mn{trees}(q)$,
we have the CIs $A_p \sqsubseteq \exists r_p. C_p$, $C_p \sqsubseteq A_p$
and $\exists r.A_p \sqsubseteq A_p$ in $\Omc_q$ for every role $r$ used in $\Omc$.
To satisfy these, repeat the following process indefinitely: For every $d \in C_p^{\Imc'}$,
add every $e \in \Delta^{\Imc'}$ that lies in the same connected component as $d$ to $A_p^{\Imc'}$,
and, if $e$ does not yet have an $r_p$-successor,
introduce a fresh $r_p$-successor of $e$ to an element
that is the root of a tree model of $C_p$.
Finally, let $\Imc'$ be the limit of this process.
\end{itemize}
It can be checked that $\Imc'$ is a tree-extended model of $\Omc_q$ and $\Imc \not \models q$.
By Lemma~\ref{lem:qhatworksalc}, we have
$\Imc'|_{\mn{adom}(\Dmc_0)} \not \models \widehat q$.
Consider the decorated IOA $\gamma = \gamma_{\Imc', \Dmc_s}$.
Since $\Imc'|_{\mn{adom}(\Dmc_0)} \not \models \widehat q$, $\gamma^S$ does not contain
any pair $(p, f)$ with $p$ a CQ in $\widehat q$.
By Lemma~\ref{lem:alci-ucq-correctness}, we have $\gamma \in \Theta(s)$.

For the other direction, let $\gamma \in \Theta(s)$ such that $\gamma^S$ contains no pair
$(p, f)$ with $p$ a CQ in $\widehat q$. By Lemma~\ref{lem:alci-ucq-correctness},
$\gamma = \gamma_{\Imc, \Dmc_s}$
for some tree-extended model $\Imc$ of $\Dmc_s$ and $\Omc_q$.
It remains to show that $\Imc \not \models q$.
Assume for the sake of contradiction that $\Imc \models q$.
By Lemma~\ref{lem:qhatworksalc}, $\Imc|_{\mn{adom}(\Dmc_0)} \models \widehat q$,
so there is a homomorphism from some disjunct $p$ of $q$ to $\Imc|_{\mn{adom}(\Dmc_0)}$.
This implies that $\gamma^S$ contains the pair $(p, f)$, where $f(x)$ is the unique label $i$ such that
$L_i(h(x)) \in \Dmc_s$ for every $x \in \mn{var}(p)$. This is a contradiction, since we assumed
that $\gamma^S$ contains no such pair. Thus, the algorithm is correct.

Now we analyse the running time.
We first analyse the number of CQs in $\Qmc$.
For any disjunct $p$ of $q$, there are at most $|p|^{|p|}$ many contractions of $p$ and
at most $2^{|p|}$ ways to choose a set $S$.
Once a set $S$ is chosen, this yields a unique CQ in $\widehat q$,
so $\widehat q$ has at most
$|q| \cdot |q|^{|q|} \cdot 2^{|q|} = 2^{O(|q| \log(|q|))}$
disjuncts in $\widehat q$,
where each disjunct is of size at most $|q|$. The set $\Qmc$ consists of all subqueries
of disjuncts of $\widehat q$, so there are at most
$2^{|q|} \cdot 2^{O(|q| \log(|q|))} = 2^{O(|q| \log(|q|))}$ CQs in $\Qmc$.

The number of decorated IOAs can be bounded as follows.
For $\gamma^\mn{in}$, there are at most $2^{|\Omc|\cdot k}$ possibilities,
the same bound holds for the number of possible $\gamma^\mn{out}$.
For $\gamma^S$, we first count the number of pairs $(p, f)$.
As mentioned above, there are $2^{O(|q| \log(|q|))}$ many possibilities for $p$.
For $f$, there are $k^{|q|}$ many possibilities.
This yields at most $2^{O(|q| \log(|q|))} \cdot k^{|q|}$ many different pairs $(p, f)$,
and at most
$2^{k^{O(|q| \log(|q|))}}$ 
  many possible sets $\gamma^S$.  Combining the possibilities for
  $\gamma^\mn{in}$, $\gamma^\mn{out}$ and $\gamma^S$, we have at most
  $2^{|\Omc| \cdot  k^{O(|q| \log(|q|))}}$
  different
  decorated IOAs.

The algorithm goes bottom up through all subexpressions of the given $k$-expression,
which takes $O(|\Dmc_0|)$ iterations. In each iteration,
a set of decorated IOAs is computed, of which there are
at most
$2^{|\Omc|\cdot 
  k^{O(|q| \log(|q|))}}$ many.
Constructing the set $\Theta(s)$, where $s=s_1 \oplus s_2$ or $s = \alpha_{i,j}^{r}(s')$
or $s = \rho_{i \rightarrow j} (s')$ is straightforward and takes
time $2^{|\Omc|\cdot k^{O(|q| \log(|q|))}}$.
If $s = \iota(\Dmc)$ and $L_i(c)$ is the unique fact of this form in
\Dmc, we construct $\Theta(s)$ as follows:
Iterate over all $\Omc_q$ types $t$ such that  $\mn{tp}_{\Imc}(c)=t$ for some model \Imc of
\Dmc and $\Omc_q$.
For each such $t$, we compute one decorated IOA $\gamma$, where
\begin{align*}
\gamma^{\mn{in}}(i) &= \mn{cl}^\ast(\Omc_q) \cap t  \text{ and } \gamma^{\mn{in}}(j) = \emptyset \text{ for all }j \neq i, \\
\gamma^{\mn{out}}(i) &= \mn{cl}^\forall(\Omc_q) \cap t  \text{ and } \gamma^{\mn{in}}(j) = \emptyset \text{ for all }j \neq i.
\end{align*}
To compute $\gamma^S$, we first compute $\Imc|_{\{a\}}$ for some tree-extended
model $\Imc$ of $\Dmc$ and $\Omc_q$ that satisfies $\mn{tp}_\Imc(a) = t$.
Iterate over all $p \in \Qmc$ and check whether the map that sends
every variable of $p$ to $a$ is a homomorphism from $p$
to $\Imc|_{\{a\}}$.
If this is the case, add the pair $(p, f)$ to $\gamma^S$, where $f(x) = i$
for every $x \in \mn{var}(p)$.
This can be done in time $2^{|\Omc| \cdot 2^{O(|q| \cdot \log |q|)}}$.
Overall, the algorithm has running time
$2^{|\Omc|\cdot 
  k^{O(|q| \log(|q|))}} \cdot |\Dmc|$. 
\end{proof}

\subsection{\GFtwo with AQs}

\gftwoaqalgocorrectnesslem*
\noindent
\begin{proof}
  We prove by induction on the structure of $s$ that the equality in Lemma \ref{lem:1exp_correctnes_gf2_aq} holds.
  For the base case $s = \iota(\Dmc)$ it holds per definition.
  There are three cases in the induction step that we further split into the ``$\subseteq$'' and ``$\supseteq$'' direction.

  $s = s_1 \oplus s_2$, ``$\subseteq$'': Let $\gamma \in \Theta(s)$ be
  a model abstraction. We have to identify a model \Imc of $\Dmc_s$
  and \Omc such that $\gamma=\gamma_{\Imc,\Dmc_s}$.  Since
  $\gamma \in \Theta(s)$, there are $\gamma_1 \in \Theta(s_1)$ and
  $\gamma_2 \in \Theta(s_2)$ that result in $\gamma$ being included in
  $\Theta(s)$.  By the induction hypothesis, we find models $\Imc_i$ of
  $\Dmc_{s_i}$ and \Omc such that
  $\gamma_i = \gamma_{\Imc_i,\Dmc_{s_i}}$ for $i \in \{1,2\}$.  We can
  assume the domains of $\Imc_1$ and $\Imc_2$ to be disjoint because
  the active domains of $\Dmc_{s_1}$ and $\Dmc_{s_2}$ are disjoint.
  Choose as \Imc the union of $\Imc_1$ and $\Imc_2$, which is a model
  of $\Dmc_s$ and \Omc.  The construction of $\gamma \in \Theta(s)$
  requires $\gamma_1^0 = \gamma_2^0$ so together with the disjoint
  domains this gives us $\mn{tp}^1_\Imc(c)= \mn{tp}^1_{\Imc_i}(c)$ for
  all $c \in \mn{adom}(\Dmc_{s_i})$, $i \in \{1,2\}$.  This implies
  $\gamma^T = \gamma_{\Imc, \Dmc_s}^T$.  Note that if there are two
  constants $c,c' \in \mn{adom}(\Dmc)$ in a database $\Dmc$ with no
  role link between them, then the multi-edge $M_\Dmc(c,c')$ is 
  empty.  Thus also $\gamma^E = \gamma_{\Imc, \Dmc_s}^E$ and therefore
  $\gamma = \gamma_{\Imc,\Dmc_s}$, as required.

  $s = s_1 \oplus s_2$, ``$\supseteq$'': Let $\Imc$ be a model of $\Dmc_s$
  and $\Omc$. Then $\Imc$ is also a model of $\Dmc_{s_1}$ and
  $\Dmc_{s_2}$. Thus by the induction hypothesis,
  $\gamma_{\Imc,\Dmc_{s_i}} \in \Theta(s_i)$ for $i \in \{1,2\}$.
  What remains to be shown is that this results in $\gamma_{\Imc, \Dmc_{s}}$
  to be included in $\Theta(s)$. The argument is  similar to the one
  used in the
  converse direction.

  $s = \alpha_{i, j}^r(s')$, ``$\subseteq$'': Let
  $\gamma \in \Theta(s)$ be a model abstraction.  We have to identify
  a model \Imc of $\Dmc_s$ and \Omc such that
  $\gamma=\gamma_{\Imc,\Dmc_s}$.  Since $\gamma \in \Theta(s)$, there
  is a $\widehat{\gamma} \in \Theta(s')$ that results in $\gamma$ being
  included in $\Theta(s)$. Note that $\widehat{\gamma}$ satisfies
  Condition~$(*)$ from the `$s = \alpha_{i, j}^r(s')$' case of the
  construction of $\Theta(s)$ in the
  main body of the paper.  By the induction hypothesis, we find a model
  $\Imc'$ of $\Dmc_{s'}$ and $\Omc$ such that
  $\gamma_{\Imc', \Dmc_{s'}} = \widehat{\gamma}$.  Let interpretation $\Imc'$
  be defined like $\Omc$, except that $$r^{\Imc'}=r^\Imc \cup 
\{r(c,c') \mid L_i(c) \in \Dmc_{s'} \text{ and }
  L_j(c') \in \Dmc_{s'}\}.$$  Then $\Imc$ is a model of $\Dmc_s$.  What
  remains to be shown is that $\Imc$ is also a model of $\Omc$.  Let
  $c,c' \in \mn{adom}(\Dmc_{s'})$ be any two constants such that
  $L_i(x) \in \mn{tp}^1_{\Imc'}(c)$ and
  $L_j(x) \in \mn{tp}^1_{\Imc'}(c')$.  Then the fact that $\widehat{\gamma}$
  satisfies $(*)$ gives us
  $\mn{tp}^1_{\Imc'}(c) \rightsquigarrow_t \mn{tp}^1_{\Imc'}(c')$ for
  some $t \in \mn{TP}_2$ with
  $$\{r(x,y) \mid r \in \NR \text{ and } r(c,c') \in \Dmc_{s}\}
  \subseteq t$$.  Note that the atomic formula $r(x,y)$ for the fact
  $r(c,c')$ that we just added to $\Dmc_{s'}$ is also part of $t$.  This
  implies $\mn{tp}^1_{\Imc}(c) \rightsquigarrow_t \mn{tp}^1_{\Imc}(c')$
  and therefore $\Imc$ is also a model of $\Omc$.  Thus, the
  definition of the model abstraction $\gamma_{\Imc,\Dmc_s}$ and the
  construction of $\gamma$ from $\widehat{\gamma}$ imply that
  $\gamma_{\Imc,\Dmc_s}=\gamma$, as required.

  $s = \alpha_{i, j}^r(s')$, ``$\supseteq$'': Let $\Imc$ be a model of
  $\Dmc_s$ and $\Omc$.  Then $\Imc$ is also a model of $\Dmc_{s'}$.
  By the induction hypothesis, $\gamma_{\Imc, \Dmc_{s'}} \in \Theta(s')$.
  The definition of the model abstraction $\gamma_{\Imc, \Dmc_{s'}}$
  and construction of $\Theta(s)$ imply that
  $\gamma_{\Imc, \Dmc_{s'}}$ results in $\gamma_{\Imc, \Dmc_{s}}$
  being included in $\Theta(s)$, as long as $\gamma_{\Imc, \Dmc_{s'}}$
  satisfies $(*)$.  This last point, of $\gamma_{\Imc, \Dmc_{s'}}$
  satisfying $(*)$, follows from $\Imc$ being a model of $\Dmc_{s}$.
  Thus, $\gamma_{\Imc, \Dmc_{s}} \in \Theta(s)$ as required.

  $s = \rho_{i \rightarrow j}(s')$, ``$\subseteq$'':
  Let $\gamma \in \Theta(s)$ be a model abstraction.
  We have to identify a model \Imc of $\Dmc_s$ and \Omc such that $\gamma=\gamma_{\Imc,\Dmc_s}$.
  Since $\gamma \in \Theta(s)$ there is a $\widehat{\gamma} \in \Theta(s')$ that results in $\gamma$ being included in $\Theta(s)$.
  By the induction hypothesis, we find a model $\Imc'$ of $\Dmc_{s'}$ and $\Omc$ such that $\gamma_{\Imc', \Dmc_{s'}} = \widehat{\gamma}$.
  Let $\Imc$ be just like $\Imc'$ except $L_i^\Imc = \emptyset$ and $L_j^\Imc = L_i^{\Imc'} \cup L_j^{\Imc'}$.
  Then $\Imc$ is a model of $\Dmc_s$ and $\Omc$ and the definition of the model abstraction $\gamma_{\Imc,\Dmc_s}$ and the construction of $\gamma$ from $\widehat{\gamma}$ imply that $\gamma_{\Imc,\Dmc_s} = \gamma$, as required.

  $s = \rho_{i \rightarrow j}(s')$, ``$\supseteq$'':
  Let $\Imc$ be a model of $\Dmc_s$ and $\Omc$.
  Moreover let $\Imc'$ be just like $\Imc$ except $L_i^{\Imc'} = \{a \mid L_i(a) \in \Dmc_{s'}\}$ and $L_j^{\Imc'} = L_j^{\Imc} \setminus L_i^{\Imc'}$.
  Then $\Imc'$ is a model of $\Dmc_{s'}$ and $\Omc$.
  Thus by the induction hypothesis, $\gamma_{\Imc', \Dmc_{s'}} \in \Theta(s')$.
  What remains to be shown is that this results in $\gamma_{\Imc, \Dmc_{s}}$ to be included in $\Theta(s)$. 
 We may argue similarly to the previous case.
\end{proof}

\gftwoaqtwoexpthm*
\noindent
\begin{proof}
  Correctness of the given algorithm follows immediately from Lemma
  \ref{lem:1exp_correctnes_gf2_aq}.  Next we analyze the running time.

  We start with noting that the sets of all 1-types $\mn{TP}_1$ and of
  all 2-types $\mn{TP}_2$ can be computed by enumerating all
  candidates $t \subseteq \mn{cl}_i(\Omc)$, $i \in \{1,2\}$, and then
  checking whether $t$ is satisfiable w.r.t.\ \Omc. The latter can be
  done in \ExpTime \cite{DBLP:journals/jsyml/Gradel99}, which yields
  an upper bound of $2^{2^{O(|\Omc|)}}$, sufficient for our purposes.
  However, it also seems clear that the standard type elimination
  procedure for \ALC can be adapted to GF$_2$ in a straightforward
  way, eliminating 1-types and 2-types simultaneously. This gives a
  $2^{O(|\Omc|)}$ upper bound.

  For every subexpression $s$ of $s_0$, we construct a set of model
  abstractions $\Theta(s)$. The number of model abstractions, and thus
  the size of the sets $\Theta(s)$, is clearly bounded by
  $2^{2^{O(|\Omc|)} \cdot k^2}$. Moreover, each set $\Theta(s)$ can
  clearly be constructed in time $2^{2^{O(|\Omc|)} \cdot k^2}$.  Thus,
  the algorithm has running time
  $2^{2^{O(|\Omc|)} \cdot k^2} \cdot |\Dmc|$, which is $2^{2^{O(|Q|)} \cdot k^2} \cdot |\Dmc|$.
\end{proof}

\section{Proofs for Section~\ref{sect:lower}}

\listcolgftwolem*

\noindent 
\begin{proof}
Let $G=(V, E)$ be an undirected graph of cliquewidth~2 and $C_v \subseteq \mathbb{N}$ be a list of possible colors for every $v \in V$. We construct a $\text{GF}_2$-ontology $\Omc$, a database~$\Dmc$ of cliquewidth~2, an AQ $q(x)$, and an answer candidate $a \in \mn{adom}(\Dmc)$ such that $G$ and $C_v$ are a yes-instance for \textsc{List Coloring} if and only if $\Omc, \Dmc \not \models q(a)$.
Let $k$ be the number of distinct colors that occur, w.l.o.g.\ let $\bigcup_{v \in V} C_v = \{1, \ldots, k\}$. Define $\ell = \lceil \log k \rceil$ and $L = \{1, \ldots, 2^\ell\}$.

We first construct the database \Dmc. We use concept names~$A_i^j$,
$i \in \{1, \ldots, \ell\}$ and $j \in \{0, 1\}$, another concept
name~$B$, and two role names $r$ and $s$.  We further use, for every
$v \in V$ and $c \in L \setminus C_v$, constants $a_v$ and $a_v^c$.
Define
\begin{align*}
\Dmc =\ & \{r(x, y), r(y, x) \mid x \in \{a_v, a_v^{c_1}\}, y \in \{a_u, a_u^{c_2}\},\\
& \qquad \{u,v\} \in E, c_1 \in L \setminus C_v, c_2 \in L \setminus C_u\}\, \cup\\
& \{s(a_v, a_v^c) \mid v \in V, c \in L \setminus C_v\}\,\cup\\
& \{A_i^j(a_v^c) \mid v \in V, c \in L \setminus C_v \text{ and the } i \text{-th bit of } c \text{ is } j\}\,\cup\\
& \{B(a_v) \mid v \in V\}.
\end{align*}
This construction is similar to the reduction that achieves cliquewidth~3 described above. The differences are that we use two different binary symbols, $r$ for edges that were present in the original graph and $s$ for edges between a node $a_v$ and a newly introduced node $a_v^c$, and that these newly introduced nodes are marked by a fresh concept name $B$. Also, the facts of the form $r(x,y)$ where at least one of $x$ and $y$ is of the form $a_v^c$ do not represent edges of $G$, but they are merely introduced to reduce the cliquewidth from~3 to~2. We call facts of this type \emph{unintended}. All other $r$-facts are called \emph{intended} and by construction of $\Dmc$, an $r$-fact between two constants $a$ and $b$ is intended if and only if both $B(a)$ and $B(b)$ are in $\Dmc$.

We argue that $\Dmc$ has cliquewidth 2. Since $G$ has cliquewidth~2, there is a 2-expression $\sigma$ for $G$. We manipulate $\sigma$, to obtain a 2-expression for $\Dmc$. Every edge insertion from label $i$ to label $j$ in $G$ is replaced by two operations $\alpha_{i, j}^r$ and $\alpha_{j, i}^r$.
Whenever a constant $a_v$ with label $i \in \{1, 2\}$ is introduced, we also introduce the constants $a_v^c$ with label $3-i$, take the disjoint union of all nodes, apply the operation $\alpha_{i, 3-i}^s$, and then apply $\rho_{3-i \rightarrow i}$ so that
the constant $a_v$ carries the same label as all its neighbours $a_v^c$. This is clearly a 2-expression and it can be verified that this expression generates $\Dmc$. In particular, since $a_v$ and $a_v^c$
carry the same label, every $r$-fact that is generated by the 2-expression and that affects $a_v$ will also affect every $a_v^c$ in the same way. This generates the facts $r(a_v, a_u^{c_2})$, $r(a_u^{c_2}, a_v)$, $r(a_v^{c_1}, a_u)$, $r(a_u, a_v^{c_1})$, $r(a_v^{c_1}, a_u^{c_2})$ and $r(a_u^{c_2}, a_v^{c_1})$ for every edge $\{u,v\} \in E$, every color $c_1 \in L \setminus C_v$ and every color $c_2 \in L \setminus C_u$. 

Next, we construct the ontology. The ontology has three purposes. First, to assign a color to every constant in the database, second, to detect a defect in the coloring, and third, to propagate a defect symbol $D$ to everywhere, once a defect has been detected. To assign a color to every constant, we introduce the sentence $\forall x A_i^0(x) \leftrightarrow \neg A_i^1(x)$. To detect a defect in the coloring, we check whether the colors of adjacent constants are equal, but we are only interested in facts that are not unintended. This is achieved by the following two sentences:
\begin{align*}
&\forall x \forall y \left( s(x,y) \wedge \bigwedge_{i=1}^\ell A_i^0(x) \leftrightarrow A_i^0(y) \right) \rightarrow D(x)\\
& \forall x \forall y \, (B(x) \wedge B(y) \wedge r(x,y)\\\
& \qquad \wedge \bigwedge_{i=1}^\ell A_i^0(x) \leftrightarrow A_i^0(y) ) \rightarrow D(x)
\end{align*}
The first sentence checks that the coloring is valid along $s$-facts. The second sentence checks that the coloring is valid along intended $r$-facts. Recall that an $r$-fact between two constants $a$ and $b$ is intended if and only if both $B(a)$ and $B(b)$ are in $\Dmc$. If a defect in the coloring is found, the concept name $D$ is derived.
Finally, we propagate the symbol $D$ to the whole connected component using the following sentence:
$$
\forall x\forall y \Big(\big(r(x,y) \lor s(x,y)\big) \land \big(D(x) \lor D(y)\big)\Big) \rightarrow D(x) \land D(y)
$$

We define the query to be $q(x) = D(x)$ and the answer candidate to be an arbitrary constant from $\mn{adom}(\Dmc)$.

It remains to argue that the reduction is correct. Assume that the input graph $G$ and the $C_v$
form a yes-instance for \textsc{List Coloring} and let $f : V \rightarrow \mathbb{N}$ be a coloring
witnessing this. We construct a model $\Imc$ of $\Omc$ and $\Dmc$ such that $\Imc \not \models D(a)$, where $\Imc$ is obtained from $\Dmc$ by adding, for every vertex $v \in V$ the facts $A_i^j(a_v)$ that represent the color $f(v)$. It is straightforward to check that $\Imc$ is a model of $\Omc$ and $\Dmc$ and that $\Imc \not \models q(a)$.

For the other direction, assume that $\Omc, \Dmc \not \models q(a)$ and let $\Imc$ be a model
witnessing this. Since $\Imc$ is a model of $\Omc$, every constant of the form $a_v$ is labelled
by a unique sequence of of facts $A_i^j(a_v)$ encoding a color $c_v$. Let $f(v) = c_v$ for every $v \in V$. We argue that $f$ is a coloring of $G$ with $c_v \in C_v$ for every $v \in V$. Since $G$ is connected, so is $\Dmc$. Since the ontology propagates the concept name $D$ along all binary predicates, and since $\Imc \not \models D(a)$, we known that $D(a_v) \notin \Imc$ and $D(a_v^c) \notin \Imc$ for all $v \in V$ and $c \in C_v$. Consider any $s$-fact $s(b,b') \in \Imc$ or any intended $r$-fact $r(b, b') \in \Imc$. If $b$ and $b'$ had the same color, the ontology would imply that $D(b) \in \Imc$. But since $D(b) \notin \Imc$, the colors of $b$ and $b'$ are different. For every color $c \in L \setminus C_v$, we have $s(a_v, a_v^c) \in \Imc$. Since $a_v$ and $a_v^c$ have a different color
in $\Imc$, we have $f(v) \in C_v$. Thus, $f$ witnesses that $G$ is a yes-instance of \textsc{List Coloring}.
\end{proof}

\listcolalccqlem*

\noindent 
\begin{proof} {\bf (Continued)}
  We argue that the database $\Dmc$ constructed in the main part of
  the paper has cliquewidth~3. Since $G$ has cliquewidth~2, we can
  start with the 2-expression for $G$ and manipulate it to obtain a
  3-expression for $\Dmc$. Every edge insertion from label $i$ to
  label $j$ is replaced by two operations $\alpha_{i, j}^r$ and
  $\alpha_{j, i}^r$.  Whenever a constant $a_v$ with label
  $i \in \{1, 2\}$ is introduced, we also introduce the constants
  $b_v$, $a_v^c$ and $b_v^c$ for $c \in L \setminus C_v$, insert the
  relevant $r$-facts, and then rename the labels such that all $b_v$,
  $a_v^c$ and $b_v^c$ end up with label $3$ and $a_v$ with label $i$.
  It can be verified that this can be achieved using only 3 labels.
  
\medskip
  
For the sake of completeness, we list the atoms of the $i$-th gadget
of CQ~$q$
displayed in Figure~\ref{fig:cq-lower}.
\begin{align*}
&\{r(x, x), r(y, y), B(x), B(y)\} \, \cup\\
&\{r(x, z_1^i), r(z_1^i, z_2^i), r(z_2^i, z_3^i), r(z_3^i, y) \mid 1 \leq i \leq \ell\}\, \cup\\
&\{r(x, z_4^i), r(z_4^i, z_5^i), r(z_5^i, z_6^i), r(z_6^i, y) \mid 1 \leq i \leq \ell\}\, \cup\\
&\{A_i^0(z_1^i), A(z_2^i), A_i^1(z_3^i), A_i^1(z_4^i), A(z_5^i), A_i^0(z_6^i) \mid 1 \leq i \leq \ell\}
\end{align*}
\\[2mm]
\emph{Claim.} $G$ and $C_v$ are a yes-instance for \textsc{List Coloring}
if and only if $\Omc, \Dmc \not \models q$.
\\[2mm]
The `if' direction was already proved in the main body of the paper.

\smallskip

For `only if', let $G$ and the $C_v$ be a yes-instance
for \textsc{List Coloring} and $f : V \rightarrow \mathbb{N}$ a coloring witnessing this.
We show that there is a model $\Imc$ of $\Omc$ and $\Dmc$ such that
$\Imc \not \models q$. We obtain $\Imc$ from $\Dmc$ by coloring every $a_v$ with the color $f(v)$,
encoded by unary facts of the form $A_i^j(a_v)$, and every $b_v$ (resp.\ $b_v^c$) with the anti-color
of $a_v$ (resp.\ $b_v^c$). It is easy to check that $\Imc$ is a model of $\Omc$ and $\Dmc$.
We argue that $\Imc \not \models q$. If there was a homomorphism $h$ from $q$ to $\Imc$,
then by construction of $q$ and of $\Imc$, we must have $h(x) = b$ for some constant
$b \in \mn{adom}(\Dmc)$ of the form $b_v$ or $b_v^c$.
Let us assume that $b=b_v$ for some $v \in V$, the case $b=b_v^c$ is analogous.
Let $i \in \{1, \ldots, \ell\}$. Note that $q$ contains both the atoms $r(x, z_1^i)$ and $r(x, z_4^i)$,
but $b$ has only one $r$-neighbour and an $r$-selfloop. Thus, $h(z_1^i), h(z_4^i) \in \{b_v, a_v\}$.
Since $q$ contains the atoms $A_i^0(z_1^i)$ and $A_i^1(z_4^i)$, but the ontology forces
the extensions of $A_i^0$ and $A_i^1$ to be disjoint, we even have $\{h(z_1^i), h(z_4^i)\} = \{b_v, a_v\}$.
Analogously, let $h(y) = b_u$ for some $u \in V$, and it can be argued that
$\{h(z_3^i), h(z_6^i)\} = \{b_u, a_u\}$. By construction of $q$, we can further see that
either $h(x) = h(z_1^i)$ and $h(y) = h(z_6^i)$, or $h(x) = h(z_4^i)$ and $h(y) = h(z_3^i)$.
In any case, it follows that $a_v$ and $a_u$ are adjacent in $\Dmc$ and that they agree on the $i$-th
bit of their color. Since $i$ was chosen arbitrarily, $a_v$ and $a_u$ are colored using the same color.
But contradicts the fact, that $f$ is a coloring for $G$, so $\Imc \not \models q$.
\end{proof}

\section{Proofs for Section~\ref{sect:btw}}

As in the proof of  
Theorem~\ref{thm:alciucq2exp}, we may assume w.l.o.g.\ that $Q$ is  
Boolean.  
 Assume that we are given as input an OMQ
$Q=(\Omc,q) \in (\text{GF}_2, \text{UCQ})$, a database $\Dmc_0$ of
treewidth~$k$, and a candidate answer $\bar c$.

A tree decomposition $T=(V,E, (B_v)_{v \in V})$ of
$\Dmc_0$ of width at most $2k+1$ can be computed  in time
\mbox{$2^{O(k)} \cdot |\mn{adom}(\Dmc_0)|$} 
\cite{korhonen2021single}. We may assume that $(V,E)$ is a \emph{directed} tree
by choosing a root $v_0 \in V$. For each $v \in V$, let $\Dmc_v$ be
the restriction of database $\Dmc_0$ to the constants that appear in
some bag $B_u$, $u$ a node in the subtree of $(V,E)$ rooted at $v$.
Our aim is to traverse the tree $(V,E)$ bottom-up, computing for each
$v \in V$ a representation of the models \Imc of the database $\Dmc_v$
and ontology \Omc, enriched with additional information about partial
homomorphisms from CQs in the UCQ $q$ to \Imc.

To support our later constructions, we extend the ontology~\Omc. Note
that a CQ $p(\bar x)$ of treewidth~1 and arity~0 or 1 can be viewed
as a GF$_2$-formula $\varphi_p(\bar x)$ in an obvious way by reusing
variables. For example, if
$p(x) = \{r(x,x),r(x,y),s(y,x),r(y,z),A(z)\}$, then
$$\varphi_p= 
r(x,x) \wedge \exists y \, ( r(x,y) \wedge s(y,x) \wedge \exists x \,
(r(y,x) \wedge A(x))).$$ We use $\mn{trees}(q)$ to denote the set of all
Boolean or unary CQs of treewidth~1 that can be obtained from a CQ in
$q$ by first dropping atoms, then taking a contraction, and then
potentially selecting a variable as the answer variable.

Introduce a fresh unary relation symbol $A_p$ for every
$p \in \mn{trees}(q)$.  Let the ontology $\Omc'_q$ be obtained by adding
to \Omc the following:
\begin{itemize}

\item for every unary $p(x) \in \mn{trees}(q)$, the sentence
  $\forall x \, (A_{p(x)}(x) \leftrightarrow \varphi_{p(x)}(x))$;

\item for every Boolean $p \in \mn{trees}(q)$, the sentence
  $\forall x \, (A_p(x) \leftrightarrow \varphi_p())$.
   
\end{itemize}
For $i \in \{0,1,2\}$, the set $\mn{cl}_i(\Omc)$ is defined exactly
as in the proof of Theorem~\ref{thm:gf2aq2exp}.  We next introduce, for
each relevant formula in $\Omc'_q$, a relation symbol that may serve as
an abbreviation for that formula, and extend $\Omc'_q$ further to the
ontology $\Omc_q$:
\begin{itemize}

\item for every $\vp \in \mn{cl}_0(\Omc'_q)$, add a fresh
  unary relation symbol $A_\vp$ and the sentence $\forall x \, (A_\vp(x)
  \rightarrow \vp)$;

\item for every $\vp(x) \in \mn{cl}_1(\Omc'_q)$, add a fresh unary
  relation symbol $A_\vp$ and the sentence
  $\forall x \, (A_\vp(x) \rightarrow \vp(x)$.

\item for every $\vp(x,y) \in \mn{cl}_2(\Omc'_q)$, add a fresh
  binary relation symbol $A_\vp$ and the sentence
  $\forall x \forall y \, (A_\vp(x,y) \rightarrow \vp(x,y)))$.
  
\end{itemize}
Note that $|\Omc_q| \in |\Omc| \cdot 2^{O(|q|)}$.

We now define tree-extended models, in analogy with what was
done in Section~\ref{sect:alciucq}, but tailored towards GF$_2$ in place of
\ALCI.
Call a model \Imc of $\Dmc_0$
\emph{tree-extended} if it satisfies the following conditions:
\begin{itemize}

\item the Gaifman graphs of $\Dmc_0$ and
  $\Imc|_{\mn{adom}(\Dmc_0)}$ are identical;\footnote{That is,
    $\{ \{ c,c'\} \mid r(c,c') \in \Dmc_0 \} =
    \{ \{ c,c' \} \mid (c,c') \in r^{\Imc|_{\mn{adom}(\Dmc_0)}}\}$.}

\item if \Imc is modified by setting
  $r^\Imc = r^\Imc \setminus (\mn{adom}(\Dmc_0) \times
  \mn{adom}(\Dmc_0))$ for all role names $r$, then the result is a
  disjoint union of interpretations of treewidth~1, 
  each interpretation containing exactly one constant from $\mn{adom}(\Dmc_0)$.

\end{itemize}
Note that an interpretation of treewidth~1 is simply an
interpretation whose Gaifman graph is a tree.
The proof of
the following lemma is analogous to that of Lemma~\ref{lem:treeextended1}.
\begin{lemma}
  \label{lem:treeextended2}
  If there is a model \Imc of $\Dmc_0$ and \Omc such that
  $\Imc \not \models q$, then there is such a model \Imc that
  is tree-extended.
\end{lemma}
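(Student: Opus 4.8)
The plan is to follow the guarded-unraveling strategy behind the DL version in Lemma~\ref{lem:treeextended1}, but to replace the tree-model property of \ALCI by the treewidth-$1$ model property of the guarded fragment \cite{DBLP:journals/jsyml/Gradel99}. Starting from an arbitrary model \Imc of $\Dmc_0$ and \Omc with $\Imc \not\models q$, I would build a new model $\Imc'$ whose active-domain part is a frozen ``core'' and whose remainder is a disjoint union of treewidth-$1$ structures, together with a homomorphism $h : \Imc' \to \Imc$ that folds the tree part back onto \Imc. Freezing the core is exactly what distinguishes this from a plain guarded unraveling, which would also unravel the active domain and violate the first tree-extendedness condition.

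Concretely, I would let the core of $\Imc'$ consist of the constants in $\mn{adom}(\Dmc_0)$, each retaining its $1$-type $\mn{tp}^1_\Imc(c)$. For every pair $c,d$ that is Gaifman-adjacent in $\Dmc_0$ I keep the $2$-type $\mn{tp}^2_\Imc(c,d)$, so that the binary facts of $\Dmc_0$ and their \Imc-types are preserved; for every non-adjacent pair I instead impose the ``empty'' $2$-type containing no binary atoms. The latter is a legitimate $2$-type for \Omc with the right projections, since it is realized in the disjoint union of two models of \Omc realizing $\mn{tp}^1_\Imc(c)$ and $\mn{tp}^1_\Imc(d)$, and the guarded fragment is preserved under disjoint unions (as follows from its invariance under guarded bisimulation). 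It then remains to satisfy, at every element, the existential demands of \Omc: whenever a $1$-type or a kept $2$-type requires a witness not already present, I attach a fresh treewidth-$1$ region taken from a model realizing a compatible $2$-type $t$ with $\mn{tp}^1(c) \rightsquigarrow_t \mn{tp}^1(\text{witness})$, exactly as in the type-based construction behind Theorem~\ref{thm:gf2aq2exp}. Iterating this saturation yields $\Imc'$, which is tree-extended by construction: its core carries exactly the edges of $\Dmc_0$, and deleting the core edges leaves a disjoint union of the attached treewidth-$1$ regions, one per constant.

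Two things then need checking. First, $\Imc' \models \Omc$: every element realizes a $1$-type and every guarded pair a $2$-type consistent with \Omc, and all existential requirements are met by the saturation step, so the standard local criterion for models of GF$_2$ applies. Second, $\Imc' \not\models q$: since $q$ is Boolean (which we may assume by the reduction used for Theorem~\ref{thm:alciucq2exp}) and $h$ is a homomorphism from $\Imc'$ to \Imc, any homomorphism $g$ from a disjunct of $q$ into $\Imc'$ would compose to a homomorphism $h \circ g$ witnessing $\Imc \models q$, contradicting the choice of \Imc.

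The main obstacle I expect is reducing the edges between distinct core constants to exactly those of $\Dmc_0$ while staying a model of \Omc. Deleting a spurious core edge removes a guarded tuple of \Imc, so $\Imc'$ is no longer guarded bisimilar to \Imc and one cannot transport $\Imc \models \Omc$ by bisimulation; modelhood must instead be argued \emph{locally} via $1$- and $2$-types. The key enabling fact is that GF$_2$ has no constants, so \Omc can never force a binary atom between two specific constants: deleting a core edge only removes instances of universal guards (which cannot break a universally quantified sentence), and every existential demand destroyed by the deletion can be reinstated by a fresh treewidth-$1$ witness rather than by another core element. Verifying that this local reasoning indeed certifies $\Imc' \models \Omc$, and that the saturation produces a genuinely treewidth-$1$ (rather than merely bounded-treewidth) shape, is the technical heart of the argument and the place where the details differ from the \ALCI case.
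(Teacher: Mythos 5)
Your opening plan---freeze the database core and attach the \emph{guarded unraveling of $\Imc$ itself}, so that a natural projection homomorphism $h:\Imc' \rightarrow \Imc$ exists---is exactly the standard argument the paper appeals to (its proof of this lemma is ``analogous to that of Lemma~\ref{lem:treeextended1}'', i.e.\ the well-known unraveling construction). The gap is that your second paragraph abandons this plan: you attach witness regions ``taken from a model realizing a compatible $2$-type $t$'', as in the type-based correctness proofs of the algorithms. With foreign regions of this kind, the homomorphism $h$ that your third paragraph relies on need not exist, and $\Imc' \not\models q$ genuinely fails. The reason is that the $1$- and $2$-types of the paper only record satisfaction of formulas in $\mn{cl}_i(\Omc)$ (plus the guard atoms $r(x,y)$, $r(y,x)$); they are blind to structure that $q$ can see but $\Omc$ cannot. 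Concretely, take $\Omc=\{\forall x\,(A(x) \rightarrow \exists y\,(r(x,y) \wedge A(y)))\}$, $\Dmc_0=\{A(c)\}$, and let $q$ be the triangle CQ $\exists x \exists y \exists z\, (r(x,y)\wedge r(y,z) \wedge r(z,x))$. The infinite $r$-chain $\Imc$ is a countermodel. A model realizing a compatible $2$-type for the witness demand at $c$ may carry an $r$-self-loop on the witness: self-loops are not recorded in $\mn{cl}_2(\Omc)$, do not change any relevant type, and do not raise the treewidth above $1$ (a self-loop adds no Gaifman edge), so such a region is a legal choice in your construction. But then the triangle maps homomorphically onto the self-loop, so $\Imc' \models q$ although $\Imc \not\models q$---and indeed no homomorphism $\Imc' \rightarrow \Imc$ can exist.

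The repair is to execute your first paragraph literally: the region attached below a core constant $c$ must be the guarded unraveling of $\Imc$ from $c$'s \emph{actual} witnesses in $\Imc$, so that every element of $\Imc'$ is a copy of an element of $\Imc$ and every guarded pair of $\Imc'$ is a copy of a guarded pair of $\Imc$. Then $h$ is the projection ``copy $\mapsto$ original'', $\Imc' \not\models q$ follows by composing homomorphisms, and $\Imc' \models \Omc$ follows because every element (resp.\ guarded pair) of $\Imc'$ has the same $1$-type (resp.\ $2$-type) as its $h$-image, except at the deleted spurious core edges---and there your local argument is correct and is exactly what is needed: deleting an edge between core constants only removes instances of universal guards, and the existential demands it destroys are re-satisfied inside the attached unraveled trees. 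Your justification of the empty $2$-type for non-adjacent core pairs (disjoint union of two copies, using preservation of GF$_2$ under disjoint unions) is also fine. In short: the modelhood half of your argument stands, but query preservation forces the witnesses to come from $\Imc$ itself, not from abstract type-realizing models.
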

Based on Lemma~\ref{lem:treeextended1}, we rewrite the UCQ $q$ into a
UCQ $\widehat q$ 
such that for any model $\Imc$ of $\Dmc_0$ and $\Omc_q$,
$q(\Imc)=\widehat q(\Imc|_{\mn{adom}(\Dmc_0)})$.
The construction parallels the one used in Section~\ref{sect:alciucq}.

Consider all CQs that can be obtained in the following way. Start with
a contraction $p$ of a CQ from~$q$, then choose a set of
variables $S \subseteq \mn{var}(p)$ such that 
$$p^-=p \setminus \{ r(x,y) \in p \mid x,y \in S \}$$ is a disjoint union of
CQs of treewidth~1, each of which contains at most one variable from
$S$ and at least one variable that is not from~$S$.  Include in
$\widehat q$ all CQs that can be obtained by extending $p|_S$ as
follows:
\begin{enumerate}

\item for every maximal connected component $p'$ of $p^-$ that
  contains a (unique) variable $x_0 \in S$, 
  add the atom $A_{p'(x_0)}(x_0)$;


\item for every maximal connected component $p'$ of $p^-$ that
  contains no variable from $S$, add the atom $A_{p'}(z)$ with $z$
  a fresh variable.
  

\end{enumerate}
%
%
The number of CQs in $\widehat q$ is single exponential in $|q|$ and
each CQ is of size at most $|q|$. 

The proof of the following lemma is analogous to that of
Lemma~\ref{lem:qhatworksalc}.  Details are omitted.
\begin{restatable}{lemma}{qhatworksgflem}
  \label{lem:qhatworksgf}
  For every tree-extended model \Imc of $\Dmc_0$ and $\Omc_q$,
  $\Imc \models q$ iff $\Imc|_{\mn{adom}(\Dmc_0)} \models \widehat q$.
\end{restatable}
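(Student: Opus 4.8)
The plan is to follow the proof of Lemma~\ref{lem:qhatworksalc} almost verbatim, substituting the GF$_2$ translation $\varphi_p$ of a treewidth-1 CQ $p$ for the \ALCI-concept $C_p$, and the defining sentences $\forall x\,(A_{p(x)}(x) \leftrightarrow \varphi_{p(x)}(x))$ and $\forall x\,(A_p(x) \leftrightarrow \varphi_p())$ of $\Omc_q$ for the corresponding concept inclusions. I would prove the two directions separately, using the hypothesis that \Imc is tree-extended throughout.

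For the \emph{only if} direction, assume $\Imc \models q$ via a homomorphism $h$ from some disjunct $q_d$ of $q$ into \Imc. First contract $q_d$ to $p$ by identifying variables with the same $h$-image, so that (the restriction of) $h$ becomes an injective homomorphism from $p$ to \Imc, and put $S = h^{-1}(\mn{adom}(\Dmc_0))$. By tree-extendedness of \Imc, the part of \Imc outside $\mn{adom}(\Dmc_0)$ is a disjoint union of treewidth-1 pieces each glued to a single constant; hence deleting from $p$ the role atoms between two $S$-variables splits $p$ into maximal components of treewidth~1, each containing at most one $S$-variable. This is exactly the structure of $p^-$ required in the construction of $\widehat q$, so I obtain a disjunct $\widehat q_d \in \widehat q$ consisting of $p|_S$ together with $A_{p'(x_0)}(x_0)$ for each component $p'$ meeting $S$ in $x_0$ and $A_{p'}(z)$ for each component $p'$ disjoint from $S$. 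I would then set $\widehat h = h$ on $S$ and send each fresh $z$ to an arbitrary database constant. The atoms of $p|_S$ hold because $h$ is a homomorphism and their images lie in $\mn{adom}(\Dmc_0)$; the atom $A_{p'(x_0)}(x_0)$ holds because $\Imc \models \varphi_{p'(x_0)}(h(x_0))$ and the biconditional sentence then puts $h(x_0)$ into $A_{p'(x_0)}^\Imc$; and $A_{p'}(z)$ holds because $\Imc \models \varphi_{p'}$ (witnessed by $h$ on $p'$), so $\forall x\,(A_{p'}(x) \leftrightarrow \varphi_{p'}())$ forces $A_{p'}^\Imc = \Delta^\Imc$. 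In contrast to the \ALCI case, no propagation along a fresh role is needed here, since GF$_2$ encodes the Boolean sentence $\varphi_{p'}$ directly and $A_{p'}$ is thus globally true.

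For the \emph{if} direction, assume $\widehat h$ is a homomorphism from some disjunct $\widehat q_d$ of $\widehat q$ into $\Imc|_{\mn{adom}(\Dmc_0)}$, built from a contraction $p$ of a disjunct of $q$. I would assemble a homomorphism $h$ from $p$ into \Imc by keeping $h = \widehat h$ on the shared variables and extending over each treewidth-1 component $p'$ via the GF$_2$ analogue of observation $(*)$ from the proof of Lemma~\ref{lem:qhatworksalc}: whenever $\Imc \models \varphi_{p'(x_0)}(a)$, there is a homomorphism from the treewidth-1 CQ $p'(x_0)$ into \Imc with $x_0 \mapsto a$ (and symmetrically $\Imc \models \varphi_{p'}$ yields a homomorphism from the Boolean $p'$). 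For components meeting $S$, satisfaction of $A_{p'(x_0)}(x_0)$ under $\widehat h$ gives $\widehat h(x_0) \in A_{p'(x_0)}^\Imc$, hence $\Imc \models \varphi_{p'(x_0)}(\widehat h(x_0))$ by the biconditional; for components disjoint from $S$, satisfaction of $A_{p'}(z)$ yields $\Imc \models \varphi_{p'}$. Since distinct components share only $S$-variables, on which every partial homomorphism agrees with $\widehat h$, their union is a homomorphism from $p$ into \Imc, so $\Imc \models q$.

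The technical heart, which I expect to be the main obstacle, is the GF$_2$ analogue of observation $(*)$: proving that the translation $p \mapsto \varphi_p$ is faithful, i.e.\ $\Imc \models \varphi_{p(x)}(a)$ iff there is a homomorphism from the treewidth-1 CQ $p(x)$ into \Imc sending $x$ to $a$ (and likewise for Boolean $p$). The delicate point is that GF$_2$ offers only the two variables $x$ and $y$, so $\varphi_p$ must reuse them while descending the tree underlying $p$; correctness of this reuse relies essentially on $p$ having treewidth~1 (a tree Gaifman graph, with no cycles forcing a third variable), and one must check that the guarded existential/universal quantification pattern recovers exactly the homomorphisms of $p$. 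This is a routine structural induction along the tree decomposition of $p$, but it is precisely where the treewidth-1 restriction is genuinely used, and it is what makes the Boolean-component step cleaner than in the \ALCI proof while still requiring its own verification.
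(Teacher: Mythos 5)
Your proposal is correct and takes essentially the same approach as the paper: the paper proves Lemma~\ref{lem:qhatworksgf} simply by declaring it analogous to Lemma~\ref{lem:qhatworksalc} and omitting details, and your write-up is precisely that adaptation, correctly substituting the biconditional defining sentences for the CIs, observing that $A_{p'}$ becomes globally true so that no $r_{p'}$-propagation is needed, and reducing the remaining work to the faithfulness of the two-variable translation $p \mapsto \varphi_p$ for treewidth-1 CQs. Your identification of that faithfulness (and its reliance on the treewidth-1 restriction) as the technical heart is accurate, and the rest of your argument mirrors the paper's ALCI proof step for step.
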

If the original UCQ $q$ was actually an AQ $q =A(x)$ and we were given
the answer candidate $c \in \mn{adom}(\Dmc_0)$, then we have reduced
this to a Boolean CQ by introducing a fresh concept name $B$, adding
$B(c)$ to $\Dmc_0$ and use the query $\exists x A(x) \wedge B(x)$.  It
is easy to see that Lemma~\ref{lem:qhatworksgf} holds if $\Omc_q$ is
replaced with \Omc and $\widehat q$ with $\exists x A(x) \wedge
B(x)$. For $(\text{GF}_2,\text{AQ})$, we can thus avoid extending the
ontology \Omc and instead work with the original one. This will help
in attaining the improved upper bound in Theorem~\ref{thmtwone}.

A \emph{diagram for a database} \Dmc is a database
\mbox{$\Dmc' \supseteq \Dmc$} that satisfies the following
conditions:
\begin{enumerate}

\item $\mn{adom}(\Dmc')=\mn{adom}(\Dmc)$;

\item $\Dmc' \setminus \Dmc$ uses only relation symbols from
  $\Omc_q$;

\item for every $\vp \in \mn{cl}_0(\Omc'_q) \cup \mn{cl}_1(\Omc'_q)$ and every $c \in \mn{adom}(\Dmc)$, $A_\vp(c) \in \Dmc'$
  or  $A_{\neg \vp}(c) \in \Dmc'$;
  
\item for every $\vp(x,y) \in \mn{cl}_2(\Omc'_q)$ and all
  $c,c' \in \mn{adom}(\Dmc)$, $A_{\vp}(c,c') \in \Dmc'$ or
  $A_{\neg \vp}(c,c') \in \Dmc'$;

\item if $A_{B(x)}(c) \in \Dmc'$, then $B(c) \in \Dmc'$; 

\item if $A_{r(x,y)}(c,c') \in \Dmc'$, then $r(c,c') \in \Dmc'$; 

\item $\Dmc'$ is satisfiable w.r.t.\ $\Omc_q$.

\end{enumerate}
Note that $\Dmc'$ uniquely fixes all 1-types and all 2-types through
Points~3 and~4. This serves the purpose of `synchronizing' the models
of different bags in the tree decomposition. Points~5 and~6 make the
`positive parts' of the types visible to CQs.
 We may simply speak of a
 \emph{diagram} when $\Dmc$ is not important.

\smallskip
Let $p$ be a Boolean CQ and $\Imc$ an interpretation. A \emph{partial match of
  $p$ into} \Dmc is a function
$m:\mn{var}(p) \rightarrow \mn{adom}(\Dmc) \cup \{ +, - \}$ that is a
homomorphism from the restriction of $p$ to the set of variables
$\{x \in \mn{var}(p) \mid m(x) \notin \{+,-\} \}$ to \Dmc.
We will also speak of partial matches into an interpretation $\Imc$
by seeing $\Imc$ as a database.
Informally, partial matches describe partial homomorphisms with
\mbox{$m(x)={-}$} meaning that variable $x$ could not yet be matched
and $m(x)={+}$ meaning that variable $x$ has already been successfully
matched elsewhere.  Let $p$ be a CQ, $k>0$ and let $m_i$ be a partial match of $p$ into $\Dmc_i$
for $i \in \{1, \ldots, k\}$. A partial match $m$ of $p$ into $\Dmc$ is a \emph{progression} of
$m_1,\dots,m_k$ if the following conditions hold:
\begin{enumerate}
\item if $m_i(x)=c$, then $m(x) \in \{c,+\}$;
\item if $m_i(x)=+$, then $m(x)=+$;
\item if $m(x)=+$, then there exists exactly one $i$ such that $m_i(x) \in \{+\} \cup \mn{adom}(\Dmc_i)$;
\item if $m(x)=+$ and $m_i(x) \in \mn{adom}(\Dmc_i)$, then $m_i(y) \in \{+\} \cup \mn{adom}(\Dmc_i)$
for all $y$ that appear together with $x$ in an atom of $p$.
\end{enumerate}
A partial match $m$ of $p$ is \emph{complete}
if $m(x) \neq {-}$ for all $x \in \mn{var}(p)$. 

A \emph{model abstraction} is a pair $(\Mmc,M)$ with \Mmc a diagram
and $M$ a set of pairs $(p,m)$ with $p$ a CQ in $\widehat q$ and $m$ a
partial match of $p$ into \Mmc. Every $v \in V$ and model \Imc of
$\Dmc_v$ and $\Omc_q$ give rise to a model abstraction
$\gamma_{\Imc,v}=(\Mmc_{\Imc, v},M_{\Imc,v})$ where
\begin{itemize}

\item $\Mmc_{\Imc, v}$ is obtained by starting with $\Dmc_0|_{B_v}$ and then adding:
  \begin{itemize}
  \item for every $\vp \in \mn{cl}_0(\Omc'_q)$ with $\Imc \models
    \vp$ and every $c \in B_v$, the fact $A_\vp(c)$;
  \item for every $\vp \in \mn{cl}_1(\Omc'_q)$  and every $c \in B_v$ with $\Imc \models
    \vp(c)$, the fact $A_\vp(c)$;
  \item for every $\vp \in \mn{cl}_2(\Omc'_q)$  and all $c_1,c_2 \in B_v$ with $\Imc \models
    \vp(c_1,c_2)$, the fact $A_\vp(c_1,c_2)$;
\item for every $B(x)\in \mn{cl}_1(\Omc'_q)$  and every $c \in B_v$ with $\Imc \models
    B(c)$, the fact $B(c)$;
  \item for every $r(x,y) \in \mn{cl}_2(\Omc'_q)$  and all $c_1,c_2 \in B_v$ with $\Imc \models
    r(c_1,c_2)$, the fact $r(c_1,c_2)$.
  \end{itemize}
  
\item $M_{\Imc, v}$ contains a pair $(p,m_{h,v})$ for every CQ $p$ in $\widehat q$,
  every induced subquery $p'$ of $p$, and every homomorphism $h$ from $p'$
  to \Imc with $\mn{range}(h) \subseteq \mn{adom}(\Dmc_v)$ where $m_{h_v}$ is the partial match of $p$ into $\Imc$, defined by
  \begin{itemize}

  \item $m_{h,v}(x)=h(x)$ if $x \in \mn{var}(p')$ and $h(x) \in B_v$;

  \item $m_{h,v}(x) = {+}$ if $x \in \mn{var}(p')$ and $h(x) \notin B_v$; 

      \item $m_{h,v}(x) = {-}$ if $x \notin \mn{var}(p')$.

  \end{itemize}
\end{itemize}
With an induced subquery $p'$ of $p$, we mean a CQ $p'$ that can be
obtained from $p$ by choosing a non-empty set
$V \subseteq \mn{var}(q)$ and then removing all atoms that use
a variable which is not in $V$. 

The central idea of our algorithm is to proceed in a bottom-up way
over the tree decomposition $(V,E)$, computing
for each node $v \in V$ the set
$$\Theta(v)=\{ \gamma_{\Imc,v} \mid \Imc \text{ model of } \Dmc_v \text{
  and } \Omc_q \}.
$$
We do this as follows:
\begin{itemize}
\item 
For all leaves $v \in V$, $\Theta(v)$ contains the set of all pairs
$(\Mmc,M)$ with \Mmc a diagram for $\Dmc_0|_{B_v}$ and $M$ the
set of all pairs $(p,m)$ with $p$ a CQ in $\widehat q$ and $m$ a partial
match of $p$ into \Mmc and $m(x) \neq +$ for all $x \in \mn{var}(p)$.

\item For all  non-leaves $v \in V$ with successors $v_1,\dots,v_k$,
$\Theta(v)$ is the smallest set that contains, for all
$(\Mmc_1,M_1) \in \Theta(v_1), \dots, (\Mmc_k,M_k) \in \Theta(v_k)$,
all pairs $(\Mmc,M)$ such that 
\begin{enumerate}

\item \Mmc is a diagram for $\Dmc_0|_{B_v}$ such that for
  $1 \leq i \leq k$, $\Mmc|_{B_v \cap B_{v_i}} = \Mmc_i|_{B_v \cap
    B_{v_i}}$, and

\item $M$ contains the set of all pairs $(p,m)$ such that $p$ is a CQ
  in $\widehat q$, $m$ is a partial match of $p$ into \Mmc, and for
  some $(p,m_1)\in M_1,\dots,(p,m_k)\in M_k$, $m$ is a progression
  of $m_1,\dots,m_k$.

\end{enumerate}
\end{itemize}
The following lemma states that the algorithm computes the intended set
of model abstractions.
\begin{lemma}
\label{lem:twalgointended}
$\Theta(v)=\{ \gamma_{\Imc,v} \mid \Imc \text{ model of } \Dmc_v \text{
  and } \Omc_q \}$ for every $v \in V$.
\end{lemma}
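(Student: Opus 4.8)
The plan is to prove Lemma~\ref{lem:twalgointended} by induction on the structure of the (directed) tree decomposition $(V,E)$, processed bottom-up from the leaves. For each node $v$ I establish the set equality $\Theta(v)=\{\gamma_{\Imc,v}\mid \Imc\text{ model of }\Dmc_v\text{ and }\Omc_q\}$ by proving the two inclusions separately, exactly as in the analogous correctness proofs for the cliquewidth algorithms (Lemmas~\ref{lem:1exp-correctness}, \ref{lem:alci-ucq-correctness}, and~\ref{lem:1exp_correctnes_gf2_aq}). The base case concerns leaves $v\in V$: here $\Dmc_v=\Dmc_0|_{B_v}$, so I must check that the model abstractions generated by the leaf rule are precisely those arising from models of $\Dmc_0|_{B_v}$ and $\Omc_q$. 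For the diagram component this follows because every diagram for $\Dmc_0|_{B_v}$ fixes a consistent 0-type, 1-types, and 2-types that are jointly satisfiable w.r.t.\ $\Omc_q$ (by condition~7 in the definition of a diagram), and every model of $\Dmc_v$ and $\Omc_q$ yields such a diagram via $\Mmc_{\Imc,v}$; conversely satisfiability lets me realize any diagram as some model. For the match component $M$, in a leaf no variable can yet have been matched ``elsewhere'', so the only legitimate partial matches are those with $m(x)\neq{+}$, matching the leaf rule verbatim.

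\textbf{The inductive step} handles a non-leaf $v$ with successors $v_1,\dots,v_k$. For the ``$\supseteq$'' direction I start from a model $\Imc$ of $\Dmc_v$ and $\Omc_q$ and must show $\gamma_{\Imc,v}\in\Theta(v)$. The key observation is that $\Imc$ restricts to models $\Imc_i$ of each $\Dmc_{v_i}$ and $\Omc_q$ (the restriction is to the constants in the subtree rooted at $v_i$), so by the induction hypothesis $\gamma_{\Imc,v_i}\in\Theta(v_i)$; I then verify that $\gamma_{\Imc,v}$ is exactly the abstraction the non-leaf rule builds from these. The diagram side uses that types are determined by $\Imc$ and that overlapping bags $B_v\cap B_{v_i}$ receive identical diagram restrictions (condition~1 of the rule). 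For the ``$\subseteq$'' direction I take $(\Mmc,M)\in\Theta(v)$, obtained from witnesses $(\Mmc_i,M_i)\in\Theta(v_i)$, use the induction hypothesis to get models $\Imc_i$ of $\Dmc_{v_i}$ realizing $\gamma_{\Imc_i,v_i}=(\Mmc_i,M_i)$, and glue them into a single model $\Imc$ of $\Dmc_v$ and $\Omc_q$. The gluing is sound because the tree-decomposition structure guarantees that constants shared between subtrees lie in the common bag $B_v$, where the compatibility condition $\Mmc|_{B_v\cap B_{v_i}}=\Mmc_i|_{B_v\cap B_{v_i}}$ forces the $\Imc_i$ to agree on shared elements' 1-types and 2-types; I then appeal to the compatibility relation $\rightsquigarrow$ and satisfiability of $\Mmc$ to assemble $\Imc$.

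\textbf{The main obstacle} will be the homomorphism-tracking component $M$, specifically proving that the ``progression'' relation correctly captures how a homomorphism from an induced subquery into the glued model $\Imc$ decomposes across the children. I will argue that whenever $h$ is a homomorphism from an induced subquery $p'$ of a CQ $p\in\widehat q$ into $\Imc$ with range in $\mn{adom}(\Dmc_v)$, each variable $x$ is mapped either into some $\Dmc_{v_i}$ or into $B_v$, and the values $m_{h,v_i}(x)$ recorded at the children determine $m_{h,v}(x)$ through exactly the four progression conditions: condition~3 enforces that a variable marked ${+}$ was genuinely matched in a unique child, and condition~4 ensures that whenever a variable is matched to a concrete child-constant, all its query-neighbours are already accounted for in that same child, so that no atom of $p'$ straddles the boundary in a way that the bag $B_v$ fails to witness. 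The delicate point is that induced subqueries of the \emph{same} $p$ tracked at different children must be combinable into one induced subquery at $v$; here I use that a variable set $\{+\}\cup\mn{adom}(\Dmc_{v_i})$ at exactly one child (progression condition~3) corresponds to adding those variables to the matched part, and that the bag $B_v$ separates the subtrees so that role atoms only connect variables matched within one child or both lying in $B_v$. Conversely, decomposing a given match at $v$ into progressions of child-matches uses the separator property of $B_v$ to route each unmatched-elsewhere variable to the unique child containing its image. Once this combinatorial correspondence between progressions and subquery-homomorphism restrictions is nailed down, the equality $M=M_{\Imc,v}$ follows, completing the induction.
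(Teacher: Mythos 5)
Your proposal follows the paper's proof in all essentials: induction over the tree decomposition, both inclusions at each node, the base case read off from the leaf rule and the definitions of diagrams and partial matches, gluing the child models along $B_v$ using condition~1 of the non-leaf rule in one direction, and decomposing/recombining matches via the four progression conditions in the other. The treatment of the match sets $M$ in particular mirrors the paper's argument.

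There is, however, one step that fails as you state it. In the ``$\supseteq$'' direction you claim that a model $\Imc$ of $\Dmc_v$ and $\Omc_q$ \emph{restricts} to models $\Imc_i$ of $\Dmc_{v_i}$ and $\Omc_q$, the restriction being to the constants in the subtree rooted at $v_i$. This is not sound: $\Omc_q$ contains the original GF$_2$ sentences of $\Omc$ as well as the sentences $\forall x \, (A_{p(x)}(x) \leftrightarrow \varphi_{p(x)}(x))$, whose right-hand sides have existential import, and the witnesses for these existential requirements may be anonymous elements or constants lying outside $\mn{adom}(\Dmc_{v_i})$; discarding them can falsify $\Omc_q$ in the restricted interpretation. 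The paper's proof sidesteps this entirely: since $\Dmc_{v_i} \subseteq \Dmc_v$, the interpretation $\Imc$ \emph{itself} is already a model of $\Dmc_{v_i}$ and $\Omc_q$, so the induction hypothesis applies directly to $\gamma_{\Imc,v_i}$ with no restriction at all (and no restriction is needed for the match component either, because $\gamma_{\Imc,v_i}$ by definition only records homomorphisms whose range lies in $\mn{adom}(\Dmc_{v_i})$). With this one-line repair your argument goes through. A second, cosmetic point: the relation $\rightsquigarrow_t$ you invoke when gluing belongs to the GF$_2$ cliquewidth algorithm (model abstractions) and plays no role here; what the gluing actually uses is that constants shared between different children and the parent lie in the bags, where condition~1 forces the diagrams to agree, and that these diagrams fix all relevant 1-types and 2-types together with satisfiability of $\Mmc$ (condition~7 of the diagram definition).
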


\begin{proof}
We prove the statement by induction on the depth of the subtree rooted at $v$.
The induction start ($v$ is a leaf) follows easily from the definition of $\Theta$.
For the induction step, let $v \in V$ not be a leaf and let $v_1,\ldots,v_k$ be the children of $v$.

For the first direction of the claim, let $(\Mmc, M) \in \Theta(v)$. We need to show that
there is a model $\Imc$ of $\Dmc_v$ and $\Omc_q$ such that
$(\Mmc, M) = (\Mmc_{\Imc,v}, M_{\Imc, v})$.
Since $(\Mmc, M) \in \Theta(v)$, there are pairs
$(\Mmc_1,M_1) \in \Theta(v_1), \dots, (\Mmc_k,M_k) \in \Theta(v_k)$ that satisfy the two conditions.
By the induction hypothesis, for every $i \in \{1, \ldots, k\}$, there is a model
$\Imc_i$ of $\Dmc_{v_i}$
and $\Omc_q$ such that $(\Mmc_i, M_i) = (\Mmc_{\Imc_i, v_i}, M_{\Imc_i, v_i})$.
Define $\Imc$ to be the union of all $\Imc_i$ and $\Mmc$.
By the first condition of the construction of $\Theta$, all these models are compatible,
so $\Imc$ is again a model.

It remains to show that $M$ is the set of all pairs $(p, m_{h,v})$
with $p$ a CQ in $\widehat q$ and $h$ a partial homomorphism from $p$ into $\Imc$
with $\mn{range}(h) \subseteq \mn{adom}(\Dmc_v)$.
Let $(p, m) \in M$. By definition of $\Theta$, $m$ is a progression of $m_1, \ldots, m_k$ for some
$(p, m_1) \in M_1, \ldots, (p, m_k)\in M_k$. Each of the $m_i$
takes the form $m_{h_i, v_i}$ for some partial homomorphism $h_i$ from $p$ into $\Imc_i$.
Since $m$ is a match of $p$ into $\Mmc$, $m$ restricted to the set
$\{x \in \mn{var}(p) \mid m(x) \notin\{+, -\}\}$ is a homomorphism. Let $h'$ be this homomorphism.
By the definition of a progression, all the $h_i$ and $h'$ are compatible, and we can
define $h$ to be the union of all the $h_i$ and $h'$, and it can be shown that $m=m_{h,v}$.
Now consider any pair $(p, m_{h, v})$ with $p$ a CQ in $\widehat q$
and $h$ a partial homomorphism from $p$ into $\Imc$.
Let $h_i$ be the restriction of $h$ to $h^{-1}(\mn{adom}(\Dmc_{v_i}))$.
Clearly, each $h_i$ is a partial homomorphism from $p$ to $\Imc_i$.
By the induction hypothesis, we have $(p, m_{h_i,v_i}) \in M_i$ for all $i \in \{1, \ldots, k\}$.
Furthermore $m_{h, v}$ satisfies the definition of being a progression of the $m_{h_i,v_i}$.
Thus, $m_{h, v} \in M$.

For the other direction of the claim,
let $\Imc$ be a model of $\Dmc_v$ and $\Omc_q$, and $M$ the set of all pairs $(p, m_{h,v})$
with $p$ a CQ in $\widehat q$ and $h$ a partial homomorphism from $p$ into $\Imc$.
We need to show that $(\Mmc_{\Imc, v}, M) \in \Theta(v)$.
Since $\Dmc_v \supseteq \Dmc_{v_i}$ for every $i \in \{1, \ldots, k\}$, $\Imc$
is a model of $\Dmc_i$ and $\Omc_q$ for every $i$.
Let $M_i$ be the set of all pairs $(p, m_{h, v_i})$ with $p$ a CQ in $\widehat q$
and $h$ a partial homomorphism from $p$ into $\Imc$.
By the induction hypothesis, we have $(\Mmc_{\Imc, v_i}, M_i) \in \Theta(v_i)$ for every
$i \in \{1, \ldots, k\}$.
Since $\Mmc_{\Imc, v}$ is an abstract model that is compatible with all the $\Mmc_{\Imc, v_i}$,
the sequence of pairs $(\Mmc_{\Imc, v_i}, M_i) \in \Theta(v_i)$ yields,
by definition of $\Theta$,
a pair $(\Mmc_{\Imc, v_i}, M') \in \Theta(v)$.
It remains to show that $M' = M$, but the proof is analogous to the proof in the previous direction.
\end{proof}

 The following lemma tells us how to answer once we have finished
our bottom-up traversal. It may be verified that the algorithm runs
within the intended time bounds.
\begin{restatable}{lemma}{twalgocorrectnesslem}
  \label{lem:tw-algo-correctness}
  Let $v_0$ be the root of $(V,E)$. Then $\Dmc_0 \not \models Q$ iff
  $\Theta(v_0)$ contains a pair $(\Mmc,M)$ such that $M$ contains
  no pair $(p,m)$ with $m$ a complete partial match.
\end{restatable}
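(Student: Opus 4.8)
The plan is to read the answer off the root by combining the algorithm's invariant with the query rewriting. Since $v_0$ is the root of $(V,E)$, every constant of $\Dmc_0$ occurs in some bag of the whole tree, so $\Dmc_{v_0}=\Dmc_0$; Lemma~\ref{lem:twalgointended} therefore gives $\Theta(v_0)=\{\gamma_{\Imc,v_0}\mid \Imc\text{ model of }\Dmc_0\text{ and }\Omc_q\}$. The first step is to establish the key reformulation of completeness: for a model $\Imc$ of $\Dmc_0$ and $\Omc_q$, the set $M_{\Imc,v_0}$ contains a complete partial match $(p,m)$ if and only if $\Imc|_{\mn{adom}(\Dmc_0)}\models\widehat q$. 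This is immediate from the definition of $\gamma_{\Imc,v_0}$: a match $m_{h,v_0}$ has no $-$-entry exactly when the underlying induced subquery $p'$ equals $p$, i.e.\ when $h$ is a total homomorphism from a CQ $p$ of $\widehat q$ into $\Imc$ with $\mn{range}(h)\subseteq\mn{adom}(\Dmc_{v_0})=\mn{adom}(\Dmc_0)$, and such an $h$ is exactly a homomorphism witnessing $\Imc|_{\mn{adom}(\Dmc_0)}\models p$. Consequently, $\Theta(v_0)$ contains a pair without complete match iff there is a model $\Imc$ of $\Dmc_0$ and $\Omc_q$ with $\Imc|_{\mn{adom}(\Dmc_0)}\not\models\widehat q$.

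With this reformulation, both directions reduce to Lemma~\ref{lem:qhatworksgf}. For the ``only if'' direction I would assume $\Dmc_0\not\models Q$. Then some model of $\Dmc_0$ and $\Omc$ fails $q$, and by Lemma~\ref{lem:treeextended2} I may take it to be tree-extended. Interpreting the fresh unary symbols $A_p,A_{p(x)}$ by their defining biconditionals and the fresh binary abbreviations as empty turns it into a tree-extended model $\Imc$ of $\Dmc_0$ and $\Omc_q$ that still fails $q$: these symbols do not occur in $q$, and interpreting the binary abbreviations as empty leaves the Gaifman graph — hence tree-extendedness — untouched. Lemma~\ref{lem:qhatworksgf} then yields $\Imc|_{\mn{adom}(\Dmc_0)}\not\models\widehat q$, so $\gamma_{\Imc,v_0}\in\Theta(v_0)$ is a pair without complete match.

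For the ``if'' direction I take a pair $(\Mmc,M)\in\Theta(v_0)$ without complete match and, by the invariant, a model $\Imc$ of $\Dmc_0$ and $\Omc_q$ with $\gamma_{\Imc,v_0}=(\Mmc,M)$, so $\Imc|_{\mn{adom}(\Dmc_0)}\not\models\widehat q$. The target is $\Imc\not\models q$, but this implication of Lemma~\ref{lem:qhatworksgf} is available only for tree-extended models, whereas the invariant of Lemma~\ref{lem:twalgointended} ranges over arbitrary models. I would bridge the gap by replacing $\Imc$ with a tree-extended model $\Imc'$ of $\Dmc_0$ and $\Omc_q$ obtained by a type-preserving unraveling of the part of $\Imc$ outside $\mn{adom}(\Dmc_0)$, exactly as in the proof of Lemma~\ref{lem:treeextended2}. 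Since $M_{\Imc,v_0}$ depends only on $\Imc|_{\mn{adom}(\Dmc_0)}$ (every homomorphism recorded in it has range inside $\mn{adom}(\Dmc_0)$) and the unraveling preserves all $1$- and $2$-types on $\mn{adom}(\Dmc_0)$, we get $\Imc'|_{\mn{adom}(\Dmc_0)}=\Imc|_{\mn{adom}(\Dmc_0)}$ and hence $\Imc'|_{\mn{adom}(\Dmc_0)}\not\models\widehat q$. As $\Imc'$ is tree-extended, Lemma~\ref{lem:qhatworksgf} gives $\Imc'\not\models q$; and since $\Imc'$ is also a model of $\Dmc_0$ and $\Omc\subseteq\Omc_q$, this witnesses $\Dmc_0\not\models Q$.

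The main obstacle is precisely this last reconciliation: the algorithm's completeness is phrased over \emph{all} models (Lemma~\ref{lem:twalgointended}), while the correctness of the rewriting $\widehat q$ is guaranteed only on tree-extended ones. The delicate point in the unraveling argument is that a binary abbreviation $A_\vp(x,y)$ records the $2$-type of a pair of constants in $\mn{adom}(\Dmc_0)$, which may a priori depend on neighbours shared outside $\mn{adom}(\Dmc_0)$; here I would invoke the guarded-bisimulation invariance of GF$_2$ to argue that a guarded unraveling leaves all these $2$-types, and thus the entire restriction $\Imc|_{\mn{adom}(\Dmc_0)}$ together with the set $M$, unchanged. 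Everything else is routine bookkeeping.
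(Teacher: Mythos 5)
Your proof is correct, and in the backward direction it is actually more careful than the paper's own argument. The forward direction is the same as the paper's: take a countermodel, make it tree-extended via Lemma~\ref{lem:treeextended2}, expand it to the fresh symbols of $\Omc_q$, and combine Lemma~\ref{lem:qhatworksgf} with the invariant of Lemma~\ref{lem:twalgointended}. For the backward direction, however, the paper takes the model $\Imc$ of $\Dmc_0$ and $\Omc_q$ supplied by Lemma~\ref{lem:twalgointended} and simply asserts that $\Imc \models q$ would yield a complete partial match in $M_{\Imc,v_0}$, i.e.\ it applies the transfer from $q$ to $\widehat q$ to a model that need not be tree-extended. Read literally, that step fails: if $q$ is the triangle $\{r(x_1,x_2),r(x_2,x_3),r(x_3,x_1)\}$ and $\Imc$ consists of the database part plus a disjoint triangle of anonymous elements (and no self-loops anywhere), then $\Imc \models q$, yet no disjunct of $\widehat q$ maps into $\Imc|_{\mn{adom}(\Dmc_0)}$ --- every disjunct needs either role atoms realized among the constants or a unary fact $A_{p'}$ for a treewidth-$1$ CQ $p'$ obtained from the triangle, and every treewidth-$1$ contraction of the triangle contains a self-loop, which $\Imc$ does not realize. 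So $M_{\Imc,v_0}$ has no complete match even though $\Imc \models q$. Your unraveling step (or, alternatively, restating Lemma~\ref{lem:twalgointended} over tree-extended models only, mirroring Section~\ref{sect:alciucq}) is exactly the repair that is needed; the conclusion of the lemma is still true, but only via a model other than the one the invariant hands you.

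One small correction to your bridging step: the claim $\Imc'|_{\mn{adom}(\Dmc_0)} = \Imc|_{\mn{adom}(\Dmc_0)}$ cannot hold in general, because an arbitrary model $\Imc$ may contain role edges between constants that are \emph{not} connected in $\Dmc_0$, and tree-extendedness (identity of the Gaifman graphs) forces the unraveling to drop them; guarded-bisimulation invariance preserves the $2$-types of pairs guarded in $\Dmc_0$, not of all pairs of constants. But the weaker statement you need is immediate: the unraveled $\Imc'$ has the same domain and the same unary facts on $\mn{adom}(\Dmc_0)$ and a subset of the binary facts, so every homomorphism of a disjunct of $\widehat q$ into $\Imc'|_{\mn{adom}(\Dmc_0)}$ is also one into $\Imc|_{\mn{adom}(\Dmc_0)}$, and hence $\Imc|_{\mn{adom}(\Dmc_0)} \not\models \widehat q$ implies $\Imc'|_{\mn{adom}(\Dmc_0)} \not\models \widehat q$ by monotonicity of UCQs. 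With that one-line adjustment your argument is sound.
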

\begin{proof}
First assume that $\Omc, \Dmc_0 \not \models q$.
Then there is a model $\Imc$ of $\Dmc_0$ and $\Omc$
such that $\Imc \not \models q$.
By Lemma~\ref{lem:treeextended2}, we can assume
that $\Imc$ is tree-extended.
We can extend $\Imc$ to a model of $\Omc_q$ in a straightforward way,
let this model also be called $\Imc$.
By Lemma~\ref{lem:qhatworksgf}, we have
$\Imc|_{\mn{adom}(\Dmc_0)} \not \models \widehat q$, so there is no homomorphism $h$
from any disjunct $p$ of $\widehat q$ to $\Imc|_{\mn{adom}(\Dmc_0)}$.
Lemma~\ref{lem:twalgointended} yields $(\Mmc_{\Imc, v_0}, M_{\Imc, v_0}) \in \Theta(v_0)$,
where $M_{\Imc, v_0}$ contains no complete partial match.
For the other direction, assume that $\Theta(v_0)$ contains a pair $(\Mmc, M)$ such that
$M$ contains no pair $(p, m)$ with $m$ a complete partial match.
By the claim, there is a model $\Imc$ of $\Dmc_0$ and $\Omc_q$ such that
$(\Mmc, M) = (\Mmc_{\Imc, v_0}, M_{\Imc, v_0})$.
If $\Imc \models q$, this would yield a pair $(p, m) \in M_{\Imc, v_0}$ with $m$ a complete partial match,
and thus, $\Imc \not \models q$, which implies $\Dmc_0, \Omc \not \models q$.
\end{proof}

\thmtwone*

\noindent 
\begin{proof}
Correctness of the given algorithm follows from Lemma~\ref{lem:tw-algo-correctness}.
It takes time $2^{O(k)} \cdot |\Dmc|$ to compute a tree-decomposition
of width $O(k)$.
The number of diagrams for a database of size $k$ is $2^{O(|\Omc|\cdot k^2)}$.
By the note after Lemma~\ref{lem:qhatworksgf}, $\widehat q$ contains only one disjunct.
There are at most $k+2$ many different partial matches, so there are
only $2^{k+2}$ many different sets $M$ of partial matches.
Thus, each $\Theta(v)$ contains at most
$2^{O(|\Omc|\cdot k^2)} \cdot 2^{k+2} = 2^{O(|\Omc|\cdot k)}$ many pairs.
This yields an overall running time of $2^{O(|\Omc|\cdot k^2)} \cdot |\Dmc|$,
which is $2^{O(|Q|\cdot k^2)} \cdot |\Dmc|$.
\end{proof}

\thmtwtwo*

\noindent 
\begin{proof}
  This proof is similar to the proof of Theorem~\ref{thmtwone}.  The
  only difference is that we have a UCQ instead of an AQ.  This yields
  at most $2^{O(|q| \log(|q|))}$ disjuncts in $\widehat q$, each of
  size at most $|q|$, and for every disjunct there are at most
  $(k+2)^{|q|}$ many possible partial matches into a database of size
  $k$.  Thus the number of possible sets $M$ of partial
  matches can be upper bounded by
  $2^{|\Omc| \cdot k^{O(|q| \log(|q|))}}$. 
  The number
  of diagrams for a database of size $k$ is still
  $2^{O(|\Omc|\cdot k^2)}$.  Thus, $\Theta(v)$ contains at most
$2^{|\Omc| \cdot k^{O(|q| \log(|q|))}}$
pairs.  This yields an overall running time of
$2^{|\Omc| \cdot k^{O(|q| \log(|q|))}} \cdot 2^{O(|q| \log(|q|))}$.
\end{proof}

\subsection{\ALCF with AQs}

A basic extension of \ALC with counting capabilities is the extension
\ALCF of \ALC with \emph{functional roles}, basic in comparison to
other extensions with counting such as (qualified) number
restrictions.  However, we prove that already evaluating OMQs from
$(\ALCF, \text{AQ})$ on databases of bounded treewidth is \coNP-hard,
provided that the unique name assumption (UNA) is not made. If it is made,
then the constructions provided here do not work and we conjecture
that all the results obtained for \ALC in the main body of the paper
still hold.

Let us first make precise what we mean with the UNA not being made.
In an interpretation \Imc, the interpretation function $\cdot^\Imc$
now additionally assigns an element $c^\Imc \in \Delta^\Imc$ to each
constant $a$. A database fact $A(c)$ is satisfied if
$c^\Imc \in A^\Imc$ and a fact $r(c,c')$ is satisfied if
$(c,c') \in r^\Imc$. Note that it is possible that
$c^\Imc = {c'}^\Imc$ despite $c \neq c'$, and it is in this sense that
names are not unique. The UNA may or may not be made in description
logic, both settings have received significant attention. Also note
that not making the UNA is the standard way to interpret constants
in classical first-order logic.

An \ALCF ontology is an \ALC ontology that can also contain
\emph{functionality assertions}, that is, assertions of the form
$\mn{func}(r)$ with $r$ a role name. An interpretation \Imc satisfies
$\mn{func}(r)$ if $r^\Imc$ is a partial function.

\thmalcf*

For the proof, we build on an observation due to Figueira. For
$n \geq 0$, let $\Dmc_{n \times n}$ denote the $n \times n$ \emph{grid
  database}, defined in the obvious way with role name $r_x$ used for
horizontal edges and $r_y$ for vertical edges. It is noted in
\cite{figueira2016semantically} that for one can find a database
$\widehat\Dmc_{n \times n}$ of treewidth~2 such that the models of
$\widehat\Dmc_{n \times n}$ and $\{\mn{func}(r_x)\}$ are precisely the
models of $\Dmc_{n \times n}$ and the size of $\Dmc_n$ is polynomial
in $n$.  We display the database $\widehat\Dmc_{n \times n}$ in
Figure~\ref{f:grid_Abox_to_constant_treewidth_general}. Note that,
when the UNA is made, that database is simply unsatisfiable w.r.t.\
$\{\mn{func}(r_x)\}$.

\tikzmath{\dist=0.1;}

\begin{figure}[ht]
  \begin{tikzpicture}[->,shorten >=0.5pt, >=latex,
    el/.style = {auto, scale=0.6, inner sep=0.7pt},
    every node/.style={inner sep=2pt}]
    \node [circle, draw] (1) {};
    \node [circle, draw] [right = of 1] (2) {};
    \node [circle, draw] [right = of 2] (40) {};
    \node [draw=none] [right = of 40] (3) {$\ldots\,$};
    \node [circle, draw] [right = of 3] (4) {};
    \node [circle, draw] [above = of 1] (5) {};
    \node [draw=none] at (5 -| 2) (h1) {}; 
    \node [circle, draw] [below = \dist em of h1] (6) {};
    \node [circle, draw] [right = of 6] (41) {};
    \node [draw=none] [right = of 41] (7) {$\ldots\,$};
    \node [circle, draw] [right = of 7] (8) {};
    \node [circle, draw] [above = \dist em of h1] (9) {};
    \node [circle, draw] [right = of 9] (42) {};
    \node [draw=none] [right = of 42] (10) {$\ldots\,$};
    \node [circle, draw] [right = of 10] (11) {};
    \node [circle, draw] [above = of 5] (12) {};
    \node [draw=none] at (12 -| h1) (h2) {}; 
    \node [circle, draw] [below = \dist em of h2] (13) {};
    \node [circle, draw] [right = of 13] (43) {};
    \node [draw=none] [right = of 43] (14) {$\ldots\,$};
    \node [circle, draw] [right = of 14] (15) {};
    \node [circle, draw] [above = of 12] (16) {};
    \node [draw=none] at (16 -| h1) (h3) {}; 
    \node [circle, draw] [above = \dist em of h3] (17) {};
    \node [circle, draw] [right = of 17] (44) {};
    \node [draw=none] [right = of 44] (18) {$\ldots\,$};
    \node [circle, draw] [right = of 18] (19) {};
    \node [circle, draw] [above = of 16] (20) {};
    \node [draw=none] at (20 -| h1) (h4) {}; 
    \node [circle, draw] [below = \dist em of h4] (21) {};
    \node [circle, draw] [right = of 21] (45) {};
    \node [draw=none] [right = of 45] (22) {$\ldots\,$};
    \node [circle, draw] [right = of 22] (23) {};
    \node [circle, draw] [above = \dist em of h4] (24) {};
    \node [circle, draw] [right = of 24] (46) {};
    \node [draw=none] [right = of 46] (25) {$\ldots\,$};
    \node [circle, draw] [right = of 25] (26) {};
    \node [circle, draw] [above = of 20] (27) {};
    \node [circle, draw] [right = of 27] (28) {};
    \node [circle, draw] [right = of 28] (47) {};
    \node [draw=none] [right = of 47] (29) {$\ldots\,$};
    \node [circle, draw] [right = of 29] (30) {};
    \draw [-, decorate, decoration = {brace,amplitude=10pt}] (-0.3,0) --  (-0.3,6) 
    node [midway, xshift=-0.55cm, outer sep=10pt,font=\footnotesize] {$n$};
    \draw [-, decorate, decoration = {brace,mirror,amplitude=10pt}] (0,-0.3) --  (5.3,-0.3) 
    node [midway, yshift=-0.55cm, outer sep=10pt,font=\footnotesize] {$n$};

    \path
      (1) edge node [el] {$r_x$}(2)
      (2) edge node [el] {$r_x$}(40)
      (40) edge node [el] {$r_x$}(3)
      (3) edge node [el] {$r_x$}(4)
      (5) edge node [el] {$r_x$}(6)
      (6) edge node [el] {$r_x$}(41)
      (41) edge node [el] {$r_x$}(7)
      (7) edge node [el] {$r_x$}(8)
      (5) edge node [el] {$r_x$}(9)
      (9) edge node [el] {$r_x$}(42)
      (42) edge node [el] {$r_x$}(10)
      (10) edge node [el] {$r_x$}(11)
      (12) edge node [el] {$r_x$}(13)
      (13) edge node [el] {$r_x$}(43)
      (43) edge node [el] {$r_x$}(14)
      (14) edge node [el] {$r_x$}(15)
      (16) edge node [el] {$r_x$}(17)
      (17) edge node [el] {$r_x$}(44)
      (44) edge node [el] {$r_x$}(18)
      (18) edge node [el] {$r_x$}(19)
      (20) edge node [el] {$r_x$}(21)
      (21) edge node [el] {$r_x$}(45)
      (45) edge node [el] {$r_x$}(22)
      (22) edge node [el] {$r_x$}(23)
      (20) edge node [el] {$r_x$}(24)
      (24) edge node [el] {$r_x$}(46)
      (46) edge node [el] {$r_x$}(25)
      (25) edge node [el] {$r_x$}(26)
      (27) edge node [el] {$r_x$}(28)
      (28) edge node [el] {$r_x$}(47)
      (47) edge node [el] {$r_x$}(29)
      (29) edge node [el] {$r_x$}(30)
    ;
    \path
      (1) edge node [el] {$r_y$}(5)
      (2) edge node [el] {$r_y$}(6)
      (4) edge node [el] {$r_y$}(8)
      (5) edge node [el] {$r_y$}(12)
      (9) edge node [el] {$r_y$}(13)
      (11) edge node [el] {$r_y$}(15)
      (12) -- node[pos=0.6] {$\vdots$}(16)
      (16) edge node [el] {$r_y$}(20)
      (17) edge node [el] {$r_y$}(21)
      (19) edge node [el] {$r_y$}(23)
      (20) edge node [el] {$r_y$}(27)
      (24) edge node [el] {$r_y$}(28)
      (26) edge node [el] {$r_y$}(30)
      (40) edge node [el] {$r_y$}(41)
      (42) edge node [el] {$r_y$}(43)
      (44) edge node [el] {$r_y$}(45)
      (46) edge node [el] {$r_y$}(47)
    ;
  \end{tikzpicture}
  \caption{The database $\widehat\Dmc_{n \times n}$}
  \label{f:grid_Abox_to_constant_treewidth_general}
\end{figure}

We prove Theorem~\ref{thm:alcf} by reduction from the word problem of
non-deterministic polynomially time-bounded Turing machines.  Such a
machine is a tuple
$M = (Q, \Sigma, \Gamma, \Delta, q_0, \blank, q_a,q_r)$ whose
components have the usual meaning; in particular, $\blank \in \Gamma
\setminus \Sigma$ is the blank symbol. We only mention that we work with
one-side infinite tapes, the transition relation $\Delta$ has the form
$\Delta \subseteq Q \times \Gamma \times Q \times \Gamma \times \{
L,R\}$, and $q_a$ and $q_r$ are the accepting and rejecting states
after which no further transitions are possible. We assume w.l.o.g.\
that $M$ never attempts to move left on the left-most tape cell.

Let $M$ be a $\mn{poly}$-time bounded NTM that solves an \NPclass-hard
problem. We provide a polynomial time reduction of
the word problem for $M$ to the complement of OMQ evaluation in
$(\ALCF,\text{AQ})$ on databases of treewidth~$2$.  

Let the input to $M$ be a word $w = a_1 \cdots a_n \in \Sigma^*$ and
consider the grid database $\Dmc_{m \times m}$ with $m=\mn{poly}(n)$. It is
not difficult to simulate the computation of $M$ on $w$ using the
evaluation of an OMQ from $(\ALCF,\text{AQ})$ on $\Dmc_{m \times m}$.
Moreover, due to the properties of $\widehat\Dmc_{m \times m}$
summarized above, in the reduction we may use the treewidth 2 database
$\widehat\Dmc_{m \times m}$ in place of $\Dmc_{m \times m}$. In the
following, we spell out the details.

We construct a database $\Dmc_{w}$ by starting with
$\widehat \Dmc_{m \times m}$. Let us assume that the constants in the
  first row are $c_1, \dots, c_{m}$.  We represent the initial
  configuration of $M$ by adding the following facts:
  \begin{itemize}
    \item $A_{q_0}(c_1)$ for the initial state $q_0$;
    \item $A_{a_1}(c_1), \dots, A_{a_n}(c_n), A_{\blank}(c_{n+1}), \dots, A_{\blank}(c_{\mn{poly}(|w|)})$ for the initial content of the tape
  \end{itemize}
  Next, we define an  ontology $\Omc_{M}$. It contains func$(r_x)$
  as well as the following CIs:
  \begin{enumerate}
    \item Put marker when choosing transition:
      $$
      A_q \sqcap A_a \sqsubseteq \forall r_y.B_{q_1, b_1, M_1} \sqcup
      \dots \sqcup B_{q_k, b_k, M_k}
      $$
      for all $q \in Q$ and $a \in \Gamma$ such that the tuples in
      $\Delta$ of the form $(q,a,\cdot,\cdot,\cdot)$ have last
      components $(q_1, b_1, M_1), \dots, (q_k, b_k, M_k)$

    \item Convert state transition markers to configuration:
      $$
      B_{q, b, M} \sqsubseteq A_b, \quad
      \forall r_x.B_{q, b, L} \sqsubseteq A_q, \quad
      B_{q, b, R} \sqsubseteq \forall r_x.A_q
    $$
     for all $q \in Q$, $b \in \Gamma$, and $M \in \{L, R\}$

    \item Check if rejecting state was reached and pass back that information:
    $$
      A_{q_r} \sqsubseteq F, \quad \exists r_x.F \sqsubseteq F, \quad \exists r_y.F \sqsubseteq F
    $$
  \item Mark cells that are not under the head:
    $$
        A_q \sqsubseteq \forall r_x.H_{\shortleftarrow} \qquad
        \forall r_x.A_q \sqsubseteq H_{\shortrightarrow}
    $$
for all $q \in Q$, as well as 
$$
      H_{\shortleftarrow} \sqsubseteq \forall r_x.H_{\shortleftarrow}, 
      \quad \forall r_x.H_{\shortrightarrow} \sqsubseteq H_{\shortrightarrow},
      \quad H_\shortleftarrow \sqcup H_\shortrightarrow \sqsubseteq N
$$
    \item Cells not under the head do not change:
    $$
      N \sqcap A_a \sqsubseteq \forall r_y.A_a
      \text{ for all $a \in \Gamma$}
    $$
    \item State, content of tape and head position are unique:
      $$
        A_q \sqcap A_{q'} \sqsubseteq \bot,
        \quad A_a \sqcap A_{a'} \sqsubseteq \bot,
        \quad N \sqcap A_q \sqsubseteq \bot
        $$
      for all $q, q' \in Q$ and $a,a' \in \Gamma$ with $q \neq q'$ and $a \neq a'$.
  \end{enumerate}
  As the AQ, we use $F(x)$, and thus our OMQ is $Q_M=(\Omc_m,F(x))$.
It is straightforward to show the following.
  \begin{lemma}
    $\Dmc_{w} \models Q_M(c_1)$ if and only if $M$ rejects $w$.
  \end{lemma}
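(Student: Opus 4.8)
The plan is to show that the models of $\Omc_M$ together with $\Dmc_w$ correspond to computations of $M$ on $w$, with the concept name $F$ recording whether the rejecting state $q_r$ is ever reached. By the observation of Figueira recalled above, the models of $\widehat\Dmc_{m \times m}$ and $\{\mn{func}(r_x)\}$ are exactly the models of the genuine grid database $\Dmc_{m \times m}$ (this is where dropping the UNA is essential), so I may reason as if the tableau were an $m \times m$ grid with $r_x$ the horizontal and $r_y$ the vertical successor relation, for $m = \mn{poly}(|w|)$ chosen to exceed the polynomial time bound of $M$. Row $0$ (containing $c_1$) is read as the initial configuration fixed by the added facts, and each further row as one computation step.

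The heart of the argument is an invariant, proved by induction on the row index $t$: in every model \Imc of $\Omc_M$ and $\Dmc_w$, row $t$ faithfully encodes a configuration $C_t$ of some run $C_0 \to C_1 \to \cdots$ of $M$ on $w$, up to the first halting row. The inductive step combines the CIs as follows. CI~1 forces the cell directly above the head to carry a transition marker $B_{q,b,M}$ for some \emph{valid} transition; CI~2 then writes the new symbol $b$ there and places the new state $q$ one cell to the left or right, as dictated by the move direction and using $\mn{func}(r_x)$ to pin down the unique $r_x$-neighbour. CI~4 marks every non-head cell with $H_{\shortleftarrow}$ or $H_{\shortrightarrow}$, hence with $N$, and CI~6 ($N \sqcap A_q \sqsubseteq \bot$) then rules out a second head, so the head is unique per row; CI~5 is the frame condition keeping cells away from the head unchanged, and the remaining inclusions of CI~6 force each cell to carry at most one symbol and one state, so the encoded configuration is well defined even with redundant markings. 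I expect the main obstacle to lie exactly in this completeness step—arguing that \emph{every} model, not merely the intended one, is forced to simulate $M$ correctly—and in particular in handling the frame conditions and the grid boundary (for instance, ensuring that the vacuous satisfaction of $\forall r_x$-premises at the rightmost column does not spuriously propagate $H_{\shortrightarrow}$ into the head cell). This is the point that must be argued with care.

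Granting the invariant, the two directions are short. If $M$ rejects $w$, every run reaches $q_r$, so in any model the invariant produces a cell in $A_{q_r}$ at some row $t^* < m$; by CI~3 that cell lies in $F$, and the inclusions $\exists r_x.F \sqsubseteq F$ and $\exists r_y.F \sqsubseteq F$ propagate $F$ backwards along the grid to $c_1$, which reaches every cell by right/up paths. Hence $c_1 \in F^\Imc$ in every model, i.e.\ $\Dmc_w \models Q_M(c_1)$.

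Conversely, if $M$ does not reject $w$, I would fix an accepting run $C_0 \to \cdots \to C_T$ with $T < m$ ending in $q_a$ and build a single grid model \Imc by marking rows $0,\dots,T$ according to $C_0,\dots,C_T$—placing head states, tape symbols, transition markers, and the $H_{\shortleftarrow}/H_{\shortrightarrow}/N$ markers as prescribed—and extending to the rows above $T$ in the obvious consistent way, taking care never to introduce $A_{q_r}$ anywhere. A routine check shows that \Imc satisfies every CI of $\Omc_M$ together with $\mn{func}(r_x)$ and contains $\Dmc_w$; since no cell lies in $A_{q_r}$, no cell, and in particular not $c_1$, lies in $F$, so $\Dmc_w \not\models Q_M(c_1)$. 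Combining the two directions yields the equivalence, and since $\Dmc_w$ has treewidth~$2$ this establishes the \coNP-hardness claimed in Theorem~\ref{thm:alcf}.
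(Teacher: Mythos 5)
Your overall route is the intended one (the paper itself gives no proof of this lemma, dismissing it as ``straightforward''), but there is a genuine gap, and it sits exactly at the point you flagged and then deferred: the right boundary of the grid. The issue is not a formality --- it invalidates your converse direction as written. For ``$M$ does not reject $w$'' you build ``a single grid model'' and claim a routine check shows it satisfies every CI of $\Omc_M$. It does not. A rightmost-column cell $d$ has no $r_x$-successor in the grid, so $d$ belongs \emph{vacuously} to $(\forall r_x.A_q)^\Imc$, and the CI $\forall r_x.A_q \sqsubseteq H_{\shortrightarrow}$ forces $d \in H_{\shortrightarrow}^\Imc$. Since $r_x$ is functional and every non-rightmost cell's unique $r_x$-successor is the next cell of its row, the inclusion $\forall r_x.H_{\shortrightarrow} \sqsubseteq H_{\shortrightarrow}$ then propagates $H_{\shortrightarrow}$ leftwards through the entire row, so every cell --- including the head cell --- lands in $N$, violating $N \sqcap A_q \sqsubseteq \bot$ in any row that carries a state. (For the same reason, $\forall r_x.B_{q,b,L} \sqsubseteq A_q$ puts spurious states on rightmost cells.) So the padding-free grid interpretation is not a model of $\Omc_M$ at all, and the ``routine check'' fails precisely at the point you said ``must be argued with care'' but never argued.

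The repair is to build the model with boundary padding: give every rightmost-column cell a fresh $r_x$-successor $e$ (compatible with $\mn{func}(r_x)$, since such cells have no other successor), give $e$ an $r_x$-self-loop, and put $e$ in none of the concepts $A_q$, $B_{q,b,M}$, $H_{\shortrightarrow}$, $F$. The self-loop is essential: if $e$ had no successor, it would itself be vacuously forced into $H_{\shortrightarrow}$ and the cascade would restart one step further out. With this padding, all $\forall r_x$-premises at the boundary are falsified and your check goes through; since $e$ avoids $F$ and $A_{q_r}$, the model still witnesses $\Dmc_{w} \not\models Q_M(c_1)$. Symmetrically, in the soundness direction your row-by-row invariant must be stated for \emph{arbitrary} models, all of which necessarily contain such padding (exactly because the padding-free grid is inconsistent with $\Omc_M$); this is harmless, as the derivations forcing head uniqueness, the frame conditions, and the transition simulation use only the grid edges and functionality of $r_x$, and the extra elements cannot block a forced derivation. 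The rest of your argument is fine: backward propagation of $F$ via $\exists r_x.F \sqsubseteq F$ and $\exists r_y.F \sqsubseteq F$ to $c_1$, and choosing $m$ to exceed the time bound so that runs halt inside the grid and the head never reaches the right edge while transitions remain available. It is worth noting that the whole complication disappears if one reads the problematic CIs with $\exists r_x$ in place of $\forall r_x$ (equivalent, under functionality, on elements that have an $r_x$-successor); as literally written, however, the ontology forces the padding, and a correct proof has to account for it.
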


\end{document}